\documentclass[12pt]{article}

\usepackage{enumitem}
\usepackage{comment}
\usepackage{amsthm}
\usepackage{amsmath}
\usepackage{amssymb}
\usepackage{amsfonts}
\usepackage{graphicx}
\usepackage{mathrsfs}
\usepackage{bm}

\newtheorem{theorem}{Theorem}[section]
\newtheorem{lemma}[theorem]{Lemma}

\newtheorem{openproblem}[theorem]{Open Problem}
\newenvironment{definition}[1][Definition]{\begin{trivlist}
\item[\hskip \labelsep {\bfseries #1}]}{\end{trivlist}}

\newcommand{\Comp}[1]{\ensuremath{\overline{#1}}}
\newcommand{\Rot}[2]{\ensuremath{t_{#1}(#2)}} %^{\scriptscriptstyle \rho_#1}}}
 
\newcommand{\Rev}[1]{\ensuremath{\widetilde{#1}}}
\newcommand{\Back}[1]{\ensuremath{\widehat{#1}}}
\newcommand{\Refl}[2]{\ensuremath{f_{#1}(#2)}} %#2^{\scriptscriptstyle \Theta,#1}}}
\newcommand{\ORefl}[1]{\ensuremath{f_#1}}

\newcommand{\Up}{\textbf{u}}
\newcommand{\Right}{\textbf{r}}
\newcommand{\Down}{\textbf{d}}
\newcommand{\Left}{\textbf{l}}

\newcommand{\Factor}{\ensuremath{\preceq}}
\newcommand{\Prefix}{\ensuremath{\preceq_{\rm{pre}}}}
\newcommand{\Suffix}{\ensuremath{\preceq_{\rm{suff}}}}
\newcommand{\Affix}{\ensuremath{\preceq_{\rm{aff}}}}
\newcommand{\Middle}{\ensuremath{\preceq_{\rm{mid}}}}

\newcommand{\Mirror}{\ensuremath{\preceq_{\rm{mir}}}}

\newcommand{\Bou}[1]{\ensuremath{\mathcal{B}(#1)}}
\newcommand{\D}[1]{\ensuremath{#1^{\circ}}}

\newcommand{\ccNP}{\textrm{\textsc{NP}}}

\author{Stefan Langerman\footnote{Directeur de recherches du F.R.S.-FNRS.}\thanks{D\'{e}partement d'Informatique, Universit\'{e} libre de Bruxelles, ULB CP212, boulevard du Triomphe, 1050 Bruxelles, Belgium, \protect{\texttt{\{stefan.langerman,andrew.winslow\}@ulb.ac.be}}} \and Andrew Winslow \footnotemark[2]}
\title{A Quasilinear-Time Algorithm\\
for Tiling the Plane Isohedrally\\
with a Polyomino}
\date{}

\begin{document}

\maketitle

\begin{abstract}
A plane tiling consisting of congruent copies of a shape is \emph{isohedral} provided that for any pair of copies, there exists a symmetry of the tiling mapping one copy to the other.
We give a $O(n\log^2{n})$-time algorithm for deciding if a polyomino with~$n$ edges can tile the plane isohedrally.
This improves on the $O(n^{18})$-time algorithm of Keating and Vince and generalizes recent work by Brlek, Proven\c{c}al, F\'{e}dou, and the second author.
\end{abstract}

\section{Introduction}
\label{sec:introduction}

%Like the discovery of penicillin, the study of isohedral tiling began as an accident.
The 18th of Hilbert's 23 famous open problems posed in 1900~\cite{Hilbert-1902} concerned \emph{isohedral} tilings of polyhedra where every pair of copies in the tiling has a symmetry of the tiling that maps one copy to the other (see Figure~\ref{fig:isohedrality-ex}).
Hilbert asked for an example of an \emph{anisohedral} polyhedron that admits a tiling, but no isohedral tilings.

\begin{figure}[ht]
\centering
\includegraphics[scale=1.0]{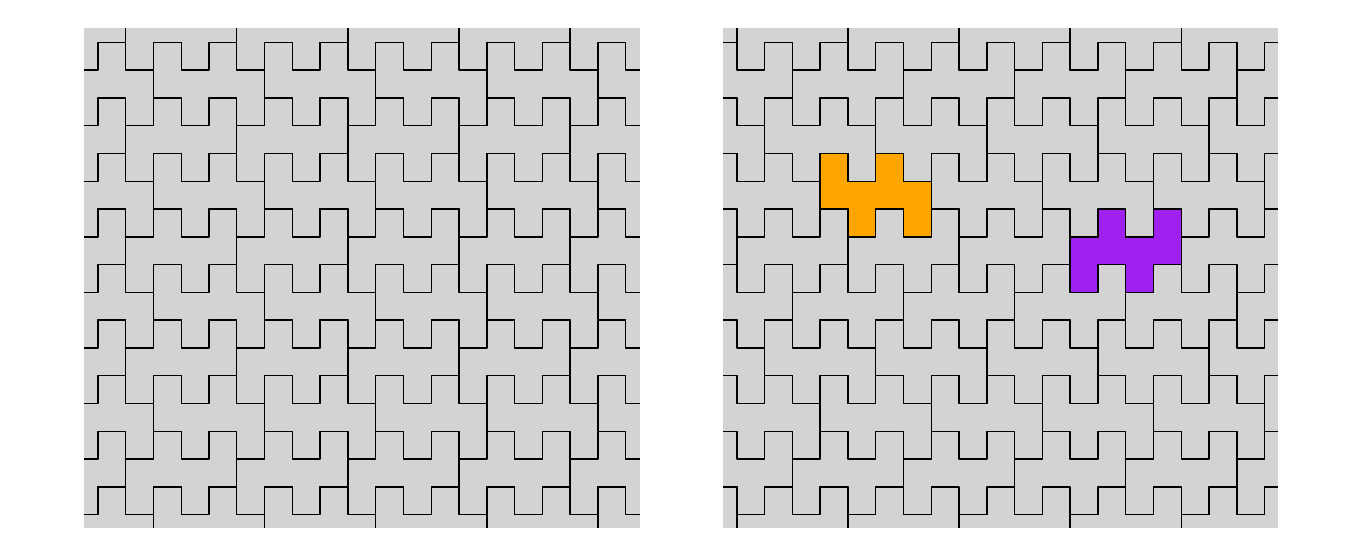}
\caption{Isohedral (left) and anisohedral (right) tilings of a polyomino.
There is no symmetry of the right tiling mapping one colored tile to the other.}
\label{fig:isohedrality-ex}
\end{figure}

Reinhardt~\cite{Reinhardt-1928} was the first to give an example of an anisohedral polyhedron.
Along with this example, Reinhardt also stated that a proof that no anisohedral polygons exist was forthcoming, a claim thought to be supported by Hilbert~\cite{Grunbaum-1978}. 
In fact, Reinhardt (and Hilbert?) were mistaken: no such proof is possible and Heesch provided the first counterexample in 1935~\cite{Heesch-1935} (see Figure~\ref{fig:anisohedrality-ex}).

\begin{figure}[ht]
\centering
\includegraphics[scale=1.0]{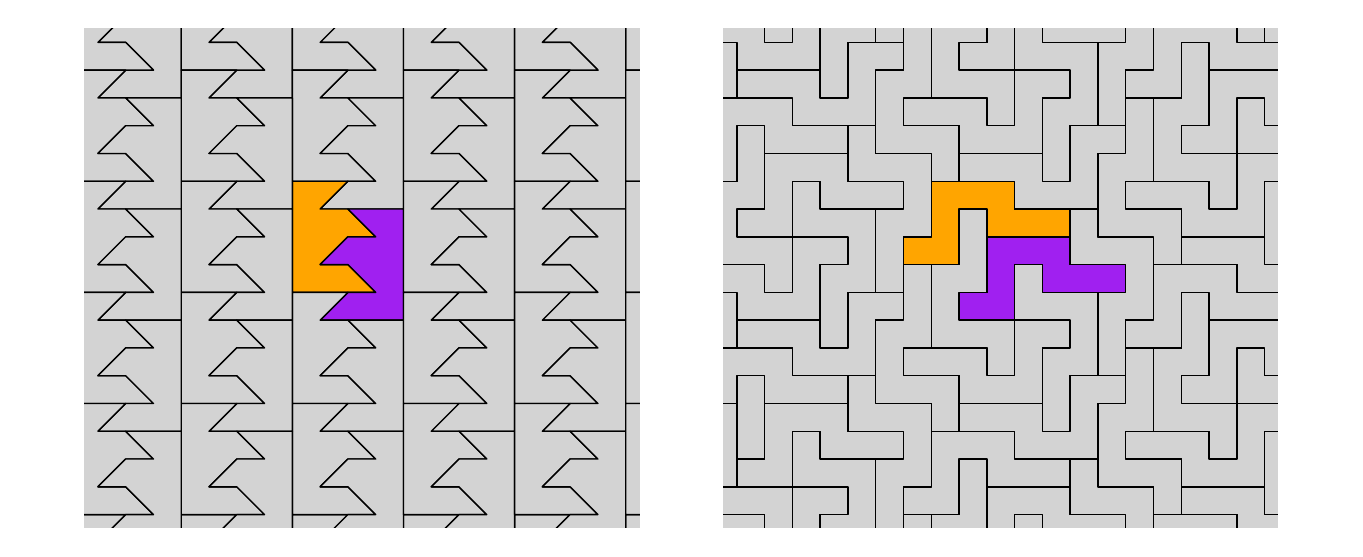}
\caption{The anisohedral polygon of Heesch~\cite{Heesch-1935} and an anisohedral polyomino of Rhoads~\cite{Rhoads-2005}. There is no symmetry of either tiling mapping one colored tile to the other.}
\label{fig:anisohedrality-ex}
\end{figure}

In the 1963, Heesch and Kienzle~\cite{Heesch-1963} provided the first complete classification of isohedral tilings.
This classification was given as nine \emph{boundary criteria}: conditions on a polygon's boundary that, if satisfied, imply an isohedral tiling and together form a necessary condition for isohedral polygons. 
Each boundary criterion describes a factorization of the boundary into a specific number of intervals with given properties, e.g., an interval is rotationally symmetric or two intervals are translations of each other.
Special cases of this classification have been rediscovered since, including the criterion of Beauquier and Nivat~\cite{Beauquier-1991} and Conway's criterion, attributed to John H. Conway by Gardner~\cite{Gardner-1975,Schattschneider-1980} (see Figure~\ref{fig:boundary-criteria}).

%Determining if a boundary word has a given factorization then amounts to a problem of finding a factorization of a word with a specific form. 

\begin{figure}[hb!]
\centering
\includegraphics[scale=1.0]{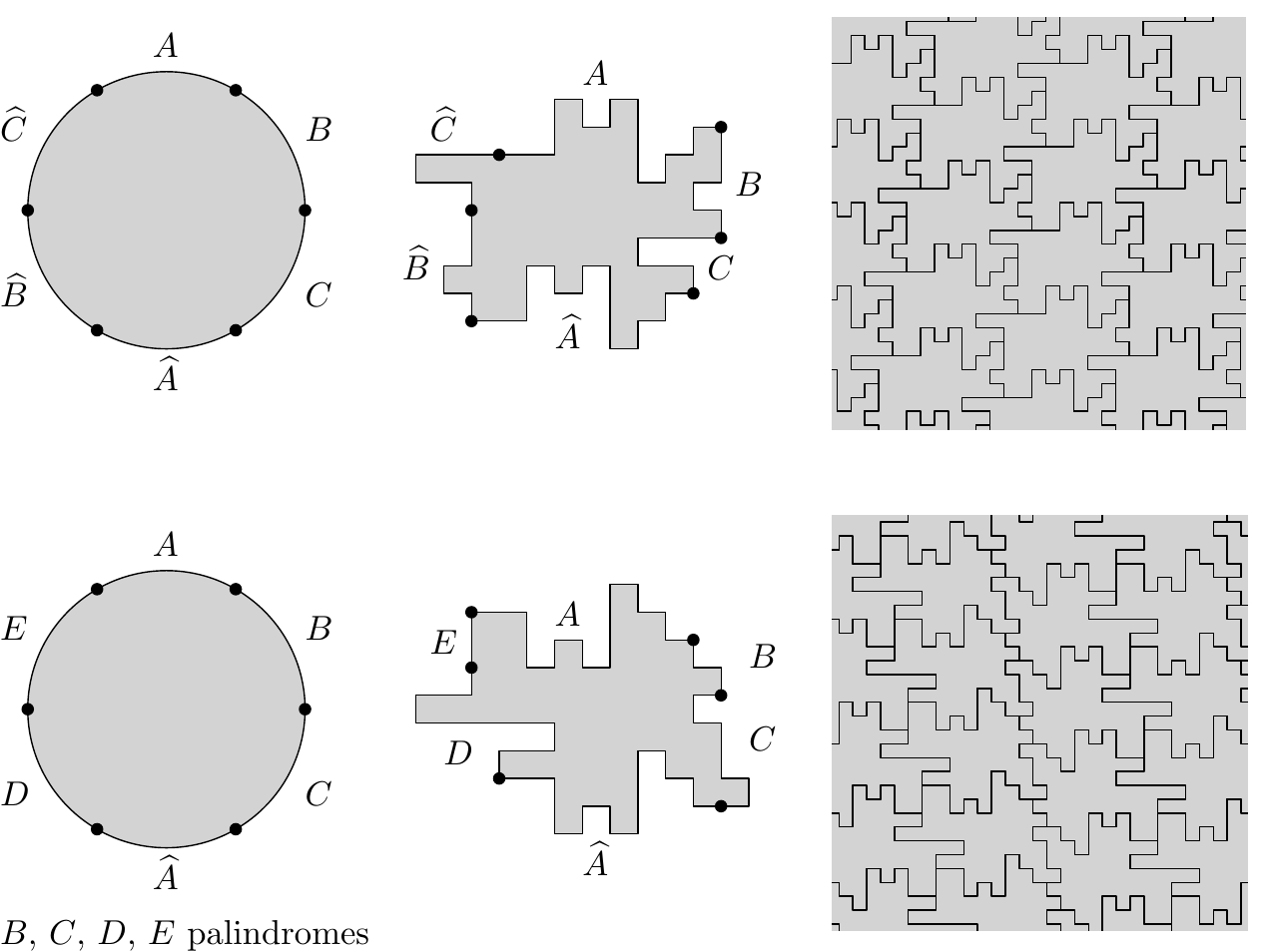}
\caption{Two of seven boundary criteria characterizations of isohedral tilings.
These criteria were given by Beauquier and Nivat~\cite{Beauquier-1991} (top) and John H. Conway~\cite{Gardner-1975} (bottom).
Precise definitions are given in Section~\ref{sec:definitions}.}
\label{fig:boundary-criteria}
\end{figure}

While a complete classification of isohedral tilings exists, many problems in tiling classification and algorithmics remain open. 
For instance, complete classifications of pentagons that tile the plane were claimed as early as 1968~\cite{Kershner-1968}, but additional pentagons have been discovered as recently as 2015~\cite{Mann-2015}.
The existence of an algorithm for deciding if a polyomino tiles the plane is a longstanding open problem~\cite{GoodmanStrauss-2000,GoodmanStrauss-2010}, as is the existence of a polygon that tiles only without symmetry~\cite{Socolar-2011}.

One of the most successful lines of work in tiling algorithmics was initiated by Wijshoff and van Leeuwen~\cite{Wijshoff-1984}, who considered tiling the plane using \emph{translated} copies of a polyomino (isohedrally or otherwise).
They proved that deciding whether a polyomino admits such a tiling is possible in polynomial time.
Their algorithm was subsequently improved by Beauquier and Nivat~\cite{Beauquier-1991}, who gave a simple boundary criterion for polyominoes that admit such a tiling.
Subsequent application of more advanced algorithmic techniques led to a sequence of improved algorithms by Gambini and Vuillon~\cite{Gambini-2007}, Proven\c{c}al~\cite{Provencal-2008}, Brlek, Proven\c{c}al, and F\'{e}dou~\cite{Brlek-2009a}, and the second author~\cite{Winslow-2015}, who gave an optimal $O(n)$-time algorithm, where $n$ is the number of edges on the polyomino's boundary.

The boundary criterion of Beauquier and Nivat matches one of the criteria of Heesch and Kienzle, implying that this problem is a special case of deciding if a polyomino is isohedral.
The general problem of isohedrality was proved decidable in 1999 by Keating and Vince~\cite{Keating-1999}, who gave a matrix-based algorithm running in $O(n^{18})$ time. 
Their algorithm does not make use of boundary criteria, which we note yields a straightforward $O(n^6)$-time algorithm.

Here we give a $O(n\log^2{n})$-time algorithm for deciding if a polyomino is isohedral.
The algorithm uses the original boundary characterization of Heesch and Kienzle~\cite{Heesch-1963} to decompose the problem into seven subproblems, each of recognizing whether a polyomino's boundary admits a factorization with a specific form.
Structural and algorithmic results on a variety of word problems are used, extending the approach of~\cite{Winslow-2015} to factorizations of six additional forms.
The algorithm also finds a witness tiling and is easily extended to other classes of lattice shapes, e.g. polyhexes and polyiamonds.

% Keating and Vince algorithm analysis:
% Algorithm L decides if a polyomino has a lattice tiling:
% Compute the set A(n) of matrices 
% For each A in A(n):
%     If no column of A^{-1}M has two integer entries,
%       Return the columns A_1, A_2 of A;
%       Polyomino admits a tiling by the lattice generated by A_1, A_2
% P admits no lattice tiling.
%
% Algorithm L analysis:
% A(n) is all 2x2 matrices with det(A) = n, up to congruence modulo n. (But these aren't computed anyway...) 
% For each choice of first column, can be up to n choices of second column. So Theta(n^2) matrices.
% Multiplying each with M takes Theta(n^2*D) total time, where D is the number of matrices.

% Algorithm I:
% Compute P_k, the set of arrangements of non-overlapping copies of P. (takes Theta(n^E*n^2) time)
% Compute A_k, a set of matrices.
% For each set in P_k:
%     P' = union of the polyominoes in the arrangement
%     Apply algorithm L to P' with A_k matrices. (takes |A_k|*n^2 time)
%     If algorithm L returns matrix A, return S and two column vectors of A.
% P admits no isohedral tiling

% Case takes O(n^E*n^2*D*n^2) time, where E is number of copies (1, 2, 4) and D is number of matrices. Choices are:
% E = 2, D = n^2
% E = 4, D = n
% So worst case is O(n^9)

\section{Definitions}
\label{sec:definitions}

Although the main result of the paper concerns highly geometric tilings, the proof is entirely described using \emph{words}, also called \emph{strings}.
We use the term ``word'' for consistency with terminology in previous work on tilings of polyominoes.

\textbf{Polyomino and Tiling.}
A \emph{polyomino} is a simply connected polygon whose edges are unit length and parallel to one of two perpendicular lines.
Let $\mathscr{T} = \{T_1, T_2, \dots \}$ be an infinite set of finite simply connected closed sets of $\mathbb{R}^2$.
Provided the elements of $\mathscr{T}$ have pairwise disjoint interiors and cover the Euclidean plane, then $\mathscr{T}$ is a \emph{tiling} and the elements of $\mathscr{T}$ are called \emph{tiles}.
Provided every $T_i \in \mathscr{T}$ is congruent to a common shape $T$, then $\mathscr{T}$ is \emph{monohedral} and $T$ is the \emph{prototile} of $\mathscr{T}$.
In this case, $T$ is said to \emph{have} a tiling.
A monohedral tiling is also \emph{isohedral} provided, for every pair of elements $T_i, T_j \in \mathscr{T}$, there exists a symmetry of $\mathscr{T}$ that maps $T_i$ to $T_j$.
Otherwise the tiling is \emph{anisohedral}.

\textbf{Letter.}
A \emph{letter} is a symbol $x \in \Sigma = \{\Up, \Down, \Left, \Right\}$ representing the directions up, down, left and right.
The \emph{$\D{\Theta}$-ro\underline{t}ation} of a letter $x$, written $\Rot{\Theta}{x}$, is defined as the letter obtained by rotating $x$ counterclockwise by $\D{\Theta}$, e.g., $\Rot{270}{\Up} = \Right$.
A special case of $\D{\Theta}$-rotations is the \emph{complement} of a letter, written $\Comp{x}$ and defined as $\Comp{x} = \Rot{180}{x}$.
The \emph{$\D{\Theta}$-re\underline{f}lection} of a letter $x$, written $\Refl{\Theta}{x}$, is defined as the letter obtained by reflecting $x$ across a line with angle $\D{\Theta}\in \{\D{-45}, \D{0}, \D{45}, \D{90}\}$ counterclockwise from the x-axis, e.g., $\Refl{45}{\Up} = \Right$. 

\textbf{Word and Boundary Word.}
A \emph{word} is a sequence of letters and the \emph{length} of a word $W$, denoted $|W|$, is the number of letters in $W$.
For an integer $i \in \{1, 2, \dots, |W|\}$, $W[i]$ refers to the $i$th letter of $W$ and $W[-i]$ refers to the $i$th from the last letter of $W$.
The notation $W^i$ denotes the repetition of a word $i$ times.
In this work two kinds of words are used: non-circular words and circular words (defining the boundaries of polyominoes).
A word is \emph{non-circular} if it has a first letter, and \emph{circular} otherwise.
For a circular word $W$, an arbitrary but fixed assignment of the letter $W[1]$ may be used, resulting in a non-circular \emph{shift} of $W$.
The \emph{boundary word} of a polyomino $P$, denoted $\Bou{P}$, is the circular word of letters corresponding to the sequence of directions traveled along cell edges during a clockwise traversal of the polyomino's boundary (see Figure~\ref{fig:boundary-word-examples}).

\begin{figure}[ht]
\centering
\includegraphics[scale=1.0]{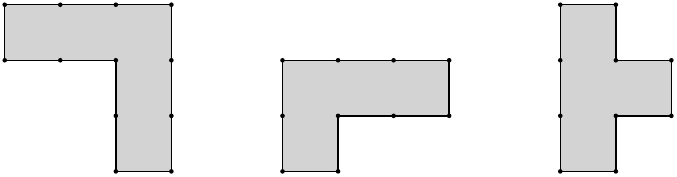}
\caption{Polyominoes with (circular) boundary words $\Up \Right^3 \Down^3 \Left \Up^2 \Left^2$, $\Up^2 \Right^3 \Down \Left^2 \Down \Left$, and $\Up^3 (\Right \Down)^2 \Left \Down \Left$ (from left to right).}
\label{fig:boundary-word-examples}
\end{figure}

\textbf{Rotation, complement, and reflection.}
The rotation (or complement, reflection) of a word $W$, written $\Rot{\Theta}{W}$ (or $\Comp{W}$, $\Refl{\Theta}{W}$), is the word obtained by replacing each letter in $W$ with its rotation (or complement, reflection).
The \emph{reverse} of a word $W$, written $\Rev{W}$, are the letters of $W$ in reverse order.
The \emph{backtrack} of a word $W$ is denoted $\Back{W}$ and defined as $\Back{W} = \Comp{\Rev{W}}$.

\textbf{Factor.}
A \emph{factor of $W$} is a contiguous sequence $X$ of letters in $W$, written $X \Factor W$.
For integers $1 \leq i, j \leq |W|$ with $i \leq j$, $W[i..j]$ denotes the factor of $W$ from $W[i]$ to $W[j]$, inclusive.
A factor $X$ \emph{starts} or \emph{ends} at $W[i]$ if $W[i]$ is the first or last letter of $X$, respectively.
Two factors $X, Y \Factor W$ may refer the same letters of $W$ or merely have the same letters in common.
In the former case, $X$ and $Y$ are \emph{equal}, written $X = Y$, while in the latter, $X$ and $Y$ are \emph{congruent}, written $X \equiv Y$.
For instance, if $W = \Up \Up \Up \Left \Right \Up \Up \Up$ then $W[1..3] \equiv W[6..8]$.
A \emph{factorization} of $W$ is a partition of $W$ into consecutive factors $F_1$ through $F_k$, written $W = F_1 F_2 \dots F_k$.

\textbf{Prefix, suffix, affix, middle, and center.}
A factor $X \Factor W$ is a \emph{prefix} if $X$ starts at $W[1]$, written $X \Prefix W$.
Similarly, $X \Factor W$ is a \emph{suffix} if $X$ ends at $W[-1]$, written $X \Suffix W$.
A factor $X \Factor W$ that is either a prefix or suffix is an \emph{affix}, written $X \Affix W$.
A factor $X \Factor W$ that is not an affix is a \emph{middle}, written $X \Middle W$. 
The factor $X \Factor W$ such that $W = UXV$, $|U|=|V|$, and $|X| \in \{1, 2\}$ is the \emph{center of $W$}.
Similar definitions for words are defined equivalently, e.g., a word is a \emph{prefix} of another word provided it is congruent to a
 prefix factor of that word.

\textbf{Period, composite, and primitive.}
A word \emph{$X$ is a period of $W$} provided $W$ is congruent to a prefix of $X^k$ for some $k \geq 0$ (introduced by~\cite{Knuth-1977}).
Alternatively, $X$ is a prefix of $W$ and $W[i] = W[i+|X|]$ for all $1 \leq i \leq |W|-|X|$.
A word $X$ is \emph{composite} provided there exists a subword $Y$ such that $X = Y^k$ for some $k \geq 2$, and otherwise is \emph{primitive}.

\textbf{$\Theta$-drome, square, and mirror.} 
A word $X$ is a \emph{$\Theta$-drome} provided $X = Y \Rot{\Theta+180}{\Rev{Y}}$, e.g., a \emph{palindrome} is a $180$-drome.
Such a factor is \emph{admissible} provided $W = XU$ with $U[-1] \neq \Rot{\Theta+180}{U}[1]$.\footnote{In this work, several types of factors and pairs of factors have restricted ``admissible'' versions. 
Intuitively, admissible versions are maximal in a natural sense for each type.
For instance, a $\Theta$-drome is admissible if it is the longest $\Theta$-drome with its center.}

A word $X$ is a \emph{square} provided $X = Y^2$ for some $Y$, and a \emph{reflect square} provided $X = Y \Refl{\Theta}{Y}$ for some $\Theta$.
Such a factor is \emph{admissible} provided $W = XU$ with $Y[-1] \neq \Refl{\Theta}{U[-1]}$ and $U[1] \neq \Refl{\Theta}{Y}[1]$. 

A factor $X \Factor W$ is a \emph{mirror}, written $X \Mirror W$, provided $W = XUYV$ with $Y \equiv \Back{X}$ and $|U|=|V|$.
Such a factor is \emph{admissible} provided $U[1] \neq \Comp{U[-1]}$, $V[1] \neq \Comp{V[-1]}$

\textbf{Gapped mirror and (reflect) square.}
A pair of disjoint factors $X, Y \Factor W$ is a \emph{gapped mirror} provided $X \equiv \Back{Y}$.
Such a pair $X, Y$ is \emph{admissible} provided $W = XUYV$ with $U[1] \neq \Comp{U[-1]}$, $V[1] \neq \Comp{V[-1]}$.

A pair of disjoint factors $X, Y \Factor W$ is a \emph{gapped (reflect) square} provided $X \equiv Y$ ($X \equiv \Refl{\Theta}{Y}$).
Such a pair is \emph{admissible} provided $W = XUYV$ with $U[1] \neq V[1]$ ($U[1] \neq \Refl{\Theta}{V[1]}$) and $U[-1] \neq V[-1]$ ($U[-1] \neq \Refl{\Theta}{V[-1]}$).

\section{Proof Overview}
\label{sec:overview}

The remainder of the paper is dedicated to proving the following theorem:

\begin{theorem}
Let $P$ be a polyomino with $|\Bou{P}| = n$.
It can be decided in $O(n\log^2{n})$ time if $P$ has an isohedral tiling (of the plane).
\end{theorem}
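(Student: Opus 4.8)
The plan is to reduce isohedral tileability to the recognition of boundary‑word factorizations, exactly as in the Heesch–Kienzle classification, and then to solve each of the finitely many factorization problems in quasilinear time. First I would recall the Heesch–Kienzle theorem: a polyomino $P$ tiles the plane isohedrally if and only if $\Bou{P}$ admits a factorization of one of the seven surviving forms (the boundary criteria), each expressible as a partition $\Bou{P} = F_1 F_2 \cdots F_k$ with a bounded number $k$ of factors, where the factors are related by the word operations already defined — rotations, complements, reflections, backtracks — or are required to be $\Theta$‑dromes, (reflect) squares, or mirrors, possibly gapped. Because $\Bou{P}$ is circular, ``admits such a factorization'' implicitly quantifies over all $n$ rotations of the word, so the naive check costs a factor of $n$ beyond the cost of checking a fixed starting position; the whole game is to remove that factor.

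The core of the proof is therefore a sequence of lemmas, one per criterion, each asserting: \emph{it can be decided in $O(n \log^2 n)$ time whether the circular word $\Bou{P}$ has a factorization of form $\mathcal{C}_i$.} For each criterion I would proceed in three steps. (1) \emph{Localization}: show that the set of starting positions (or, more generally, the set of cut points) of all admissible factors of the relevant type — admissible $\Theta$‑dromes, admissible squares/reflect squares, admissible mirrors, and admissible gapped variants — has size $O(n)$ and can be enumerated in $O(n\log n)$ or $O(n\log^2 n)$ time. This is where the ``admissible'' restrictions pay off: maximality forces these special factors to be few, so a criterion that would otherwise have $\Theta(n^2)$ candidate factorizations has only $O(n)$. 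The tools are standard stringology on the doubled word $\Bou{P}\Bou{P}$: suffix arrays with longest‑common‑extension queries, the Gusfield–Stoye / Main–Lorentz machinery for runs and squares, Manacher‑style palindrome enumeration generalized to $\Theta$‑dromes and reflect‑analogues via the rotation/reflection letter maps, and Apostolico–Breslauer–Galil‑type results for all maximal palindromic/mirrored factors. (2) \emph{Assembly}: given the $O(n)$ candidate ``special'' factors, show that the remaining factors of a criterion (those related by a rigid motion to an already‑placed factor) are then determined up to $O(n)$ choices, and that consistency of the full cyclic factorization can be verified by LCE queries in $O(\log n)$ or $O(\mathrm{polylog})$ time each; sum to $O(n\log^2 n)$. (3) \emph{Witness}: read off the Heesch–Kienzle adjacency rules from the discovered factorization to output an explicit isohedral tiling, which is immediate once the factorization is in hand. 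Running the seven procedures and returning ``yes'' iff any succeeds proves the theorem, since the time bound is closed under a constant number of such calls.

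I would organize the paper so that Section~\ref{sec:definitions}'s vocabulary of admissible factors is matched, one criterion at a time, to a subsection containing its localization lemma and assembly lemma; the Beauquier–Nivat criterion is the already‑solved case from~\cite{Winslow-2015} and serves as the template, and the six new criteria are handled by analogous but genuinely new structural arguments for $\Theta$‑dromes, reflect squares, mirrors, and their gapped forms.

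The main obstacle I expect is step (1) for the criteria involving \emph{pairs} of related factors with a gap — the admissible gapped mirrors and gapped (reflect) squares. For a single maximal palindrome‑like factor the ``only $O(n)$ of them'' phenomenon is classical, but for a gapped pair $X \equiv \Back{Y}$ (or $X \equiv \Refl{\Theta}{Y}$) one must argue that, although there are $\Theta(n^2)$ equal/reverse‑complement factor pairs in general, the admissibility condition (maximality of the gaps $U,V$ on both sides, encoded by $U[1]\neq\Comp{U[-1]}$ etc.) collapses the relevant pairs to a near‑linear family that a single cyclic scan over suffix‑array/LCE structures can produce; and then that the two gaps of prescribed combined length can be balanced ($|U|=|V|$) consistently around the circular word. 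Making this counting tight, and turning the enumeration into an algorithm that is genuinely $O(n\log^2 n)$ rather than $O(n^2)$, is the crux; everything else is bookkeeping over the seven fixed patterns and careful handling of the circular‑to‑linear reduction via the doubled word.
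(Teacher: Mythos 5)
Your overall architecture matches the paper's: reduce to the seven Heesch--Kienzle boundary-word factorization forms, exploit admissibility (maximality) to control the number of candidate factors, and verify each form separately with stringology on the doubled word. The witness step and the circular-to-linear reduction are also as in the paper.

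However, there is a genuine gap in your step (2), the ``assembly'' step, and you have located the crux in the wrong place. You assert that once the $O(n)$ admissible special factors are enumerated, the remaining factors are ``determined up to $O(n)$ choices'' and the whole factorization can be verified with polylogarithmic LCE work per candidate, the rest being ``bookkeeping over the seven fixed patterns.'' The paper explicitly warns that this is false: even with only $O(n)$ admissible candidates per factor type, combining them into a factorization with between~3 and~6 parts is at least as hard as detecting a fixed-length cycle in a graph with $n$ vertices and $O(n)$ edges, for which only an $O(n^{1.67})$-time algorithm is known. Concretely, for the half-turn form $W = ABC\Back{A}DE$ with $B,C,D,E$ palindromes, fixing the cut between $D$ and $E$ still leaves a two-palindrome decomposition of one arc and an $A\,B\,C\,\Back{A}$ decomposition of the other, and the admissible palindromes ending at a given position number $\Theta(n)$ in the worst case, so no single factor is ``determined.'' The missing ingredients are the structural batch-processing lemmas that occupy most of Section~4: the Prefix Palindrome Factorization Lemma (the prefix palindromes of any word organize into $O(\log n)$ runs $X_1^{r_1}\cdots X_m^{r_m}$ of primitive words with geometrically growing periods), Galil's lemma that a two-palindrome decomposition, if one exists, exists with an extremal (longest prefix or suffix) palindrome, Fine--Wilf periodicity arguments to bound how runs can interact, and $O(1)$ bounds on the number of \emph{long} admissible $90$-dromes and mirrors. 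These let the algorithm iterate over $O(\log n)$ ``terminators'' per boundary position rather than over individual palindromes, which is where the $O(n\log^2 n)$ bound actually comes from. By contrast, the enumeration of admissible gapped pairs that you identify as the main obstacle is handled in the paper by comparatively routine means (restriction to long factors, outpost indices, and LCE queries), so your proposal both omits the hardest ideas and overestimates the difficulty of a part that is essentially standard.
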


Here we survey some of the ideas involved in the proof.
The proof starts with a list of the boundary word factorization forms that together characterize the polyominoes capable of tiling isohedrally.
These are found in the bordered subregion\footnote{A translation of the caption of Table 10: ``The strong border contains~9 major types, from which the others can be thought of as emerging by shrinking lines or line pairs.''} of Table 10 of~\cite{Heesch-1963}\footnote{Reproduced on page 326 of~\cite{Schattschneider-1990}.} excluding the two types of isohedral tilings that use $\D{60}$ and $\D{120}$ rotations of the shape. 
The factorizations can be cross-verified using the incidence and adjacency symbols of a more detailed classification of isohedral tiling types of Gr\"{u}nbaum and Shephard~\cite{Grunbaum-1977}, and correspond to the isohedral types IH~1,~4,~28,~2,~3,~5, and~6 in this classification.
The factorization forms are:
\begin{itemize}
\item Translation: $ABC\Back{A}\Back{B}\Back{C}$.
\item Half-turn: $W = ABC\Back{A}DE$ with $B$, $C$, $D$, $E$ palindromes.
\item Quarter-turn: $W = ABC$ with $A$ a palindrome and $B$, $C$ $90$-dromes.
\item Type-1 reflection: $W = A B \Refl{\Theta}{B} \Back{A} C \Refl{\Phi}{C}$ for some $\Theta$, $\Phi$.
\item Type-2 reflection: $W = A B C \Back{A} \Refl{\Theta}{C} \Refl{\Theta}{B}$.
\item Type-1 half-turn-reflection: $W = A B C \Back{A} D \Refl{\Theta}{D}$ with $B$, $C$ palindromes.
\item Type-2 half-turn-reflection: $W = A B C D \Refl{\Theta}{B} \Refl{\Phi}{D}$ with $A$, $C$ palindromes and $\D{\Theta}-\D{\Phi} = \pm \D{90}$.
\end{itemize}

Winslow~\cite{Winslow-2015} gave a $O(n)$-time algorithm for deciding if a boundary word of length $n$ has a translation factorization.
Sections~\ref{sec:HLF} through~\ref{sec:T2H} give $O(n\log^2{n})$-time or faster algorithms for deciding if a boundary word of length $n$ has a factorization of each remaining form.
Brlek, Koskas, and Proven\c{c}al~\cite{Brlek-2009b} provide a $O(n)$-time algorithm for deciding if a given circular word is the boundary word of a polyomino, and we assume for the remainder of the paper that the input is guaranteed to be the boundary word of a polyomino and thus simple. 
This assumption of simplicity is used to prove that factors and pairs of factors in a factorization are \emph{admissible}: maximal in a natural sense for each factor (pair) type.
E.g., for half-turn factorizations: 

\vspace*{10pt}
\noindent
\textbf{Lemma~\ref{lem:HLF-half-turn-factors-admissible}.}
Let $P$ be a polyomino and $\Bou{P} = A B C \Back{A} D E$ with $B, C, D, E$ palindromes.
Then the gapped mirror pair $A$, $\Back{A}$ and palindromes $B$, $C$, $D$, $E$ are admissible.
\vspace*{10pt}

For various factors and factor pairs, this implies there are $O(n)$ or $O(n\log{n})$ candidate factors for these elements of a factorization, and they can be computed in similar time.
Note that this alone is not sufficient to solve these problems in quasi-linear time.
Without additional structural results, attempting to combine a quasi-linear number of factors into a factorization with one of the~6 forms is at least as hard as searching for cycles of fixed length between~3 and~6 in a graph with $n$ vertices and $O(n)$ edges, only known to admit a $O(n^{1.67})$-time algorithm~\cite{Alon-1997}.
Additional structural results must be used, such as the following for half-turn factorizations: 

\vspace*{10pt}
\noindent
\textbf{Lemma~\ref{lem:HLF-prefix-palin-factor}.}
\emph{The prefix palindrome factorization of a word $W$ has the form $W = X_1^{r_1} X_2^{r_2} \dots X_m^{r_m} Q$ with:
\begin{itemize}
\item Every $X_i$ primitive.
\item For all $3 \leq i \leq m$, $\sum_{j=1}^{i-2}{|X_j^{r_j}|} \leq |X_i|$ and thus $m = O(\log{|W|})$.
\end{itemize}}
\vspace*{10pt}

Such results allow more efficient ``batch processing'' of factors to achieve quasi-linear running time.
It can also be seen by cursory examination that each algorithm returns affirmatively only once witness factorization is found.
Witness factorizations define the set of boundary intervals shared by pairs of neighboring tiles in an isohedral tiling, thus the algorithm can also return a witness isohedral tiling if desired.

% CODE: HLF

\section{Half-Turn Factorizations}
\label{sec:HLF}

\begin{definition}
\label{defn:half-turn-factorization}
A \emph{half-turn factorization} of a boundary word $W$ has the form $W = ABC\Back{A}DE$ with $B$, $C$, $D$, $E$ palindromes.
\end{definition}

\subsection{Prefix palindrome factorizations}
\label{sec:prefix-palindrome-factorizations}

\begin{definition}
Let $W$ be a word.
A factorization $W = F_1 F_2 \dots F_{n+1}$ is a \emph{prefix palindrome factorization} of $W$ provided that the set of prefix palindromes of $W$ is $\{F_1 F_2 \dots F_i : 1 \leq i \leq n \}$.
\end{definition}

\begin{lemma}
\label{lem:HLF-in-between-palin}
Let $W = PX$ with $P$, $W$ palindromes and $0 < |P| < |W|$.
Then $W$ has a period of length $|X|$. % can also say X is a double palindrome
Furthermore, if $X$ is composite, then $W$ has a prefix palindrome longer than $P$.
\end{lemma}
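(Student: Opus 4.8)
The plan is to reduce both assertions to one elementary fact about a palindrome $W$: a prefix of $W$ of length $q$ with $1 \le q < |W|$ is a palindrome if and only if $W$ has a period of length $|W|-q$. I would state and prove this equivalence once as a short sub-claim, by index chasing from $W[i]=W[|W|+1-i]$ together with (for the ``only if'' direction) $P[j]=P[|P|+1-j]$; the ``if'' direction uses $W[i]=W[i+p]$ and $W[i]=W[n+1-i]$ to get $W[q+1-j]=W[q+1-j+p]=W[n+1-j]=W[j]$. Once this is in hand, both parts of the lemma are essentially immediate.

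For the first claim: $P$ is a palindromic prefix of the palindrome $W$, so the equivalence gives that $W$ has a period of length $|W|-|P|=|X|$. Concretely, $P$ occurs as a suffix of $W$ as well as a prefix, so $W=PX=X'P$ with $|X'|=|X|$; comparing the two factorizations letter by letter yields $W[i]=W[i+|X|]$ for $1\le i\le|P|$, i.e. $W[1..|X|]$ is a period of $W$. Since $0<|P|<|W|$ we have $1\le|X|<|W|$, so this period is nontrivial.

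For the ``furthermore'', the key observation is that the period word $W[1..|X|]=X'$ is precisely $\Rev{X}$: position $i$ of $X'$ is $W[i]=W[|W|+1-i]$, which is position $|X|+1-i$ of the suffix $X$. Now suppose $X=Y^k$ with $k\ge2$. Then $\Rev{X}=\Rev{Y}^{k}$, so (by the definition of period) $W$ is congruent to a prefix of $(\Rev{X})^N=\Rev{Y}^{kN}$ for some $N$; since $|W|=|P|+k|Y|>|Y|$, this forces $W[1..|Y|]=\Rev{Y}$ and $W[i]=W[i+|Y|]$ for $1\le i\le|W|-|Y|$, so $W$ has a period of length $|Y|=|X|/k$ with $1\le|Y|<|X|$. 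Feeding $p=|Y|$ into the sub-claim produces a palindromic prefix of $W$ of length $|W|-|Y|$, which is strictly larger than $|W|-|X|=|P|$ (and still strictly less than $|W|$), as desired. I do not expect any genuine obstacle here; the only thing to watch is that the lengths stay in the range where the palindrome/period equivalence applies ($1\le|X|<|W|$ and $1\le|Y|<|X|$), all of which follow immediately from $0<|P|<|W|$ and $k\ge2$.
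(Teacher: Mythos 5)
Your proof is correct. The first claim is handled essentially as in the paper: both arguments observe that $W = \Rev{W} = \Rev{X}\,\Rev{P} = \Rev{X}P$, so that $\Rev{X}$ occurs as a prefix of $W$ and $W[i] = W[i+|X|]$ for all valid $i$. For the ``furthermore,'' however, you take a genuinely different and cleaner route. The paper writes $\Rev{X} = ZY$ with $W = (ZY)^pZ$, deduces that $Y$, $Z$ (and then $G_1$, $G_2$ from $G = G_1G_2$) are palindromes, and explicitly constructs the longer prefix palindrome $Q$ by a two-case analysis on whether $|G_2| < |Z|$ or $|G_2| = |Z|$. You instead isolate the standard duality for a palindrome $W$ — a proper prefix of length $q$ is a palindrome if and only if $W$ has a period of length $|W|-q$ — prove it once by index chasing (your index manipulations check out in both directions), and then observe that compositeness $X = Y^k$ gives $\Rev{X} = (\Rev{Y})^k$ and hence a strictly shorter period $|Y| = |X|/k$ of $W$, which the duality converts directly into a prefix palindrome of length $|W| - |Y| > |W| - |X| = |P|$. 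This avoids the case analysis entirely, makes the mechanism behind the lemma transparent, and even yields an explicit length for the new prefix palindrome; the paper's construction, on the other hand, produces the auxiliary palindromes $Y$, $Z$, $G_1$, $G_2$ that echo the structure used later in Lemma~\ref{lem:HLF-small-candidate-rep-set}. All the range conditions you flag ($1 \le |X| < |W|$ and $1 \le |Y| < |X|$) do follow from $0 < |P| < |W|$ and $k \ge 2$, so there is no gap.
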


\begin{proof}
Since $W = PX$ and $P$, $W$ are palindromes, $W = \Rev{W} = \Rev{PX} = \Rev{X} \Rev{P} = \Rev{X} P$.
So $P$ is a prefix of $\Rev{X} P$ and so $\Rev{X}$ is a period of $P$ and of $W = \Rev{X} P$.
Since $\Rev{X}$ is a period of $W$ and $|\Rev{X}| < |W|$, there exist words $Y$, $Z$ such that $\Rev{X} = ZY$, $W = (ZY)^p Z$, and $P = (ZY)^{p-1} Z$ for some $p \geq 1$.
So $X = YZ$ and since $W$ is a palindrome, $Y$ and $Z$ are palindromes.

If $X$ is composite, then $X = G^k$ for some $k \geq 2$.
So there exist words $G_1$, $G_2$ such that $G = G_1 G_2$, $Y = (G_1 G_2)^i G_1$, $Z = G_2 (G_1 G_2)^{k-i-1}$.
Since $Y$ and $Z$ are palindromes, $G_1$ and $G_2$ are palindromes.

Now we construct a prefix palindrome of $W$, called $Q$, that is longer than $P$.
Without loss of generality, assume $|G_2| > 0$. % if 0, G is a palindrome, may as well call it G_2
Then there are two possibilities for the values of $|G_2|$ and $|Z|$:
\begin{enumerate}
\item $|G_2| < |Z|$ and we let $Q = \Rev{X}^p G_2$.
\item $|G_2| = |Z|$ and we let $Q = PG_2$.
\end{enumerate}

In the first case, $Q = \Rev{X}^p G_2 = (G_1 G_2)^p G_2 = (G_2 G_1)^{kp} G_2$, so $Q$ is a palindrome.
Also, $W = (ZY)^p Z$ and $G_2$ is a prefix of $Z$, so $Q$ is a prefix of $W$ and $|P| = |W| - |X| = |W| - |ZY| = p|ZY| - |Y| < |Q| = p|ZY| + |G_2| < p|ZY| + |Z| = |W|$.

In the second case, $Y = G_2^i$ and $Z = G_2^{k-i}$.
So $Q = P G_2 = (ZY)^{p-1} Z G_2 = G_2^{kp-k} G_2^{k-i} G_2 = G_2^{kp-i+1}$ and $Q$ is a palindrome.  
Also, $W = PX$ and $G_2$ is a prefix of $X$, so $Q$ is a prefix of $W$ and $|P| < |Q| = |P| + |G_2| < |P| + |X| = |W|$.
\end{proof}

The following is a well-known result; see Chapter~2 of Crochemore and Rytter~\cite{Crochemore-1994}.

\begin{lemma}[Fine and Wilf's theorem~\cite{Fine-1965}]
\label{lem:HLF-strong-periodicity}
Let $W$ be a word with periods of length $p$ and $q$.
If $p + q \leq |W|$, then $W$ also has a period of length ${\rm gcd}(p, q)$. 
\end{lemma}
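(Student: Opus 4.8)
The plan is to prove the statement by an induction that mimics the Euclidean algorithm on the pair $(p,q)$, with the key reduction step being: \emph{if $W$ has periods of lengths $p > q \geq 1$ and $p+q \leq |W|$, then $W$ also has a period of length $p-q$}. Assume $p \geq q$ without loss of generality; the case $p=q$ is immediate since then $\gcd(p,q)=p$ is already a period. Once the reduction step is in hand, the induction closes: the new pair $(p-q,\,q)$ satisfies $(p-q)+q = p \leq |W|$ and $\gcd(p-q,q)=\gcd(p,q)$, and $p+q$ strictly decreases (as $q\geq 1$), with base case $p=q$. So $W$ ends up with a period of length $\gcd(p,q)$.

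To prove the reduction step I would fix an index $i$ with $1 \leq i \leq |W|-(p-q)$ and show $W[i]=W[i+(p-q)]$ by a case split on whether there is room to move right by $p$. If $i+p \leq |W|$, then the period of length $p$ gives $W[i]=W[i+p]$, and the period of length $q$ (applicable since $1 \leq i+p-q$ and $i+p \leq |W|$) gives $W[i+p]=W[i+p-q]=W[i+(p-q)]$. Otherwise $i+p > |W|$, and I instead move left by $q$ first: provided $i-q \geq 1$, the period of length $q$ gives $W[i]=W[i-q]$, and then the period of length $p$ gives $W[i-q]=W[i-q+p]=W[i+(p-q)]$, which is legitimate because $i-q+p \leq |W|-(p-q)-q+p = |W|$. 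The only index left uncovered is one with $i \leq q$ \emph{and} $i+p > |W|$, i.e. $|W|-p < i \leq q$; but the existence of such an integer $i$ forces $|W|-p < q$, that is $|W| < p+q$, contradicting the hypothesis $p+q \leq |W|$. Hence every index is handled and $W$ has period $p-q$.

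The one point that needs care is exactly this last case analysis: verifying that the two shift arguments (right by $p$ then left by $q$, versus left by $q$ then right by $p$) jointly cover every index, and isolating that their union fails precisely when $p+q > |W|$ — this is where the hypothesis enters and is, I expect, the only genuine obstacle; the rest is index bookkeeping. A cosmetic subtlety is that the paper's notion of ``period'' is phrased for a word $X$; I would run the whole argument in terms of lengths, reading ``period of length $\ell$'' as the condition $W[i]=W[i+\ell]$ for all $1 \leq i \leq |W|-\ell$ (non-vacuous here since $p,q \leq |W|-1 < |W|$), and at the end recover an honest period word, namely the prefix $W[1\mathbin{..}\gcd(p,q)]$.
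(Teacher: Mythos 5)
Your proof is correct. Note that the paper does not prove this lemma at all: it is stated as a known result (Fine and Wilf's theorem) with a citation to the literature, so there is no in-paper argument to compare against. Your subtractive-Euclidean reduction is the standard elementary proof of the \emph{weak} periodicity lemma, and the case analysis is sound: the two shift arguments cover all indices precisely because $p+q\leq|W|$ rules out an index with $|W|-p<i\leq q$, and the induction on $(p-q,q)$ terminates with the sum invariant $p\leq|W|$ preserved. One small remark: the classical Fine--Wilf theorem holds under the weaker hypothesis $p+q-\gcd(p,q)\leq|W|$, which your subtraction argument does not reach, but the paper only asserts the version with $p+q\leq|W|$, so your proof establishes exactly what is claimed.
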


\begin{lemma}
\label{lem:HLF-palin-size-steps}
Let $P_1, P_2, \dots, P_m$ be the set of prefix palindromes of a word with $0 < |P_1| < |P_2| < \dots < |P_m|$.
Then for any $1 \leq i \leq m-2$, either $|P_{i+1}| - |P_i| = |P_{i+2}| - |P_{i+1}|$ or $|P_i| + |P_{i+1}| < |P_{i+2}|$.
\end{lemma}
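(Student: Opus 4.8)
The plan is to deduce everything from Lemma~\ref{lem:HLF-in-between-palin} and Fine and Wilf's theorem (Lemma~\ref{lem:HLF-strong-periodicity}). Fix $1 \leq i \leq m-2$ and set $P = P_i$, $W = P_{i+2}$. Since $P_i$ and $P_{i+2}$ are both prefix palindromes with $0 < |P_i| < |P_{i+2}|$, we may write $W = PX$ with $|X| = |P_{i+2}| - |P_i|$, and Lemma~\ref{lem:HLF-in-between-palin} tells us $W = P_{i+2}$ has a period of length $|X| = |P_{i+2}| - |P_i|$. By the same argument applied to the pair $P_{i+1}, P_{i+2}$, the word $P_{i+2}$ also has a period of length $|P_{i+2}| - |P_{i+1}|$. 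Write $p = |P_{i+2}| - |P_{i+1}|$ and $q = |P_{i+2}| - |P_i|$, so $p < q$ and both are periods of $W = P_{i+2}$, a word of length $|P_{i+2}|$.

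Now suppose, for contradiction, that neither alternative in the statement holds, i.e. $|P_{i+1}| - |P_i| \neq |P_{i+2}| - |P_{i+1}|$ (equivalently $p \neq q - p$, i.e. $q \neq 2p$) and $|P_i| + |P_{i+1}| \geq |P_{i+2}|$. The second inequality rearranges to $(|P_{i+2}| - |P_{i+1}|) + (|P_{i+2}| - |P_i|) \leq |P_{i+2}|$, that is $p + q \leq |W|$. Then Lemma~\ref{lem:HLF-strong-periodicity} applies and $W = P_{i+2}$ has a period of length $d = \gcd(p, q)$, with $d \leq p$ and $d \mid p$. The goal is to produce from this a prefix palindrome of $W$ whose length lies strictly between $|P_{i+1}|$ and $|P_{i+2}|$, contradicting the assumption that $P_1, \dots, P_m$ is the complete list of prefix palindromes.

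To get that contradiction I would argue as follows. Since $d$ is a period of the palindrome $W = P_{i+2}$ and $d < |W|$, for every multiple $jd$ with $0 < jd < |W|$ the prefix of $W$ of length $jd$ is itself a palindrome: a word of length $jd$ that is a prefix of the palindrome $W$ and has period $d$ is fixed by reversal because its reversal is the length-$jd$ suffix of $W$, which by periodicity equals its length-$jd$ prefix. (Alternatively one invokes Lemma~\ref{lem:HLF-in-between-palin} repeatedly, peeling off copies of the period, since the period $Z$ of $W$ of length $d$ is a palindrome.) Among the multiples of $d$, choose $jd$ to be the largest one that is $\leq |P_{i+1}|$; since $d \mid p = |P_{i+2}| - |P_{i+1}|$ and $d \leq p$, adding one more $d$ gives $jd + d \leq |P_{i+1}| + p = |P_{i+2}|$, and in fact $jd$ differs from $|P_{i+1}|$ by less than $d$ while $(j+1)d$ is a multiple of $d$ in the range $(|P_{i+1}|, |P_{i+2}|]$. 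If $(j+1)d < |P_{i+2}|$, the prefix of $W$ of that length is a palindrome strictly longer than $P_{i+1}$ and strictly shorter than $P_{i+2}$, contradiction; and if $(j+1)d = |P_{i+2}|$ then $d \mid |P_{i+2}|$ and $d \mid |P_{i+2}| - |P_{i+1}|$ force $d \mid |P_{i+1}|$, so $jd = |P_{i+1}|$ and $(j-1)d$, $(j+1)d$ are palindrome prefixes — but one checks that $q = |P_{i+2}| - |P_i|$ being a period together with $q \neq 2p$ and $d \mid q$ still forces a palindrome prefix of length a multiple of $d$ strictly between $|P_{i+1}|$ and $|P_{i+2}|$, namely $|P_{i+2}| - q + $ (a suitable multiple of $d$), again a contradiction. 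The main obstacle is precisely this boundary bookkeeping — making sure the newly produced palindromic prefix has length \emph{strictly} inside the open interval $(|P_{i+1}|, |P_{i+2}|)$ and is genuinely new — and handling the degenerate case $q = 2p$ (excluded by the first assumed inequality) cleanly, so the two alternatives of the lemma exhaust all cases.
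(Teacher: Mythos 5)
Your overall strategy is viable and genuinely different from the paper's. You apply Lemma~\ref{lem:HLF-in-between-palin} twice to obtain the two periods $p = |P_{i+2}|-|P_{i+1}|$ and $q = |P_{i+2}|-|P_i|$ of $P_{i+2}$, invoke Fine and Wilf on $P_{i+2}$, and try to manufacture a forbidden intermediate prefix palindrome; the paper instead applies Fine and Wilf to the \emph{middle} palindrome $P_{i+1}$ with periods $|X_i|$ and $|X_{i+1}|$ and closes the argument via the primitivity of $X_i$ and $X_{i+1}$ (the second half of Lemma~\ref{lem:HLF-in-between-palin}), splitting on whether $|X_{i+1}| \le |P_i|$. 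However, your central claim is false as stated: it is not true that every prefix of length $jd$ of a palindrome $W$ with period $d$ is a palindrome. Take $W = \Up\Right\Up$, a palindrome with period $d=2$; its prefix of length $2$ is $\Up\Right$, which is not a palindrome. The error is in the sentence ``its reversal is the length-$jd$ suffix of $W$, which by periodicity equals its length-$jd$ prefix'': the suffix of length $\ell$ starts at position $|W|-\ell+1$, so periodicity identifies it with the prefix of length $\ell$ only when $d$ divides $|W|-\ell$, not when $d$ divides $\ell$. The correct statement is that the prefixes of lengths $|W|-jd$ (lengths congruent to $|W|$ modulo $d$) are palindromes.

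With that correction your argument does go through, and more cleanly than your sketch suggests. Since $d=\gcd(p,q)$ divides $p$, if $d<p$ then the prefix of length $|P_{i+2}|-(p-d)=|P_{i+1}|+d$ is a palindrome lying strictly between $P_{i+1}$ and $P_{i+2}$, a contradiction; so $d=p$, hence $p\mid q$ and $q\ge 2p$. If $q>2p$, then $q\ge 3p$, so $|P_i|=|P_{i+2}|-q\le |P_{i+1}|-2p$, and the prefix of length $|P_{i+2}|-2p=|P_{i+1}|-p$ is a palindrome strictly between $P_i$ and $P_{i+1}$, again a contradiction; so $q=2p$, which is exactly the first alternative. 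As written, though, the ``boundary bookkeeping'' that you yourself flag as the main obstacle is carried out on the wrong arithmetic progression of lengths, and the final case analysis is left at ``one checks that\dots,'' so the proof is not complete without the fix above.
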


\begin{proof}
Let $P_{i+1} = P_i X_i$ and $P_{i+2} = P_i X_i X_{i+1}$.

Since there are no prefix palindromes of length between $|P_i|$ and $|P_{i+1}|$ or $|P_{i+1}|$ and $|P_{i+2}|$, Lemma~\ref{lem:HLF-in-between-palin} implies $X_i$ and $X_{i+1}$ are primitive and $P_{i+1}$ and $P_{i+2}$ have periods of length $|X_i|$ and $|X_{i+1}|$, respectively.
Since $P_{i+1}$ is a prefix of $P_{i+2}$, it also has a period of length $|X_{i+1}|$.

The lemma permits $|X_i| = |X_{i+1}|$, so assume $|X_i| \neq |X_{i+1}|$.
If $|X_{i+1}| \leq |P_i|$, then $P_{i+1}$ has periods of length $|X_i|$ and $|X_{i+1}|$ with $|X_{i+1}| + |X_i| \leq |P_{i+1}|$.
Then by Lemma~\ref{lem:HLF-strong-periodicity}, $P_{i+1}$ has a period of length ${\rm gcd}(|X_i|, |X_{i+1}|)$.
This length must be at least $|X_i|$ and $|X_{i+1}|$, otherwise $X_i$ or $X_{i+1}$ is not primitive. 
So $|X_i| = |X_{i+1}|$.
Otherwise, $|X_{i+1}| > |P_i|$ and so $|P_i| + |P_{i+1}| < |X_{i+1}| + |P_{i+1}| = |P_{i+2}|$.
\end{proof}

The next lemma is a strengthening of similar prior results by Apostolico, Breslauer, and Galil~\cite{Apostolico-1995}, I et al.~\cite{I-2014}, and Matsubara et al.~\cite{Matsubara-2009}.

% NOTE: If you change this lemma, also update it in overview.tex!
\begin{lemma}[Prefix Palindrome Factorization Lemma] 
\label{lem:HLF-prefix-palin-factor}
The prefix palindrome factorization of a word $W$ has the form $W = X_1^{r_1} X_2^{r_2} \dots X_m^{r_m} Q$ with:
\begin{itemize}
\item Every $X_i$ primitive.
\item For all $3 \leq i \leq m$, $\sum_{j=1}^{i-2}{|X_j^{r_j}|} \leq |X_i|$ and thus $m = O(\log{|W|})$.
\end{itemize}
\end{lemma}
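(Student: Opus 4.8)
The plan is to build the factorization greedily from the list of prefix palindromes $P_1, P_2, \dots$ (with lengths strictly increasing) produced by the prefix palindrome factorization, and to use Lemma~\ref{lem:HLF-palin-size-steps} as the engine. The idea is that consecutive prefix palindromes either increase in length by a constant step (an ``arithmetic run''), or make a big jump where the new length exceeds the sum of the previous two. Each maximal arithmetic run $P_k, P_{k+1}, \dots, P_{k+\ell}$ with common difference $d$ corresponds to a power $X_i^{r_i}$: by Lemma~\ref{lem:HLF-in-between-palin}, having no prefix palindrome strictly between consecutive ones forces the relevant $X_j$ to be primitive, and the constant-difference condition together with the periodicity statement of that lemma identifies the period word $X_i$ (of length $d$, or related to $d$) that gets repeated. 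So I would first partition the sequence of prefix palindromes into maximal runs of constant consecutive difference, define $X_i$ from each run, read off $r_i$ as essentially the number of palindromes in the run, and let $Q$ be whatever remains of $W$ after the last prefix palindrome.

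The key steps, in order: (1) Set up notation: list prefix palindromes $P_1 \prec \dots \prec P_m$, write $P_{t+1} = P_t X_{(t)}$ for the single-letter-indexed difference words, and invoke Lemma~\ref{lem:HLF-in-between-palin} to get that each $X_{(t)}$ is primitive and $P_{t+1}$ has period $|X_{(t)}|$. (2) Invoke Lemma~\ref{lem:HLF-palin-size-steps}: for every consecutive triple, either the two differences are equal or the jump is superadditive ($|P_t| + |P_{t+1}| < |P_{t+2}|$). (3) Group the indices into maximal blocks where the consecutive differences stay constant; within such a block the period word is the same primitive word $X_i$, so the union of these palindromes is a prefix of the form (earlier prefix)$\cdot X_i^{(\text{block count})}$, giving the factor $X_i^{r_i}$. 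Because each $X_i$ is primitive, it cannot itself be a power, so this is a genuine primitivity claim. (4) Establish the growth bound: when moving from block $i-1$ to block $i$ there must be a triple straddling the boundary where the differences differ, hence by step (2) a superadditive jump occurs; chase the inequalities to conclude $\sum_{j=1}^{i-2}|X_j^{r_j}| \le |X_i|$. Then $|W| \ge \sum_i |X_i^{r_i}|$ grows at least like a Fibonacci-type recurrence in $m$, which gives $m = O(\log|W|)$. (5) Define $Q$ as the suffix of $W$ beyond $P_m$ and check lengths add up.

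The main obstacle I anticipate is step (3)–(4): precisely matching the ``period word of length $|X_{(t)}|$'' data coming out of Lemma~\ref{lem:HLF-in-between-palin} with the ``common difference of palindrome lengths'' data coming out of Lemma~\ref{lem:HLF-palin-size-steps}, and in particular showing the period word is literally constant (not just its length) across an arithmetic run, and that the run truly realizes a clean power $X_i^{r_i}$ aligned with the earlier part of $W$ — boundary effects between blocks, and the case $r_i = 1$ (a block consisting of a single palindrome with no ``difference'' yet determined), need care. One subtlety is that the exponent $r_i$ in the statement may not equal the raw count of palindromes in a run because the period can already be present in the preceding prefix; I would handle this by defining each $X_i^{r_i}$ as the portion of $W$ strictly contributed after the previous block's prefix, absorbing any alignment slack. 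A secondary obstacle is verifying the superadditive jump really occurs at \emph{every} block transition (not just some), which is what drives the geometric — indeed Fibonacci — growth needed for the logarithmic bound; this follows because if the difference did not change at the transition we would not have ended the block, but I need to phrase this so the indices line up with the ``$3 \le i \le m$'' and ``$i-2$'' in the statement.
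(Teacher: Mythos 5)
Your proposal is correct and follows essentially the same route as the paper's proof: list the prefix palindromes, take difference words, get primitivity and periodicity from Lemma~\ref{lem:HLF-in-between-palin}, apply the dichotomy of Lemma~\ref{lem:HLF-palin-size-steps} to group maximal constant-difference runs into powers $X_i^{r_i}$, and derive the logarithmic bound from the superadditive jumps at block boundaries. The subtlety you flag about the difference word being literally constant across a run is resolved exactly as you suspect --- consecutive equal-length difference words share a common period, forcing them to be equal as words --- and the exponent $r_i$ is just the raw count of difference words in the run, so no alignment slack needs absorbing.
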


\begin{proof}
We give a constructive proof.
Let $P_1, P_2, \dots, P_n$ be the set of prefix palindromes of $W$ with $|P_1| < |P_2| < \dots < |P_n|$.

Let $W_i$ be the word such that $P_{i+1} = P_i W_i$ and let $Q$ be the word such that $W = P_n Q$.
So $W$ has a prefix palindrome factorization $W_1 W_2 \dots W_n Q$.
By Lemma~\ref{lem:HLF-in-between-palin}, every $W_i$ is primitive.
Moreover, by Lemma~\ref{lem:HLF-palin-size-steps} either $|W_i| = |W_{i+1}|$ or $|P_i| < |W_{i+1}|$ for every $1 \leq i \leq n-2$.

Suppose $|W_i| = |W_{i+1}|$.
By Lemma~\ref{lem:HLF-palin-size-steps}, $P_i$ and $P_{i+1}$ have a common period and thus $W_i = W_{i+1}$.
More generally, if $|W_i| = |W_{i+1}| = \dots = |W_{i+c}|$, then $W_i = W_{i+1} = \dots = W_{i+c}$.
If $|P_i| < |W_{i+1}|$, then $|W_{i+1}| > |P_i| = \sum_{j=1}^{i-1}|W_j|$.
So the factorization $W_1 W_2 \dots W_n Q$ can be rewritten as $X_1^{r_1} X_2^{r_2} \dots X_m^{r_m} Q$ with the property that $|X_i| \geq \sum_{j=1}^{i-2}{|X_j|^{r_j}}$.
So for all $i \geq 4$, $2|X_{i-3}| < |X_i|$ and thus $m = O(\log{|W|})$.
\end{proof}

Such a factorization can be stored using $O(\log{|W|})$ space by simply storing $|X_i|$ and $r_i$ for each $i$.
Additional observations can be used prove that $|W|$ prefix palindrome factorizations of the suffixes of a word $W$ can be computed in optimal time:

\begin{lemma}
\label{lem:HLF-all-prefix-facts-fast}
The prefix palindrome factorizations of all shifts of a circular word $W$ can be computed in $O(|W|\log{|W|})$ total time. 
\end{lemma}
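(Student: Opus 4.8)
The plan is to process the shifts in order of their starting position and to show that, as we move the ``cut point'' of the circular word forward by one letter, the prefix palindrome factorization changes only in a controlled, amortizable way. Concretely, fix a linear representative $W$ of length $n=|W|$ and, for $k = 1, \dots, n$, let $W^{(k)} = W[k..k+n-1]$ (indices mod $n$) be the $k$th shift. We want all $n$ of the factorizations $W^{(k)} = X_1^{r_1}\cdots X_m^{r_m} Q$ guaranteed by Lemma~\ref{lem:HLF-prefix-palin-factor}. The first key observation is that by that lemma each individual factorization has $m = O(\log n)$ blocks, so the total output size is $O(n\log n)$ and we have ``room'' to spend $O(\log n)$ amortized work per shift. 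The second, and central, observation is that the set of prefix palindromes of $W^{(k)}$ is exactly the set of palindromic factors of the doubled word $WW$ that \emph{start} at position $k$; the lengths of these are the maximal-palindrome radii (capped at $n$) centered at the half-integer positions $k-\tfrac12, k, k+\tfrac12, \dots$ lying to the right of position $k$. So the entire problem reduces to understanding, for each center, the longest palindrome there, together with the recursive ``suffix palindrome'' structure that Lemma~\ref{lem:HLF-in-between-palin} exposes inside a single palindrome.

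First I would compute, in $O(n)$ time, all maximal palindromic radii at every (integer and half-integer) center of $WW$ using Manacher's algorithm; capping each radius at $n$ gives, for every position $p$, the longest palindrome in $W^{(k)}$ centered at $p$ that is also a prefix of $W^{(k)}$, i.e. the ``outermost'' prefix palindrome whose center is $p$. Next I would exploit the structure from Lemma~\ref{lem:HLF-in-between-palin}: if $P$ is a prefix palindrome and $P'$ is the next-shorter prefix palindrome, then $P = P'X$ with $P'$, $P$ palindromes, $X$ of length $|P|-|P'|$, $P$ has period $|X|$, and moreover the ``jump'' $|X|$ is locally either constant or geometric (Lemma~\ref{lem:HLF-palin-size-steps}), which is precisely what collapses the list of prefix palindromes of one shift into the $O(\log n)$ run-length blocks $X_i^{r_i}$. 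The plan is to show that each block $X_i^{r_i}$ in the factorization of $W^{(k)}$ is ``witnessed'' either by a maximal palindrome of $WW$ that Manacher already found, or by a period that Fine–Wilf (Lemma~\ref{lem:HLF-strong-periodicity}) forces to extend from a neighboring shift. Using this, I would maintain the factorization incrementally: passing from $W^{(k)}$ to $W^{(k+1)}$ deletes the first letter, which can only (i) destroy prefix palindromes that used position $k$ as an internal center — these form a contiguous suffix of the old list, shortening or merging the first few blocks — and (ii) possibly lengthen existing blocks or create at most $O(\log n)$ new ones at the front. A potential-function argument bounds the total number of block creations/destructions over all $n$ shifts by $O(n\log n)$, since $m=O(\log n)$ at every step and each block, once created, is destroyed at most once; each such update is supported in $O(\log n)$ time by binary-searching the geometric block boundaries and by $O(1)$ period queries answered from the Manacher table (or from a longest-common-extension data structure on $WW$, also built in $O(n)$ time).

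I expect the main obstacle to be the bookkeeping in step (i)–(ii): proving cleanly that the factorizations of $W^{(k)}$ and $W^{(k+1)}$ share all but $O(\log n)$ of their structure, so that the amortized cost really is $O(\log n)$ rather than $\Theta(\log^2 n)$ or worse per shift. The delicate point is that when the leading block $X_1^{r_1}$ of $W^{(k+1)}$ changes, the change can cascade into $X_2^{r_2}$ and further, because the recursive palindrome structure of Lemma~\ref{lem:HLF-in-between-palin} ties consecutive blocks together through shared periods; one must argue via Fine–Wilf that such a cascade is bounded by the depth $m=O(\log n)$ and cannot ``ripple'' more than once per shift. Everything else — the Manacher precomputation, the $O(n)$-time LCE structure on $WW$, the run-length encoding, and the final time bound — is routine once this structural stability lemma is in hand, and the total running time is then $O(n) + n\cdot O(\log n) = O(n\log n)$ as claimed.
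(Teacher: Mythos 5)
Your proposal hinges on maintaining the prefix palindrome factorization incrementally as the cut point advances, i.e., under \emph{deletion of the first letter} (plus an append at the end). That is exactly the step you flag as ``the main obstacle'' and never prove, and the mechanism you sketch for it is not right. The prefix palindromes of $W^{(k)}$ and of $W^{(k+1)}$ are palindromes with different starting positions; the correct relationship is that trimming the first and last letter of each prefix palindrome of $W^{(k)}$ yields \emph{some} of the prefix palindromes of $W^{(k+1)}$, but deleting the front letter can also introduce entirely new prefix palindromes (those starting at $k+1$ whose extension to position $k$ fails), which are not obtained by shortening or merging old blocks and are not confined to ``the front'' of the list. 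So the claim that the update only ``destroys prefix palindromes that used position $k$ as an internal center,'' forming ``a contiguous suffix of the old list,'' mischaracterizes the update, and the potential-function argument built on it has nothing to amortize against. Relatedly, your reduction to Manacher is off: the lengths of the prefix palindromes of $W^{(k)}$ are not the maximal-palindrome radii at centers to the right of $k$; a center $p\geq k$ contributes the (generally non-maximal) palindrome of length $2(p-k)+1$ exactly when the maximal palindrome at $p$ reaches left to position $k$, so extracting the per-shift lists from the Manacher table is itself a nontrivial reporting problem.

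The paper avoids deletion entirely. It processes the suffixes of the doubled word $WW$ from shortest to longest, i.e., by \emph{prepending} one letter at a time, and invokes Lemma~9 of~\cite{I-2014}, which updates the ($O(\log)$-block, run-length-encoded) prefix palindrome factorization under a single prepend in $O(\log{|W|})$ time. Since each shift $W^{(k)}$ is the length-$|W|$ prefix of the suffix of $WW$ starting at position $k$, its factorization is obtained by truncating the computed factorization to palindromes of length at most $|W|$, in $O(\log{|W|})$ time per shift. If you want to salvage your route, the cleanest fix is to adopt this prepend-only formulation on $WW$ (or prove the prepend update yourself via the Eertree-style fact that the prefix palindromes of $xY$ are obtained from those $P$ among the prefix palindromes of $Y$ with $Y[|P|+1]=x$, together with the structure of Lemma~\ref{lem:HLF-prefix-palin-factor}); as written, the deletion step is a genuine gap.
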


\begin{proof}
Lemma 9 of~\cite{I-2014} states that the prefix palindrome factorization of a non-circular word $xY$ can be computed in $O(\log{|Y|})$ time given the factorization of $Y$.
Thus the factorizations of non-circular word $WW$ can be enumerated in $O(|W|\log{|W|})$ time, beginning with $Y = W[-1]$.
Every shift of word $W$ is a subword of the non-circular word $WW$, and the computed factorizations can be trimmed in $O(\log{|W|})$-time per factorization to be the factorizations of shifts of $W$.
\end{proof}

Identical results, including a suffix palindrome factorization lemma, clearly hold for suffix palindromes as well.

\subsection{Algorithm}

The main idea is to iterate over all pairs of adjacent letters and guess the form of the palindromes $D$ and $E$ in both directions from that location.
Specifically, guess what repeated factor $X_i^{r_i}$ terminates them in their prefix and suffix palindrome factorizations. 
Then try to complete the factorization using Lemma~\ref{lem:HLF-half-turn-subroutine}, which decides if it is possible to rewrite a given portion of the boundary as $L^b A B C \Back{A} R^c$ with $B$, $C$ palindromes and $b$, $c$ in some range. 
The results leading up to Lemma~\ref{lem:HLF-half-turn-subroutine} provide the necessary structure to achieve this goal.
In particular, Lemma~\ref{lem:HLF-dp-finish} shows how to decompose a word into two palindromes, and Lemmas~\ref{lem:HLF-palindrome-pump} through~\ref{lem:HLF-small-candidate-rep-set} yield fast detection of a factorization of the form $BCR^k$ with $B$, $C$ palindromes.

\begin{lemma}
\label{lem:HLF-palindrome-pump}
Let $W$ be a word with subwords $L$, $R$ such that $W = L R^r$ and $R \not \Suffix L$.
Let $P_1$, $P_2$ be palindromes such that $W = P_1 P_2 R^k$ with $|L| \leq |P_1|$.
Then there exists a palindrome $P_2'$ and integer $k'$ such that $W = P_1 P_2' R^{k'}$ with $|P_2'| < |R|$.
\end{lemma}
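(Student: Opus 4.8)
The statement to prove is Lemma~\ref{lem:HLF-palindrome-pump}: given $W = LR^r$ with $R \not\Suffix L$, and palindromes $P_1, P_2$ with $W = P_1 P_2 R^k$ and $|L| \leq |P_1|$, we want to "reduce" $P_2$ modulo $R$, producing a palindrome $P_2'$ with $|P_2'| < |R|$ and $W = P_1 P_2' R^{k'}$.

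The plan is as follows. First I would set up notation: write $W = P_1 P_2 R^k$, so the suffix $P_2 R^k$ of $W$ is what we manipulate, and note $|P_1| \geq |L|$ means $P_1$ contains (at least) all of $L$, so the "tail" $P_2 R^k$ is a suffix of $R^r$ — hence $P_2 R^k$ lives entirely inside a power of $R$, giving it a period of length $|R|$. Also, $W$ itself has a suffix $R^{k}$, so $W$ (and in particular its suffix $P_2 R^k$) is $R$-periodic from position $|P_1|+1$ onward. The key point is that since $P_1$ is a palindrome and $P_2 R^k$ sits in an $R$-periodic region, I can apply Lemma~\ref{lem:HLF-in-between-palin}-style reasoning: $W = P_1 \cdot (P_2 R^k)$ with $P_1$ and $W$... wait, $W$ need not be a palindrome. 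Instead I would argue directly: let $X = P_2 R^k$ and write $X$ using its $R$-periodicity as $X = R^{k'} S$ where $S$ is the prefix of $R$ of length $|X| \bmod |R|$ — no: I want $X = S' R^{k''}$ with $S'$ a suffix-form remainder. Concretely, since $X$ has period $|R|$ and $|X| > |R|\cdot k \geq ...$, write $X = E R^{m}$ where $E$ is the unique prefix of $X$ with $0 \leq |E| < |R|$ and $m \geq k$. Since $X = P_2 R^k$ is a palindrome followed by $R^k$... hmm, $X$ itself is not a palindrome. The cleaner move: set $P_2' := E$ where $E$ is obtained by "peeling off" full copies of $R$ from $P_2$. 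Precisely, since $P_2$ ends where $R^k$ begins and the region is $R$-periodic, $P_2 R^k$ is $R$-periodic, so $P_2 R^k = P_2' R^{k'}$ for the remainder $P_2'$ of length $|P_2 R^k| \bmod |R|$ taken as a prefix... but I need $P_2'$ to be a \emph{palindrome}, and this is the crux.

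The main obstacle — and the heart of the argument — is showing $P_2'$ is a palindrome. Here is how I would handle it. Since $P_2$ is a palindrome, $P_2 = \Rev{P_2}$. Since $P_1 P_2 R^k = W$ and $W$ has period $|R|$ starting from position $|P_1| + 1$ (because $W = L R^r = (L R^{r - k}) R^k$ and $|P_1| \geq |L|$ so positions $|P_1|+1, \ldots, |W|$ all lie in the $R$-periodic tail), the factor $P_2 R^k$ equals $T R^{k'}$ where $T$ is the length-$t$ prefix of the $R$-periodic word, $t = (|P_2| + k|R|) \bmod |R|$, and $k' = \lfloor (|P_2| + k|R|)/|R| \rfloor$. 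Now $T$ is both a prefix of $R$'s periodic extension and, since $P_2 = TR^{k'-k}$ (as $P_2$ occupies the first $|P_2|$ letters of $P_2 R^k = T R^{k'}$), $T$ is a prefix of $P_2$. I claim $T$ is a palindrome: $P_2 = T R^{k'-k}$ is a palindrome, and $R \not\Suffix L$ — I need to convert this hypothesis into the statement that the relevant rotational conjugate of $R$ is "aligned" so that reversing $TR^{k'-k}$ forces $T$ to be a palindrome. Since $P_2 = \Rev{P_2} = \Rev{R^{k'-k}}\,\Rev{T} = \Rev{R}^{k'-k}\Rev{T}$, comparing $P_2 = T R^{k'-k}$ with $P_2 = \Rev{R}^{k'-k} \Rev{T}$ and using that both sides have period $|R|$ (Fine–Wilf, Lemma~\ref{lem:HLF-strong-periodicity}, applies if lengths are large enough; the short cases can be checked directly using primitivity of the period, which follows because $R \not\Suffix L$ rules out a shorter period crossing the $L$/$R^r$ boundary) gives that $\Rev{R}$ is a rotation of $R$ and $T = \Rev{T}$. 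Taking $P_2' = T$ then finishes: $W = P_1 P_2' R^{k'}$, $P_2'$ is a palindrome, and $|P_2'| = t < |R|$ by construction.

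I expect the delicate bookkeeping to be (a) the exact off-by-one in defining $t$ and $k'$ and checking $|P_2'| < |R|$ (immediate from $t = (\cdot)\bmod|R|$), and (b) justifying the period/primitivity claims in the regime where Fine–Wilf does not directly apply, which is exactly where the hypothesis $R \not\Suffix L$ earns its keep — it guarantees the period $|R|$ cannot be refined, so $R$ is "primitive enough" for the reversal argument to pin down $T$ as a palindrome rather than merely as some factor with a matching periodic structure. That primitivity/alignment step is the main obstacle; everything else is substitution and length arithmetic.
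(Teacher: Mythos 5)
Your reduction is the same as the paper's: since $|L| \leq |P_1|$, the suffix $P_2R^k$ of $W$ is a suffix of $R^r$, so (taking $i$ minimal with $P_2 \Suffix R^i$ and $i \geq 2$, the other cases being trivial) one writes $R = YZ$ with $Z$ the suffix of $R$ of length $|P_2|-(i-1)|R|$, gets $P_2 = Z(YZ)^{i-1}$, and sets $P_2' = Z$, $k' = i-1+k$, with $|P_2'| < |R|$ immediate. The gap is exactly where you locate the ``main obstacle'': proving that this remainder is a palindrome. Your proposed route --- equate $TR^{j}$ with $\Rev{R}^{j}\Rev{T}$ and invoke Fine--Wilf together with primitivity of $R$ --- does not go through. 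First, $R \not\Suffix L$ does not imply $R$ is primitive (take $L = \Up$ and $R = \Right\Right$); the lemma is stated without any primitivity hypothesis, so your proof may not assume one. Second, Fine--Wilf (Lemma~\ref{lem:HLF-strong-periodicity}) is about two periods of \emph{different} lengths, whereas both of your factorizations exhibit the same period length $|R|$, so it yields nothing here; the synchronization statement you would actually need ($\Rev{R}$ is a rotation of $R$, with controlled offset) is precisely the kind of claim that fails for non-primitive words.

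The missing step is a one-liner requiring no periodicity machinery: $P_2 = Z(YZ)^{i-1} = (ZY)^{i-1}Z$ both begins and ends with $Z$ (since $i \geq 2$, the trailing $Z$ really is the length-$|Z|$ suffix), and in any palindrome the suffix of length $m$ is the reverse of the prefix of length $m$; hence $Z = \Rev{Z}$. That single observation is the entire content of the paper's argument for this point. With it, your plan collapses to the paper's proof; without it, the central claim of the lemma remains unproven.
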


\begin{proof}
Since $|L| \leq |P_1|$, $P_2$ is a a suffix of $R^i$ for some minimal $i$.
If $i \leq 1$, then either $|P_2| = |R|$ (and $|P_2'| = 0$, $k' = k+1$) or $|P_2| < |R|$ and the claim is satisfied.
If $i \geq 2$, then there exist words $Y$, $Z$ with $|Y| > 0$ such that $R = YZ$ and $P_2 = Z(YZ)^{i-1}$.

Let $P_2' = Z$ and $k' = i-1+k$.
So $W = P_1 P_2 R^k = P_1 P_2' (Y P_2')^{i-1} R^k = P_1 P_2' R^{k'}$.
Since $|YZ| = |Y P_2'| = |R|$ and $|Y| > 0$, it follows that $|P_2'| < |R|$.
Since $P_2 = Z (YZ)^{i-2}Y Z$, $Z = P_2'$ is a palindrome.
\end{proof}

\begin{lemma}[Lemma~C4 of~\cite{Galil-1978}]
\label{lem:HLF-extremal-double-palindromes}
If a word $X_1 X_2 = Y_1 Y_2 = Z_1 Z_2$ with $X_2$, $Y_1$, $Y_2$, $Z_1$ palindromes.
Then $X_1$ and $Z_2$ are palindromes.
\end{lemma}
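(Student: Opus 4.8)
The statement is quoted verbatim from~\cite{Galil-1978}, so in the present paper the ``proof'' is really a citation. Should one wish to reprove it from the lemmas assembled above, the following approach works; throughout we use the non-degeneracy and ordering conventions implicit in~\cite{Galil-1978} (all six factors nonempty and $|X_1|\le|Y_1|\le|Z_1|$, equivalently $|Z_2|\le|Y_2|\le|X_2|$). First I would cut the work in half by symmetry: reversing $W$ turns the factorization $X_1X_2$ (with palindromic suffix $X_2$) into a factorization of $\Rev{W}$ with palindromic \emph{prefix} $X_2$, turns $Z_1Z_2$ into one with palindromic suffix $Z_1$, and preserves both the shape of $Y_1Y_2$ and the chain ordering. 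Hence it suffices to prove that $X_1$ is a palindrome, and ``$Z_2$ is a palindrome'' then follows by applying the result to $\Rev{W}$. The degenerate cases $|X_1|\in\{0,|Y_1|\}$ and $|Y_1|=|Z_1|$ are immediate.

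Write $Y_1=X_1M$ and $Z_1=Y_1N$, so $W=X_1MNZ_2$ with $X_2=MNZ_2$ and $Y_2=NZ_2$; note $|M|=|Y_1|-|X_1|=|X_2|-|Y_2|$. Two periodicities fall out of palindromicity and nesting: since $Y_2$ is a palindrome and a suffix of the palindrome $X_2$ it is also a \emph{prefix} of $X_2$, hence a border, so $X_2$ has period $|M|$; symmetrically $Y_1$ is a border of $Z_1$, so $Z_1$ --- and hence its prefix $Y_1$ --- has period $|N|$. (Where a more explicit form is wanted, Lemma~\ref{lem:HLF-in-between-palin} exhibits $X_2$ and $Z_1$ as $(\rho\sigma)^k\rho$ with $\rho,\sigma$ palindromes.) Next I would reformulate the goal: using only that $Y_1=X_1M$ is a palindrome, $X_1$ is a palindrome if and only if $Y_1$ has period $|M|$. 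Indeed, if $Y_1$ has period $|M|$ then $X_1$ equals the length-$|X_1|$ suffix of $Y_1$, which is $\Rev{X_1}$; conversely $X_1=\Rev{X_1}$ turns $Y_1=\Rev{Y_1}=\Rev{M}\,X_1$ into $X_1M=\Rev{M}\,X_1$, which says exactly that $Y_1$ has period $|M|$.

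So it remains to upgrade the known period $|N|$ of $Y_1$ to the period $|M|$. Consider the central factor $C=MN$: it is a prefix of $X_2$ and a suffix of $Z_1$, has length $|M|+|N|$, and therefore inherits period $|M|$ from $X_2$ and period $|N|$ from $Z_1$. Since $|M|+|N|\le|C|$, Lemma~\ref{lem:HLF-strong-periodicity} gives $C$ the period $g:=\gcd(|M|,|N|)$, hence also the period $|M|$. Thus $M$, the suffix of $Y_1$, is already $g$-periodic, and the only remaining task is to propagate this period across the prefix $X_1$. For this I would use the palindromic reflection $i\mapsto|Y_1|+1-i$ of $Y_1$ (and, when $X_1$ is longer than $M$, the reflection $i\mapsto|Z_1|+1-i$ of $Z_1$): each carries the positions of $X_1$ into positions already known to be $g$-periodic, and a short computation checks the ``seam'' between $X_1$ and $M$. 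The main obstacle --- and the genuinely delicate point --- is exactly this last step: the image of $X_1$ under a single reflection need not lie entirely inside $C$, so one must iterate the two reflections, which jointly realize translations by multiples of $g$, until every position of $X_1$ is covered, using the ordering $|X_1|\le|Y_1|\le|Z_1|$ to guarantee that this terminates with $X_1$ aligned to a palindromic cut of $Y_1$; equivalently one matches $X_1$ directly against the explicit $(\rho\sigma)^k\rho$-decompositions of $Y_1$ and $Z_1$ supplied by Lemma~\ref{lem:HLF-in-between-palin}, with composite period words handled by passing to their primitive roots.
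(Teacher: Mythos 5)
The paper offers no proof of this lemma --- it is imported verbatim from Galil and Seiferas --- so the only thing to assess is whether your reconstruction is sound, and it is not quite. The most serious issue is the hypothesis you supply. You are right that the statement needs an ordering convention, but the non-strict ordering $|X_1|\le|Y_1|\le|Z_1|$ together with nonemptiness of all six factors is not sufficient, and the case $|Y_1|=|Z_1|$ that you wave off as ``immediate'' is precisely where the claim fails: take $W=\mathtt{baaba}$ with $X_1=\mathtt{ba}$, $X_2=\mathtt{aba}$, $Y_1=Z_1=\mathtt{baab}$, $Y_2=Z_2=\mathtt{a}$. All six factors are nonempty, $X_2$, $Y_1$, $Y_2$, $Z_1$ are palindromes, and $|X_1|\le|Y_1|\le|Z_1|$, yet $X_1=\mathtt{ba}$ is not a palindrome. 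What is true (and what the argument you sketch actually uses) is the strict version $|X_1|<|Y_1|<|Z_1|$, i.e., your $M$ and $N$ both nonempty; with $N$ empty the factor $C=MN$ carries no information and your Fine--Wilf step says nothing. (This also means the lemma as printed in the paper, with no ordering at all, is not literally correct; its applications, e.g.\ Lemma~\ref{lem:HLF-dp-finish}, are phrased disjunctively in a way that absorbs the boundary cases.)

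Second, the step you yourself flag as ``the genuinely delicate point'' --- propagating the period $g$ from $M$ across $X_1$ by iterating two palindromic reflections --- is exactly the step you do not carry out, so as written this is not a complete proof. It can be closed with no reflections at all. Once Fine--Wilf gives $C=MN$ the period $g=\gcd(|M|,|N|)$, the length-$|N|$ suffix $N$ of $Z_1$ is $g$-periodic with $g$ dividing $|N|$; combined with the period $|N|$ of $Z_1$ (from its border $Y_1$), this forces $Z_1$ to have period $g$ throughout, because every position of $Z_1$ is congruent modulo $|N|$ to a position among its last $|N|$ letters, and within those letters positions congruent modulo $g$ carry equal letters. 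Hence the prefix $Y_1$ of $Z_1$ has period $g$, hence period $|M|$ (a multiple of $g$ not exceeding $|Y_1|$), and your own --- correct --- reformulation ``$X_1$ is a palindrome iff $Y_1$ has period $|M|$'' finishes. The symmetry reduction, the border/period observations, and the Fine--Wilf application are all fine; what needs repair is the mis-stated degenerate case and the missing final step.
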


\begin{lemma}
\label{lem:HLF-small-candidate-rep-set}
Let $R$ be a primitive word and let $W = LR^r$.
There is a set of integers $H$ with $|H| = O(\log{|W|})$ such that $W = P_1 P_2 R^k$ if and only if it does so with $|LR^h| \leq |P_1| \leq |LR^{h+1}|$ for some $h \in H$.
Moreover, given $|R|$ and the prefix palindrome factorization of $W$, $H$ can be computed in $O(\log{|W|})$ time.
\end{lemma}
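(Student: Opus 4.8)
The plan is to show that the admissible choices of $P_1$ are exactly the prefix palindromes of $W$ whose length, reduced modulo $|R|$, lands in a fixed set depending only on $R$, and then to intersect this modular condition with the arithmetic-progression structure of prefix palindromes given by Lemma~\ref{lem:HLF-prefix-palin-factor}. Throughout we may assume $R \not\Suffix L$ (otherwise absorb the trailing copies of $R$ into $R^r$ and re-index $H$ at the end).

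First I would treat the principal case $|P_1| \geq |L|$, so that $P_1$ is a prefix palindrome of $W$ and the suffix of $W$ following $P_1$ lies entirely inside the block $R^r$; the remaining case $|P_1| < |L|$ is dealt with by running the identical argument on the prefix $L$, contributing at most $O(\log|W|)$ more windows (and is moot if the calling context guarantees $|P_1| \geq |L|$). Fix such a $P_1$ and write $|P_1| - |L| = q|R| + s$ with $0 \leq s < |R|$. By Lemma~\ref{lem:HLF-palindrome-pump}, a valid decomposition $W = P_1 P_2 R^k$ exists if and only if one exists with $|P_2| < |R|$; in that event the part of $W$ after $P_1$ is forced to equal $R[s+1..|R|]$ followed by a power of $R$, so the only remaining requirement is that the word $R[s+1..|R|]$ be a palindrome. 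Hence a valid decomposition with this particular $P_1$ exists precisely when $s \in \mathcal{S}$, where $\mathcal{S} = \{0\} \cup \{ s : 1 \leq s \leq |R|-1,\ R[s+1..|R|] \text{ is a palindrome} \}$ depends only on $R$.

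Next I would invoke Lemma~\ref{lem:HLF-prefix-palin-factor}: the prefix palindromes of $W$ of length at least $|L|$ split into $O(\log|W|)$ arithmetic progressions (``blocks''), the $i$th with common difference $d_i$ equal to its primitive period. Along one block the quantity $(|P_1| - |L|) \bmod |R|$ runs through an arithmetic progression modulo $|R|$, so every block that contains a valid $P_1$ has a well-defined shortest valid $P_1$; let $h_i$ be the index with $|LR^{h_i}| \leq |P_1| \leq |LR^{h_i+1}|$ for that shortest element, and set $H = \{h_i\}$ (together with the analogous $O(\log|W|)$ indices from the $|P_1| < |L|$ case). Then $|H| = O(\log|W|)$, and the equivalence is immediate: ``if'' is trivial, and for ``only if'', any decomposition $W = P_1 P_2 R^k$ has $P_1$ a prefix palindrome lying in some block, which then contains a shortest valid element, lying in the window indexed by the corresponding $h_i \in H$.

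For the running time, each block is described by a few integers (its first length, $d_i$, and its last length), and computing its shortest valid $P_1$ reduces to finding the least term of an explicit arithmetic progression whose residue modulo $|R|$ lies in $\mathcal{S}$; since $\mathcal{S}$ is itself a union of $O(\log|R|)$ arithmetic progressions modulo $|R|$ --- the reflections $\ell \mapsto |R|-\ell$ of the palindromic-suffix lengths of $R$, furnished by the suffix-palindrome analogue of Lemma~\ref{lem:HLF-prefix-palin-factor} --- each block costs only a bounded number of linear-congruence solves, for $O(\log|W|)$ time in all. I expect the main obstacle to be precisely this last step: arguing that all the needed modular data is recoverable within the stated budget from only $|R|$ and the prefix palindrome factorization of $W$ (the palindromic structure of the factor $R$ must be read off from it), and then discharging the degenerate sub-cases --- $s = 0$, $|P_1| = |W|$, the absorption of trailing $R$'s into $L$, and the $|P_1| < |L|$ branch --- without inflating $|H|$ or the time bound.
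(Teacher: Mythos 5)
Your structural argument is sound and in fact routes around the paper's main tool: you characterize exactly which prefix palindromes $P_1$ admit a completion (those with $(|P_1|-|L|) \bmod |R|$ in the set $\mathcal{S}$ of lengths $s$ for which $R[s+1..|R|]$ is a palindrome), and then take one window per arithmetic-progression block of Lemma~\ref{lem:HLF-prefix-palin-factor}, namely the window containing that block's shortest valid $P_1$. This yields $|H| = O(\log{|W|})$ and the stated equivalence with no appeal to Fine and Wilf. The paper never decides validity at all: it shows via Lemma~\ref{lem:HLF-strong-periodicity} and the primitivity of $R$ and of the $X_i$ that each run $X_i^{r_i}$ can straddle only $O(1)$ of the boundaries $|LR^h|$, and takes $H$ to be every window touched by some run (plus $h=0$ when $|X_i|=|R|$). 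That over-approximation is the point: it makes $H$ a function of purely positional data ($|L|$, $|R|$, and the pairs $|X_i|$, $r_i$), computable in $O(1)$ time per run.

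The gap in your version is the computability clause, which is half the lemma and which you flag but do not close. Your $H$ depends on $\mathcal{S}$, i.e., on the palindromic suffix structure of $R$, whereas the lemma supplies only $|R|$ and the prefix palindrome factorization of $W$; there is no evident $O(\log{|W|})$-time way to read off the suffix palindromes of the factor $R = W[|L|+1..|L|+|R|]$ from that input, and computing them from scratch costs $\Theta(|R|)$, which may be $\Theta(|W|)$. Even granting $\mathcal{S}$ as a union of $O(\log{|R|})$ arithmetic progressions, finding the least element of a given finite progression modulo $|R|$ that lands in $\mathcal{S}$ is not ``a bounded number of linear-congruence solves'': it is one two-variable congruence per progression of $\mathcal{S}$, with range constraints on both variables, so the natural implementation costs $\Omega(\log{|R|})$ per block and $\Omega(\log^2{|W|})$ overall, already exceeding the stated bound. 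To repair the argument you would either have to add the suffix palindrome factorization of $R$ to the lemma's inputs and accept a weaker time bound (which would propagate into Lemma~\ref{lem:HLF-double-palin-side-fast} and Theorem~\ref{thm:HLF-algorithm}), or fall back on the paper's periodicity argument, which is exactly what makes the $O(\log{|W|})$ computation possible.
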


\begin{proof}
We may assume $|P_2| < |R|$ by Lemma~\ref{lem:HLF-palindrome-pump}.
Consider the prefix palindrome factorization of $W$ as described in Lemma~\ref{lem:HLF-prefix-palin-factor}.
Any solution $P_1$ ends with one of the repeating subwords $X_i$ of the factorization.
There are three cases: $|X_i| < |R|$, $|X_i| > |R|$, and $|X_i| = |R|$.

\textbf{Case 1: $\bm{|X_i| < |R|}$.}
We claim that if $|X_i| < |R|$, then $X_i^{r_i}$ overlaps $R^r$ in at most two repetitions of $R$ (and there are at most three values of $h$).
Assume, for the sake of contradiction, that $R^2$ is a subword of $|X_i^{r_i}|$.
Then $R^2$ is a word of length at least $|X_i| + |R|$ with periods of length $|X_i|$ and $|R|$.
So by Lemma~\ref{lem:HLF-strong-periodicity}, $R$ has a period of length ${\rm gcd}(|X_i|, |R|) \leq |X_i| < |R|$, a contradiction.

\textbf{Case 2: $\bm{|X_i| > |R|}$.}
We claim that if $|X_i| > |R|$, then $X_i$ cannot repeat in $R^r$ (and there are at most two values of $h$).
Assume, for the sake of contradiction, that $X_i^2$ is a subword of $R^r$.
Then by Lemma~\ref{lem:HLF-strong-periodicity}, $X_i$ has a period of length ${\rm gcd}(|X_i|, |R|) \leq |R| < |X_i|$.
So by Lemma~\ref{lem:HLF-in-between-palin}, the factorization given was not a prefix palindrome factorization, a contradiction.

\textbf{Case 3: $\bm{|X_i| = |R|}$.}
We claim that if $|X_i| = |R|$, then $h = 0$ suffices.
Suppose $|LR^h| \leq |P_1| \leq |LR^{h+1}|$ for some $h \geq 1$. 
By Lemma~\ref{lem:HLF-in-between-palin}, $P_1$ has a period of length $|X_i| = |R|$.
Let $Y$, $Z$ be words such that $YZ$ is a period of $P_1$ and $|R| - |Z| = |P_2|$.
So $P_1 = (YZ)^p Y$ for some $p \geq 1$ and $Y$, $Z$ are palindromes.

Since $LR^{h+1} = (YZ)^p Y P_2$ and $|YZYP_2| = 2|R|$, $YZ = YP_2 = R$. 
So $LR = (YZ)^{p-h} Y P_2 = P_1' P_2$, where $P_1' = (YZ)^{p-h} Y$ and thus is a palindrome.
So there exists a $P_1'$ with $|LR^0| \leq |P_1'| \leq |LR^1|$. 

\textbf{Computing $H$.}
The value of $h$ in case 3 is always $0$.
For case~1, use the values of $h$ such that $X_i^{r_i}$ contains the last letter of $LR^h$ or $LR^{h+1}$.
For case~2, use the values of $h$ such that the prefix palindrome ending at the unique repetition of $X_i$ has length between $|LR^h|$ and $|LR^{h+1}|$.
\end{proof}

\begin{lemma}
\label{lem:HLF-double-palin-side-fast}
Let $R$ be a primitive word and let $W = LR^r$.
Assume that $|R|$ and the prefix and suffix palindrome factorizations of $W$ are given.
Then it can be decided in $O(\log{|W|})$ time if $W = P_1 P_2 R^k$ with $P_1$, $P_2$ palindromes and $|L| \leq |P_1|$.
\end{lemma}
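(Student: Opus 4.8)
The plan is to assemble the pieces already in hand. First I would clear away trivialities: if $r = 0$ then $W = L$ and the only candidate is $P_1 = W$, so it suffices to test whether $W$ is a palindrome; and by moving any trailing copies of $R$ out of $L$ (while retaining the lower bound $|P_1| \geq |L|$) I may assume $R$ is not a suffix of $L$, so that Lemma~\ref{lem:HLF-palindrome-pump} applies and it is enough to look for decompositions that additionally satisfy $|P_2| < |R|$. Next I would invoke Lemma~\ref{lem:HLF-small-candidate-rep-set} to compute, in $O(\log|W|)$ time, the set $H$ of $O(\log|W|)$ integers such that a valid decomposition exists iff one exists with $|LR^h| \leq |P_1| \leq |LR^{h+1}|$ for some $h \in H$; recall that each $h \in H$ is attached to a specific repeating block $X_i$ of the prefix palindrome factorization of $W$ (Lemma~\ref{lem:HLF-prefix-palin-factor}). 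The task then reduces to deciding each of these $O(\log|W|)$ windows in $O(1)$ time.

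Fix a window $h$ and set $V_h := W[1..|LR^{h+1}|]$ (the case $h = r$, if it occurs, just asks whether $W$ is a palindrome). I would first observe the collapse: for $|P_1|$ in the window, the equation $|P_1| + |P_2| + k|R| = |W|$ together with $|L| \leq |P_1|$ and $0 \leq |P_2| < |R|$ forces either $k = r - h - 1$ and $P_1 P_2 = V_h$, or the boundary case $|P_1| = |LR^h|$, $P_2$ empty (which is shared with window $h-1$). In the main case $W[\,|V_h|+1..|W|\,] = R^{k}$ automatically, and $P_2$, being a suffix of $V_h = LR^{h+1}$ of length below $|R|$, is simply the corresponding suffix of $R$. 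Hence window $h$ admits a valid decomposition iff $V_h$ factors into two palindromes with the first of length in $(\,|LR^h|,|LR^{h+1}|\,]$, i.e.\ iff the palindrome-split set $D(U) := \{\, i : U[1..i] \text{ and } U[i+1..|U|] \text{ are both palindromes}\,\}$ (for $U = V_h$) meets that interval.

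The structural heart is that $D(U)$ is always a single arithmetic progression (possibly empty): comparing the equal-length blocks in $U[i+1..|U|]\,U[1..i] = \widetilde{U} = \widetilde{U[i+1..|U|]}\,\widetilde{U[1..i]}$ shows that $i \in D(U)$ exactly when the cyclic rotation of $U$ by $i$ positions equals $\widetilde{U}$, and the set of such rotations is a coset of the group of rotations fixing $U$, hence an arithmetic progression with common difference the length of the primitive root of $U$. For $U = V_h = LR^{h+1}$ with $h \geq 1$ this difference is at least $|R|$ --- a shorter period of $V_h$ would give the suffix $R^{h+1}$ two short periods and, via Fine--Wilf (Lemma~\ref{lem:HLF-strong-periodicity}), make $R$ composite --- so $D(V_h)$ meets the length-$|R|$ window in $O(1)$ points, locatable in $O(1)$ time from the prefix palindrome factorization of $W$ (which supplies the governing block $X_i$ and hence the period) and the suffix palindrome factorization of $W$ restricted to lengths below $|R|$ (which, since $R$ is a suffix of $W$, lists exactly the palindromic suffixes of $R$ available to $P_2$), using Lemma~\ref{lem:HLF-extremal-double-palindromes} (Galil's C4) to reconcile overlapping candidate factorizations; the remaining $O(1)$ windows (notably $h = 0$, where $V_h$ may be highly periodic) are dispatched directly since there a nonempty $D(V_h)$ already meets the wide window. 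Summing $O(1)$ over $|H| = O(\log|W|)$ windows, plus the $O(\log|W|)$ preprocessing of Lemma~\ref{lem:HLF-small-candidate-rep-set}, yields the bound. I expect the main obstacle to be precisely this last step: turning the single-progression description of $D(V_h)$ into a genuinely $O(1)$-time per-window decision using only the compressed ($O(\log|W|)$-space) palindrome-factorization data, which forces a careful coordination of the periodicity lemmas (Lemmas~\ref{lem:HLF-in-between-palin} and~\ref{lem:HLF-strong-periodicity}), the block structure of Lemma~\ref{lem:HLF-prefix-palin-factor}, the primitivity of $R$, and Galil's C4 lemma.
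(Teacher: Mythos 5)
Your skeleton matches the paper's: clear the degenerate cases, invoke Lemma~\ref{lem:HLF-palindrome-pump} to force $|P_2|<|R|$, reduce via Lemma~\ref{lem:HLF-small-candidate-rep-set} to $O(\log|W|)$ windows, and for each window decide whether $V_h=LR^{h+1}$ splits into two palindromes with split point in an interval of length $|R|$. Your structural observation --- that the split-point set $D(U)$ is the coset $\{i:\sigma^i(U)=\Rev{U}\}$ of the rotation stabilizer of $U$, hence an arithmetic progression whose common difference is the primitive-root length, which for $h\geq 1$ is at least $|R|$ by Fine and Wilf --- is correct and is not in the paper; it does show each window contains $O(1)$ split points.

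The gap is the step you yourself flag as the main obstacle: knowing $D(V_h)$ meets the window in at most two points does not tell you \emph{which} points, and locating them still amounts to intersecting the prefix-palindrome lengths of $W$ lying in the window with the set $\{|V_h|-\lambda : \lambda \text{ a suffix-palindrome length} < |R|\}$ --- i.e., the original per-window decision problem. Your appeal to ``Galil's C4 to reconcile overlapping candidates'' gestures at the right tool but is not an algorithm, and the $h=0$ (highly periodic) window is likewise not actually dispatched: ``a nonempty $D(V_0)$ meets the wide window'' only converts the question into ``is $D(V_0)$ nonempty,'' which is again the same problem. The paper closes exactly this hole by using Lemma~\ref{lem:HLF-extremal-double-palindromes} \emph{as} the algorithm: in each window it suffices to test two extremal candidates, namely $P_1$ equal to the longest prefix palindrome of $LR^{h+1}$ of length at least $|LR^h|$, and $P_2$ equal to the longest suffix palindrome of length less than $|R|$ (which is the \emph{same} palindrome for every $h$, since every $LR^{h+1}$ ends in $R$); two two-finger scans of the compressed prefix and suffix palindrome factorizations against the sorted list of windows then process all $O(\log|W|)$ windows in $O(\log|W|)$ total time. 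Without this (or an equivalent) reduction to explicitly computable extremal candidates, your per-window cost is not $O(1)$ and the claimed bound does not follow.
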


\begin{proof}
We may assume $|P_2| < |R|$ by Lemma~\ref{lem:HLF-palindrome-pump}.
First, use Lemma~\ref{lem:HLF-small-candidate-rep-set} to compute a $O(\log{|W|})$-sized candidate set of integers $H$ such that a solution exists if and only if $LR^{h+1} = P_1 P_2$ with $|LR^{h}| \leq |P_1| \leq |LR^{h+1}$ for some $h \in H$.
By Lemma~\ref{lem:HLF-extremal-double-palindromes}, it suffices to check for such solutions with at least one of the following types of palindromes:
\begin{itemize}
\item The longest prefix palindrome of $LR^{h+1}$ with length at least $|LR^h|$.
\item The longest suffix palindrome of $LR^{h+1}$ with length less than $|R|$.
\end{itemize}

Compute the longest prefix palindromes of $LR^{h+1}$ for all values of $h$ in $O(\log{|W|})$ total time using a two-finger scan of (1) the prefix palindrome factorization of $W$ and (2) the values of $h$.
Use a second two-finger scan of (1) these prefix palindromes and (2) the suffix palindrome factorization of the last $|R|$ letters of the suffix palindrome factorization of $LR^r$ to search for a solution $P_1, P_2$.

The longest suffix palindrome of $LR^{h+1}$ (with length less than $R$) is invariant for $h$ and can be computed in $O(\log{|W|})$ time, using the last $|R|$ letters of the suffix palindrome factorization of $LR^r$.
Call the length of this palindrome $\lambda$.
Use a scan of the prefix palindrome factorization to determine if a prefix palindrome of $W$ has length $|LR^h|-\lambda$ for some value of $h$.  
\end{proof}

Next, we develop a second result that is combined with the previous lemma to obtain Lemma~\ref{lem:HLF-half-turn-subroutine}.

\begin{lemma}
\label{lem:HLF-period-common-prefix}
Let $L$ and $R$ be words such that $L \not \Prefix R$.
Let $A_i$ be the longest common prefix of $L^{l-i} R$ and a word $U$.
Let $k = l - \lceil |A_0|/|L| \rceil$.
Then:
\begin{itemize}
\item For all $i$ with $0 \leq i \leq k$, $A_i = A_0$ and $|A_i| < |L^{l-k}|$.
\item For all $i$ with $k+2 \leq i \leq l$, $|A_i| - |L^{l-i}| = |A_{k+2}| -  |L^{l-(k+2)}|$.
\end{itemize}  
\end{lemma}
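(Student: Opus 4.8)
The plan is to follow, as $i$ grows from $0$ to $l$, the position at which $L^{l-i}R$ first disagrees with $U$, and to see that this position migrates out of the repeated block $L^{l-i}$ and into the trailing copy of $R$ exactly at the threshold $i=k$. Write $p=|L|$ and $m=\lceil|A_0|/|L|\rceil$, so $k=l-m$; I will work in the regime $0<|A_0|\le|L^l|$ (which gives $0\le k<l$; the remaining cases make one or both claims vacuous or go through the same way), and then $|L^{m-1}|<|A_0|\le|L^m|$. The one structural fact used throughout: each $L^{l-i}R$ has $L^{l-i}$ as a prefix, so for $i'\le i$ the words $L^{l-i}R$ and $L^{l-i'}R$ agree on their first $|L^{l-i}|$ letters. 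Write $\mathrm{lcp}(X,Y)$ for the longest common prefix of $X$ and $Y$.

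For the first bullet, fix $0\le i\le k$; then $l-i\ge m$ and $|L^{l-i}|\ge|L^m|\ge|A_0|$. Since $A_0$ is a prefix of $L^lR$ of length $\le|L^{l-i}|\le|L^l|$, it is a prefix of $L^{l-i}$ and hence of $L^{l-i}R$; being also a prefix of $U$, this gives $|A_i|\ge|A_0|$. Conversely, $L^{l-i}R$ and $L^lR$ agree on their first $|L^{l-i}|\ge|A_0|+1$ letters, and $L^lR$ disagrees with $U$ at position $|A_0|+1$ by maximality of $A_0$; hence $L^{l-i}R$ also disagrees with $U$ there, so $|A_i|\le|A_0|$. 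Thus $A_i=A_0$ and $|A_i|=|A_0|<|L^m|=|L^{l-k}|$. (The borderline case $|L|\mid|A_0|$ with $i=k$ has to be inspected directly, since that is where this last strict inequality is tight.)

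For the second bullet, fix $i\ge k+1$; then $l-i\le m-1$, so $|A_0|>|L^{m-1}|\ge|L^{l-i}|$. As $A_0$ is a prefix of both $L^lR$ and $U$ and $|A_0|>|L^{l-i}|$, the length-$|L^{l-i}|$ prefix of $U$ equals that of $L^lR$, namely $L^{l-i}$; write $U=L^{l-i}V_i$. Matching $L^{l-i}R$ against $U$ now factors through the block $L^{l-i}$: $A_i=L^{l-i}\cdot\mathrm{lcp}(R,V_i)$. It therefore suffices to show $|\mathrm{lcp}(R,V_i)|$ is constant for $i\ge k+2$. Increasing $i$ by one puts the discarded copy of $L$ at the head of the tail, i.e.\ $V_{i+1}=LV_i$. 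Now $L\not\Prefix R$ forces $R$ and $L$ to disagree among their first $|L|$ letters (or $R$ to be a proper prefix of $L$), so for \emph{every} word $W$ one has $\mathrm{lcp}(R,LW)=\mathrm{lcp}(R,L)=:E$ with $|E|<|L|$ independent of $W$. Hence $\mathrm{lcp}(R,V_j)=\mathrm{lcp}(R,LV_{j-1})=E$ for all $j\ge k+2$, and so $|A_j|-|L^{l-j}|=|E|=|A_{k+2}|-|L^{l-(k+2)}|$.

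I expect the real difficulty to be the one-step-wide transition zone $i\in\{k,k+1\}$, where the two arguments above meet but do not overlap: at $i=k$ the terminating mismatch still lies in the $L$-block, and from $i=k+1$ on it lies inside $R$, but at $i=k+1$ the part of $U$ agreeing with the $L$-pattern may overhang $L^{l-(k+1)}$ by fewer than $|E|$ letters, so $|A_{k+1}|-|L^{l-(k+1)}|$ need not equal $|E|$ yet. That is precisely why the second bullet starts at $k+2$, not at $k+1$. Beyond isolating this zone, the argument is bookkeeping with the periodicity of $L$ together with the single hypothesis $L\not\Prefix R$.
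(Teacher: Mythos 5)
Your proof takes essentially the same route as the paper's: in both arguments the point is that for small $i$ the first mismatch between $L^{l-i}R$ and $U$ falls inside the block $L^{l-i}$, so $A_i$ is just the (constant) longest common prefix of $L^{l}$ and $U$, while for large $i$ the whole block $L^{l-i}$ is a prefix of $U$ and the overhang of $A_i$ past $L^{l-i}$ stabilizes to ${\rm lcp}(R,L)$ because $L \not\Prefix R$. Your treatment of the second bullet via $V_{i+1}=LV_i$ and ${\rm lcp}(R,LW)={\rm lcp}(R,L)$ is correct and, if anything, cleaner than the paper's decomposition $R=R_1R_2$; your diagnosis of why $i=k+1$ must be excluded also matches the paper's. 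The one substantive difference is the order of quantification: the paper defines $k$ intrinsically as the largest index with $|A_k|<|L^{l-k}|$, proves both bullets for that $k$, and only afterwards identifies it with the closed form $l-\lceil |A_0|/|L|\rceil$ (its ``Property 3''), whereas you start from the closed form.

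The borderline case you flag ($|L|$ divides $|A_0|$, $|A_0|>0$) cannot be ``inspected directly'' to a positive conclusion: with $k$ given by the closed form, the first bullet is simply false there. Take $L=\Up$, $R=\Right$, $l=2$, $U=\Up\Right\Right$. Then $A_0=\Up$, so $k=2-\lceil 1/1\rceil=1$, but $A_1={\rm lcp}(\Up\Right,\Up\Right\Right)=\Up\Right\neq A_0$ and $|A_1|=2\not<|L^{l-k}|=1$. The same example defeats the paper's Property 3: its inequality $k<l-|A_0|/|L|\leq k+1$ only determines $k$ up to this off-by-one when $|A_0|/|L|$ is an integer, and here the intrinsic $k$ is $0$ while the closed form gives $1$. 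So the gap you isolated is real, but it is a defect of the stated closed form rather than of your argument: the robust statement uses the intrinsic $k$ (the largest index whose mismatch still lies inside the $L$-block), which equals $l-\lceil|A_0|/|L|\rceil$ or that value minus one, and either choice still supports the $O(1)$-time computation needed in Lemma~\ref{lem:HLF-half-turn-subroutine}. Your proof covers every case in which the closed form is actually correct, which is all the lemma can deliver as written.
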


\begin{proof}
Let $k$ be the maximum $k$ such that $|A_k| < |L^{l-k}|$.
We show that this value of $k$ has the desired properties, including that $k = l - \lceil |A_0|/|L| \rceil$.

\textbf{Property~1.}
Let $0 \leq i \leq k$.
Since $A_i$ is a prefix of $L^{l-i} R$ and $|A_i| < |L^{l-i}|$, $A_i$ is the longest common prefix of $L^l$ and $U$.
This is true for all choices of $A_i$, and thus all $A_i$ are equal.

\textbf{Property~2.}
By definition, $|A_{k+1}| \geq |L^{l-(k+1)}|$ and so $L^{l-(k+1)}$ is a prefix of $U$.
Since $L \not \Prefix R$, the length of $A_{k+2}$, the longest common prefix of $L^{l-(k+2)} R$ and $U$, must be less than $|L^{l-(k+1)}|$.
So $|A_{k+2}| < |L^{l-(k+1)}|$ and $L$ is a period of $A_{k+2}$.

Let $R_1$, $R_2$ be words such that $R = R_1 R_2$ and $A_{k+2} = L^{l-(k+2)} R_1$.
Then $|R_1| < |L|$ and so $R_1$ is the longest common prefix of $R$ and $L$.

So for all $i \geq k+2$, the longest common prefix of $L^{l-i} R$ and $L^{l-(k+1)}$ is $L^{l-i} R_1$.
Moreover, since $|L^{l-i} R_1| < |L^{l-(k+1)}|$ and $L^{l-(k+1)}$ is a prefix of $U$, the longest common prefix of $L^{l-i} R$ and $U$ is also $L^{l-i} R_1$.

\textbf{Property 3.}
Finally, we prove that $k = l - \lceil |A_0|/|L| \rceil$.
Since $|A_0| = |A_i| < |L^{l-i}|$ for all $0 \leq i \leq k$, it follows that $|A_0| < |L^{l-k}| = |L|(l-k)$.
Then by algebra, $k < l - |A_0|/|L|$.

Since $A_{k+1} \geq |L^{l-(k+1)}|$, it must be that $U$ has a prefix $L^{l-(k+1)}$.
So $|A_0| = |A_k| \geq |L^{l-(k-1)}| = |L|(l-k-1)$.
Then by algebra $k + 1 \geq l - |A_0|/|L|$.
So $k < l - |A_0|/|L| \leq k+1$.
\end{proof}

\begin{lemma}
\label{lem:HLF-dp-finish}
Let $W$ be a word and $l_1$, $l_2$ integers.
Assume the prefix and suffix palindrome factorizations of $W$ are given.
Then it can be decided in $O(\log{|W|})$ time if there exist palindromes $P_1$, $P_2$ such that $W = P_1 P_2$ with $|P_1| \geq l_1$, $|P_2| \geq l_2$. 
\end{lemma}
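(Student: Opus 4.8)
The plan is to rephrase the statement in terms of cut positions and then shrink the search to a constant number of candidates. Write $n = |W|$. A decomposition $W = P_1 P_2$ into palindromes with $|P_1| \ge l_1$ and $|P_2| \ge l_2$ is the same thing as a position $c$ with $l_1 \le c \le n - l_2$ for which $W[1..c]$ and $W[c+1..n]$ are both palindromes; equivalently, $c$ must be the length of some prefix palindrome of $W$ lying in $[l_1, n-l_2]$ while $n - c$ must be the length of a suffix palindrome. After dispatching the degenerate cases ($l_1 \le 0$, $l_2 \le 0$, or $l_1 + l_2 > n$) directly, I would assume $1 \le l_1$, $1 \le l_2$, and $l_1 + l_2 \le n$, so that in particular $n - l_2 \ge 1$ and the length-$1$ prefix palindrome is available.

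First I would show it suffices to test two specific candidate cuts. Let $c_1$ be the length of the longest prefix palindrome of $W$ of length at most $n - l_2$, and let $c_2 = n - \ell$ where $\ell$ is the length of the longest suffix palindrome of $W$ of length at most $n - l_1$. Given any valid cut $c$, the prefix $W[1..c]$ is a prefix palindrome of length $\le n - l_2$, so $c \le c_1 \le n - l_2$; hence $c_1 \ge c \ge l_1$, giving $c_1 \in [l_1, n-l_2]$ and $n - c_1 \ge l_2$. Now apply Lemma~\ref{lem:HLF-extremal-double-palindromes} to the decomposition $W = W[1..c]\,W[c+1..n]$ (into two palindromes, used for both of that lemma's first two decompositions) and the decomposition $W = W[1..c_1]\,W[c_1+1..n]$ (whose prefix is a palindrome, used for the third): since $c \le c \le c_1$ its length-ordering hypothesis is met, and the conclusion yields that $W[c_1+1..n]$ is a palindrome as well. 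Thus $c_1$ is itself a valid cut, and a symmetric argument (prefixes and suffixes interchanged, using the $c_2$-decomposition) shows $c_2$ is valid whenever any valid cut exists. Conversely, if $c_1 < l_1$ then no prefix palindrome lies in $[l_1, n-l_2]$ and no valid cut exists. Strictly speaking $c_1$ alone suffices; I would nonetheless test $c_2$ as well so as not to rely on the exact length-ordering convention of Lemma~\ref{lem:HLF-extremal-double-palindromes}, mirroring the two-candidate argument in the proof of Lemma~\ref{lem:HLF-double-palin-side-fast}.

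It then remains to execute this in $O(\log n)$ time from the given data. By Lemma~\ref{lem:HLF-prefix-palin-factor} the prefix palindrome factorization $W = X_1^{r_1} X_2^{r_2} \cdots X_m^{r_m} Q$ has $m = O(\log n)$ blocks, and the set of prefix-palindrome lengths is precisely $\bigcup_{i=1}^m \{\, s_{i-1} + j|X_i| : 1 \le j \le r_i \,\}$, where $s_0 = 0$ and $s_i = \sum_{k \le i} r_k |X_k|$ — a union of $O(\log n)$ arithmetic progressions with explicitly known first term, common difference, and length. From this representation the largest element at most $n - l_2$, namely $c_1$, is obtained by an $O(1)$-time calculation per progression, hence $O(\log n)$ total; $\ell$ and $c_2$ come the same way from the suffix palindrome factorization. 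To test whether $c_1$ is valid one checks whether $n - c_1$ is a suffix-palindrome length, i.e.\ whether it lies in one of the $O(\log n)$ progressions encoding those lengths, each check being a single modular comparison; likewise for $c_2$. I would answer yes iff at least one candidate passes.

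The one genuinely delicate point is the reduction to constantly many candidates: it rests entirely on applying Lemma~\ref{lem:HLF-extremal-double-palindromes} with the three palindromic decompositions matched to its hypotheses in the correct order — two copies of the hypothetical-cut decomposition together with the extremal prefix-palindrome decomposition, whose cut dominates. Once that structural fact is in hand, the remaining work is routine arithmetic on the compactly represented sets of prefix- and suffix-palindrome lengths.
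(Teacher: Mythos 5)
Your proposal is correct and follows essentially the same route as the paper: it invokes Lemma~\ref{lem:HLF-extremal-double-palindromes} to reduce to the two extremal candidates (the longest prefix palindrome of length at most $|W|-l_2$ and the longest suffix palindrome of length at most $|W|-l_1$), then locates each candidate and tests its complement by scanning the $O(\log|W|)$-block factorizations. Your write-up merely spells out details the paper leaves implicit, namely the explicit matching of the three decompositions to the hypotheses of Lemma~\ref{lem:HLF-extremal-double-palindromes} and the arithmetic-progression representation of the palindrome-length sets.
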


\begin{proof}
By Lemma~\ref{lem:HLF-extremal-double-palindromes}, such a pair of palindromes exist if and only if there exists such a pair such that either $P_1$ is the longest prefix palindrome of $W$ with $|P_1| \leq |W|-l_2$ or $P_2$ is the longest suffix palindrome of $W$ with $|P_2| \leq |W|-l_1$. 
Scan each factorization in $O(\log{|W|})$ time to find these specific palindromes, and then scan the opposite factorizations for a second palindrome to complete $W$. 
\end{proof}

% Can be obtained combining suffix trees with $O(1)$ least common ancestor.
The following result comes from a trivial modification of Theorem 9.1.1 of~\cite{Gusfield-1997} to allow for circular words, namely giving the concatenation of two copies of a corresponding non-circular word as input, and returning $\infty$ if the output has length more than ${\rm lcm}(|X|, |Y|)$.

\begin{lemma}[Theorem 9.1.1 of~\cite{Gusfield-1997}]
\label{lem:HLF-longest-common-extension}
Two circular words $X$, $Y$ can be preprocessed in $O(|X| + |Y|)$ time to support the following queries in $O(1)$-time:
what is the longest common factor of $X$ and $Y$ starting at $X[i]$ and $Y[j]$?
\end{lemma}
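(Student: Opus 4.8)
The plan is to reduce the cyclic \emph{longest common factor} query to a \emph{longest common extension} query (the longest common prefix of two suffixes) on a pair of ordinary finite words, and then apply the suffix-tree / lowest-common-ancestor construction that is the content of Theorem~9.1.1 of~\cite{Gusfield-1997}.

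\textbf{Step 1 (linearization).} Fix arbitrary non-circular shifts of $X$ and $Y$ and let $N = |X| + |Y|$. Form finite words $X' = X^{p}$ and $Y' = Y^{q}$, where $p$ and $q$ are chosen just large enough that every suffix $X'[i..]$ with $i \in \{1,\dots,|X|\}$ has length greater than $N$, and likewise for $Y'$; this needs $p = O(1 + N/|X|)$ and $q = O(1 + N/|Y|)$ copies, so $|X'|, |Y'| = \Theta(N)$ and the construction is linear. A cyclic factor of $X$ that starts at $X[i]$ is, for lengths up to $N$, literally a prefix of the suffix $X'[i..]$ (with the index taken in $\{1,\dots,|X|\}$), and similarly for $Y$. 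Hence, letting $\ell(i,j)$ be the length of the longest common prefix of $X'[i..]$ and $Y'[j..]$, the answer to the query equals $\ell(i,j)$ when $\ell(i,j) < N$, and — as we argue next — equals $\infty$ when $\ell(i,j) \geq N$.

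\textbf{Step 2 (the $\infty$ case).} Suppose the cyclic readings starting at $X[i]$ and at $Y[j]$ agree on a prefix $P$ of length at least $N = |X|+|Y| \geq |X|+|Y|-\gcd(|X|,|Y|)$. The $X$-side reading is a prefix of $X^{\omega}$, so it has period $|X|$; the $Y$-side reading has period $|Y|$; since the two coincide on $P$, Lemma~\ref{lem:HLF-strong-periodicity} (Fine and Wilf) gives that $P$ has period $g = \gcd(|X|,|Y|)$. But then the $X$-side reading, being $g$-periodic and determined by its first $|X| \ge g$ letters, equals the periodic extension of the length-$g$ prefix of $P$, and the same holds for the $Y$-side reading; as these length-$g$ prefixes coincide, the two infinite readings are identical and the common factor is unbounded. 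Thus reporting $\infty$ whenever $\ell(i,j) \geq N$ is correct — this is the ``length more than $\mathrm{lcm}(|X|,|Y|)$'' cutoff of the cited modification, with the Fine--Wilf bound $N$ used in place of $\mathrm{lcm}(|X|,|Y|)$; any threshold of at least $|X|+|Y|-\gcd(|X|,|Y|)$ is valid, and the padding in Step~1 is sized so that $\ell(i,j)$ can actually witness reaching it.

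\textbf{Step 3 (data structure).} It remains to answer the fixed-string queries $\ell(i,j)$. Build a generalized suffix tree of $\{X', Y'\}$ (equivalently, a suffix array together with its LCP array) in $O(N)$ time, and preprocess it for constant-time lowest-common-ancestor queries (respectively, the LCP array for constant-time range-minimum queries) in a further $O(N)$ time; both preprocessing steps are classical and constitute Theorem~9.1.1 of~\cite{Gusfield-1997}. Then $\ell(i,j)$ is the string depth of the lowest common ancestor of the leaves for $X'[i..]$ and $Y'[j..]$, retrieved in $O(1)$ time, and a single comparison against $N$ produces the reported answer. The only part needing genuine (though routine) care is Step~2 — picking the cutoff and verifying via Fine and Wilf that reaching it forces the cyclic readings to coincide — together with the bookkeeping in Step~1 ensuring the number of copies keeps the total length linear in $|X|+|Y|$; the rest is a black-box application of the cited suffix-structure machinery.
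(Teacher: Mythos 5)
Your proposal is correct and takes essentially the same approach as the paper, which likewise linearizes the circular words by concatenating copies, applies the standard suffix-tree/LCA longest-common-extension machinery of Gusfield's Theorem~9.1.1, and reports $\infty$ once the answer exceeds a periodicity threshold (the paper uses two copies and an $\mathrm{lcm}(|X|,|Y|)$ cutoff where you use $O(1+N/|X|)$ copies and the Fine--Wilf bound $|X|+|Y|$; both thresholds are valid). Your Step~2 justification of the cutoff, and your care in padding so the threshold is actually reachable, simply make explicit what the paper dismisses as a ``trivial modification.''
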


\begin{lemma}
\label{lem:HLF-half-turn-subroutine}
Let $W$ be a circular word.
Let $W = L^l Z$ and $W = Y R^r$ such that $L \not \Prefix Z$, $R \not \Suffix Y$, and $L$, $R$ are primitive.
Assume that the prefix and suffix palindrome factorizations of every shift of $W$ are given.
It can be decided in $O((l+r)\log{|W|})$ time if there exist positive integers $b$, $c$ such that $W = L^b A P_1 P_2 \Back{A} R^c$ with $A$, $\Back{A}$ admissible and $P_1$, $P_2$ palindromes.
\end{lemma}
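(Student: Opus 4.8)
The plan is to iterate over the two exponents $b \in \{1, \dots, l\}$ and $c \in \{1, \dots, r\}$ implicitly rather than explicitly, using the structural results on longest common prefixes to collapse the $O(lr)$ naive search into $O((l+r)\log|W|)$. The key observation is that once $b$ is fixed, the factor $L^b$ is determined as a prefix of $W$, and once $c$ is fixed, $R^c$ is determined as a suffix; the remaining middle factor $W[|L^b|{+}1 \,..\, |W|{-}|R^c|]$ must be written as $A P_1 P_2 \Back{A}$ with $A$ a prefix of that middle factor, $\Back{A}$ the corresponding suffix, $A,\Back{A}$ admissible (which by admissibility pins the pair essentially uniquely given its length, and forces the letters just inside $A$ and $\Back{A}$ to differ from the complement of the adjacent interior letters), and $P_1 P_2$ a decomposition of what is left into two palindromes. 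The length $|A|$ is itself a free parameter, but here I would invoke Lemma~\ref{lem:HLF-longest-common-extension}: preprocess $W$ against its own backtrack $\Back{W}$ so that, for any start positions, the longest common factor $A$ with $A \equiv \Back{A}$ on the two ends of the middle region can be read off in $O(1)$ time. Admissibility then forces $A$ to be exactly this maximal common factor (or shows no valid $A$ exists), so $|A|$ becomes a function of $(b,c)$ that is computable in $O(1)$ time per pair.

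Next I would reduce the residual problem — given the interior word $P_1 P_2$, decide whether it splits into two palindromes with the appropriate length lower bounds — to an application of Lemma~\ref{lem:HLF-dp-finish}. The subtlety is that Lemma~\ref{lem:HLF-dp-finish} needs the prefix and suffix palindrome factorizations of the interior word, not of all of $W$; but the interior word is a factor of a shift of $W$, and since we are given the prefix and suffix palindrome factorizations of every shift of $W$, I would show that the factorization of an arbitrary factor can be extracted from that of the containing shift in $O(\log|W|)$ time by trimming (analogous to the trimming step in the proof of Lemma~\ref{lem:HLF-all-prefix-facts-fast}). That gives an $O(\log|W|)$-per-pair decision, still $O(lr\log|W|)$ overall, so the heart of the argument is the batching that removes the factor of $\min(l,r)$.

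The batching step is where Lemma~\ref{lem:HLF-period-common-prefix} enters, and this is the main obstacle. The idea is to fix one of the two exponents — say iterate over $c$ from $1$ to $r$ — and for each fixed $c$ argue that the set of viable values of $b$, together with the induced $|A|$ and the induced interior word, has a highly structured dependence on $b$: as $b$ ranges over $1,\dots,l$, the prefix $L^b$ of $W$ is compared against the fixed tail determined by $c$, and Lemma~\ref{lem:HLF-period-common-prefix} (with $L$ in the role of the repeated word) says that the longest-common-prefix quantity governing $|A|$ is constant for an initial range of $b$, then grows in arithmetic progression. Consequently the interior words for that initial range of $b$ are all cyclic shifts / rotations of one another within a single period block, and the double-palindrome test can be answered for the whole block at once using Lemma~\ref{lem:HLF-palindrome-pump} and Lemma~\ref{lem:HLF-extremal-double-palindromes} — exactly as in the proof of Lemma~\ref{lem:HLF-double-palin-side-fast}, which already packages ``is $V = P_1 P_2 R^k$'' into $O(\log|W|)$ time given a primitive period $R$. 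So the plan is: for each $c$, split the range of $b$ into the constant regime and the arithmetic regime via Lemma~\ref{lem:HLF-period-common-prefix}; in each regime, recognize the family of interior words as $\{L^{b} Z' : b \text{ in range}\}$ up to reflecting the $\Back A$ end, and apply (a reflected/two-sided variant of) Lemma~\ref{lem:HLF-double-palin-side-fast} once. Summing $O(\log|W|)$ over the $r$ choices of $c$ and symmetrically handling which end carries the ``long'' repeated block gives the claimed $O((l+r)\log|W|)$ bound; verifying that the admissibility constraints on $A,\Back A$ are compatible with this batching — i.e. that they do not fragment the arithmetic regime into $\Theta(l)$ separate cases — is the delicate point I expect to spend the most care on.
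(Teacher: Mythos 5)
Your plan is essentially the paper's own proof: fix one exponent, define the admissible $A$ via longest-common-extension queries against $\Back{W}$, use Lemma~\ref{lem:HLF-period-common-prefix} to split the other exponent's range into a constant regime, one exceptional index, and an arithmetic regime, handle the exceptional index with Lemma~\ref{lem:HLF-dp-finish}, batch each regime with Lemma~\ref{lem:HLF-double-palin-side-fast} (using Lemma~\ref{lem:HLF-extremal-double-palindromes} and the symmetric sweep to cover longest-$P_1$ versus longest-$P_2$), and sum to $O((l+r)\log{|W|})$. The one point you flag as delicate --- admissibility fragmenting the arithmetic regime --- resolves trivially, as in the paper: for every nonzero exponent the exterior admissibility check is the single condition $L^b[-1] \neq \Comp{R[1]}$, independent of the exponent, and the interior condition is automatic from taking $A$ maximal, so only the zero-exponent case needs separate handling.
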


\begin{proof}
The approach is to iteratively search for a solution for each value of $b$, carrying out the same algorithm on each value.
Then performing an identical, symmetric iteration through the values of $c$.
First assume $b$ is a fixed value and $c$ is not.
Let $A_i$ be the longest word such that $L^b A_i$ and $\Back{A_i} R^i$ are a prefix and suffix of $W$, respectively.
Lemma~\ref{lem:HLF-period-common-prefix} implies that there exists an integer $k$ such that:
\begin{itemize}
\item For all $i$ with $0 \leq i \leq k$, $|L^b A_i|$ is fixed and $|\Back{A_i} R^i| = |A_0| + |R|i$.
\item For all $i$ with $k+2 \leq i \leq l$, $|\Back{A_i} R^i|$ is fixed and $|L^b A_i| = |L^b| + |A_{k+2}| - |R|(i-(k+2))$.
\item $k$ can be computed in $O(1)$ time assuming a data structure allowing $O(1)$ time longest common prefix queries for suffixes of $W$ and $\Back{W}$ is given. 
\end{itemize}
In other words, $k$ is an efficiently-computable integer that partitions the values of $i$ into three parts: one with a single value ($i = k+1$) and two others where either $|L^b A_i|$ or $|\Back{A_i} R^i|$ is fixed and the other is a linear set.
Handle the case of $i = k+1$ individually by using Lemma~\ref{lem:HLF-dp-finish} to check if the word between $L^b A_{k+1}$ and $\Back{A_{k+1}} R^{k+1}$ has a factorization into two palindromes.
Next, check that all $A_i$, $\Back{A_i}$ except $i = 0$ are admissible by verifying $L^b[-1] \neq \Comp{R^i[1]}$.
Also handle $i = 0$ individually, including checking admissibility. % $L^b[-1] \neq \Comp{L^b[1]}$ 
Lemma~\ref{lem:HLF-double-palin-side-fast} is used to handle the remaining two cases in $O(\log{|W|})$ time each.

\textbf{Case 1: $\bm{0 \leq i \leq k}$.}
In this case, $|L^b A_i|$ is fixed and $|\Back{A_i} R^i| = |A_0| + |R|i$.
If $k-1 < 3$, then handle all three cases individually in $O(\log{|W|})$ time using Lemma~\ref{lem:HLF-dp-finish}. 
Otherwise handle only $i = k$ similarly.

By Lemma~\ref{lem:HLF-period-common-prefix}, $A_i = A_0$ for all $0 \leq i \leq k-1$ and $|A_0| < |R^{r-k}|$.
So $\Back{A_0} R^i$ for all $0 \leq i \leq k-1$ and $R^r$ are suffixes of $W$.
Also, for all $i \leq k-1$, $|\Back{A_0} R^i| \leq |\Back{A_0} R^{k-1}| \leq |\Back{A_0} R^k| - |R| \leq |R^r| - |R|$.
So for some $R'$ with $|R'| = |R|$, $\{ \Back{A_0} R^i : 0 \leq i \leq k-1\} = \{ (R')^i \Back{A_0} : 0 \leq i \leq k-1 \}$.
Let $L'$ be such that $W = L^b A_i L' (R')^k \Back{A_0}$.
Then a solution factorization exists if and only if there exist palindromes $P_1 P_2 = L' (R')^{k-i}$ for some $0 \leq i \leq k-1$.

First, search for solutions with $|P_1| \geq |L'|$.
We first prove that $R'$ is primitive, allowing Lemma~\ref{lem:HLF-double-palin-side-fast} to be invoked.
Suppose, for the sake of contradiction, that $R'$ has a period of length $p$ with $p < |R'|$.
So $(R')^2$ has periods of length $|R'|$ and $p$ such that $|R'| + p < |(R')^2|$ and so by Lemma~\ref{lem:HLF-strong-periodicity} has a period of length ${\rm gcd}(|R'|, p) < |R'|$.
Then since $(R')^2$ contains $R$ as a subword, $R$ also has a period of length $p$ and thus is not primitive, a contradiction.

For solutions with $|P_1| < |L'|$, Lemma~\ref{lem:HLF-extremal-double-palindromes} implies that it suffices to check for solutions with the longest possible $P_1$ (longest possible $P_2$ is handled when performing the symmetric iteration over values of $c$).
To do so, scan the prefix palindrome factorization starting at $L'[|P_1| + 1]$ for a palindrome of length $|L'| - |P_1| + |(R')^{k-i}|$ with $0 \leq i \leq k-1$.

\textbf{Case 2: $\bm{k+2 \leq i \leq l}$.}
In this case, $|\Back{A_i} R^i|$ is fixed and $|L^b A_i| = |L^b| + |A_{k+2}| - |R|(i-(k+2))$.
Then there exists a word $R'$ such that $W = L^b (\Back{R})^{r-i} A_r R' \Back{A_i} R^i$ for all $k+2 \leq i \leq l$.
Let $L'$ be the suffix of $\Back{R} A_r$ of length $|R|$.
Then $W = L^b A_r (L')^{r-(k+2)} R' \Back{A_r} R^r$.
So there exists a pair of palindromes $P_1$, $P_2$ with $W = L^b A_i P_1 P_2 \Back{A_i} R^i$ for some $k+2 \leq i \leq l$ if and only if $(L')^{r-i} R' = P_1 P_2$ for some $k+2 \leq i \leq l$.
This situation is identical to that encountered in the previous case -- handle in the same way.

\textbf{Handling overlap.}
The description of the algorithm so far has ignored the possibility that $|L^b A_i| + |\Back{A_i} R^i| > |W|$, i.e., that $L^b A_i$ and $\Back{A_i} R^i$ ``overlap''.
For the case of $0 \leq i \leq k$, this occurs when $|L^b A_0| + |\Back{A_0} R^i| > |W|$.
Restricting the values of $i$ to satisfy $0 \leq i \leq {\rm min}(k, \lfloor (|W|-b|L|-2|A_0|)/|R| \rfloor)$ ensures that $W$ can be decomposed as claimed.
For the case of $k+2 \leq i \leq l$, this occurs when $|L^b R^{r-i} A_r| + |\Back{A_r} R^r| > |W|$.
% Violation when:
% -i|R| > |W| - |\Back{A_r} R^r| - |L^b| - |A_r| - |R^r|
% i < (2|A_r| + 2r|R| + b|L| - |W|)/|R|
Restricting the values of $i$ to satisfy ${\rm max}(\lceil (2|A_r| + 2r|R| + b|L| - |W|)/|R| \rceil, k+2) \leq i \leq l$ ensures that $W$ can be decomposed as claimed. 
Check the individually handled cases, namely $i = k, k+1$, for overlap individually.

\textbf{Running time.}
The running time of this algorithm is $O((l+r)\log{|W|})$, since the amount of time spent for each value of $b$ and $c$ is $O(\log{|W|})$ to handle individual values of $i$ and $O(\log{|W|})$ to handle each large case by Lemma~\ref{lem:HLF-double-palin-side-fast}.
However, this assumes a data structure enabling $O(1)$ time common prefix queries on $W$ and $\Back{W}$.
Compute such a data structure in $O(|W|)$ time using Lemma~\ref{lem:HLF-longest-common-extension}.
Since $\Omega(|W|)$ time must be spent to decide if a boundary word has a half-turn factorization, such a computation has no additional asymptotic cost.
\end{proof}

% NOTE: If you change this lemma, also update it in overview.tex!
\begin{lemma}
\label{lem:HLF-half-turn-factors-admissible}
Let $P$ be a polyomino and $\Bou{P} = A B C \Back{A} D E$ with $B, C, D, E$ palindromes.
Then the gapped mirror pair $A, \Back{A}$ and palindromes $B$, $C$, $D$, $E$ are admissible.
\end{lemma}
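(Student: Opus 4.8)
The claim is about a geometric object—the boundary word of a simple polygon—so the natural strategy is to argue by contradiction using simplicity. The statement asserts that the five elements of the factorization $\Bou{P} = ABC\Back{A}DE$ (the gapped mirror pair $A,\Back{A}$, and the palindromes $B$, $C$, $D$, $E$) are \emph{admissible}, which by the definitions in Section~\ref{sec:definitions} means: for the gapped mirror pair, that the gap words on either side (here the inner gap word $BC$ and the outer gap word $DE$) are not "extendable"—i.e. $(BC)[1] \neq \Comp{(BC)[-1]}$ and $(DE)[1] \neq \Comp{(DE)[-1]}$; and for a palindrome $X$ appearing as, say, $\Bou{P} = UXV$ (read with the appropriate shift), that $V[-1] \neq \Rot{180}{V[1]}$, i.e. the palindrome cannot be extended by one letter on each side while remaining a palindrome. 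So the plan is: assume one of these admissibility conditions fails, and derive that $\Bou{P}$ is not the boundary word of a simple polygon.

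**Key steps.** First I would handle the palindrome cases, say $B$ (the others are symmetric up to relabeling and reversal). Suppose $B$ is \emph{not} admissible. Then the letters immediately flanking $B$ inside $\Bou{P}$—the last letter of $A$ and the first letter of $C$—are related by a $180$-rotation, so $B$ extends to a longer palindrome $B'$ with the same center. The geometric content is that a palindrome in a boundary word corresponds to a sub-path that is centrally symmetric about its midpoint, hence the path "doubles back" through a $180^\circ$ rotation about a point; extending it by one matched letter on each side means the very next cell-edges on the two sides coincide after that rotation, which forces two distinct boundary edges of $P$ to occupy the same segment of the plane—contradicting that $\Bou{P}$ is a simple (non-self-intersecting) closed curve. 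I would make this precise by tracking the turtle-graphics position: if $B = \Bou{P}[i..j]$ is a palindrome then the position after reading $\Bou{P}[i..j]$ equals the start position of $B$ rotated by $180^\circ$ about the midpoint; adding the flanking letters (which are $180^\circ$-rotations of each other) produces two edges that are reflections of one another through the center point and therefore overlap or cross. For the gapped mirror pair $A,\Back{A}$: recall $\Back{A} = \Comp{\Rev{A}}$, so $A$ and $\Back{A}$ trace out two translates of the same path (this is exactly the "opposite sides of a tile" picture). If the inner gap $BC$ is extendable, i.e. $(BC)[1] = \Comp{(BC)[-1]}$, then the last letter of $A$ and the first letter of $\Back{A}$... — more carefully, one shows the extendability of the gap forces $A$ and $\Back{A}$ to be "longer than maximal," again producing coincident boundary edges. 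I would phrase all of this as: non-admissibility at any one of the five spots yields a pair of boundary edges of $P$ sharing more than an endpoint, contradicting simplicity.

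**Main obstacle.** The genuinely delicate part is the geometric translation from the word-combinatorial "extendability" conditions to an actual self-intersection of the polygon boundary—in particular being careful about the boundary cases where the extension wraps around so that the two sides of the palindrome (or the two copies $A$, $\Back{A}$) meet at a shared endpoint that is legitimately shared in a simple polygon, versus genuinely overlapping. One has to use the fact that $|\Bou{P}| = n$ with $P$ simply connected, and that the factorization is a genuine partition of the whole boundary, to rule out the degenerate "the extension just closes up the polygon" scenario—i.e. to ensure the extended palindrome/mirror stays strictly inside the available part of $\Bou{P}$. I expect the write-up to dispose of each of the five elements by the same lemma-style argument ("an extendable palindrome/gapped-mirror in a boundary word forces a boundary self-overlap"), with the bulk of the care going into that single underlying geometric claim and its edge cases, after which the five instances follow by symmetry and relabeling.
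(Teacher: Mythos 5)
There is a genuine gap, and it starts with a misreading of the definition. For a palindrome ($180$-drome) $X$ with $W = XU$, admissibility means $U[-1] \neq U[1]$: the letter preceding $X$ must \emph{differ from} the letter following it (equality of the two flanking letters is exactly what lets $X$ grow to a longer palindrome about the same center), not that they fail to be complements as you write. More importantly, your central claim --- that non-admissibility of any of the five elements forces two boundary edges of $P$ to overlap --- is false for the palindromes $B$, $C$, $D$, $E$. A perfectly simple boundary word can contain non-admissible palindrome factors everywhere (the factor $\Up\Up$ inside $\dots\Right\Up\Up\Right\dots$ is one); the two edges that would "coincide after the rotation" are point-reflections of each other about the center of $B$, which lies far from either of them, so no overlap is produced. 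What must be shown is that the palindromes occurring \emph{in this particular factorization} are admissible, and in the case $|A| = 0$ (so $W = BCDE$ and no junction touches $A$ or $\Back{A}$) the paper gets this from a global counting argument, not from local simplicity: by Proposition~6 of Daurat and Nivat the number of consecutive letter pairs in $\mathcal{R}$ exceeds the number in $\mathcal{L}$ by exactly four, each palindrome contributes equally many of each internally, so all four junction pairs must lie in $\mathcal{R}$; since $X[1] = X[-1]$ for each factor, the two letters flanking $X$ are then the distinct $\mathcal{R}$-predecessor and $\mathcal{R}$-successor of $X[1]$, hence unequal. No argument of the form "an extendable palindrome forces a boundary self-overlap" can substitute for this step, because the premise of that implication is simply not contradiction-producing.

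Your sketch for the gapped mirror pair also stops exactly where the work is. The condition $(BC)[1] = \Comp{(BC)[-1]}$ relates the first letter of $B$ to the last letter of $C$, which are far apart on the boundary; the paper transports it to the adjacent pair $B[-1]\,C[1]$ using the palindromicity of $B$ and $C$ (namely $B[1] = B[-1]$ and $C[-1] = C[1]$), obtaining a factor of the form $x\Comp{x}$, an immediate backtrack --- and that, together with the degenerate case $|B| = |C| = 0$ (where $A\Back{A}$ itself backtracks), is the only place simplicity is invoked for this part. Similarly, in the case $|A| > 0$ the admissibility of $B$, $C$, $D$, $E$ comes from backtrack-freeness at the junctions with $A$ and $\Back{A}$, not at the flanking letters of the palindrome itself. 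So while "use simplicity of the boundary" is the right slogan for roughly half of the lemma, the proof as you propose it would not go through: it misses the turning-number identity that the $|A| = 0$ case requires, and it omits the palindromicity bookkeeping that makes the backtrack argument land at an adjacent pair of letters.
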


\begin{proof}
\textbf{$A, \Back{A}$ is admissible.}
It cannot be that $|B| = |C| = 0$, since then $ABC\Back{A} = A\Back{A}$ is non-simple.
If $BC[1] = \Comp{BC}[-1]$, then $|B|, |C| > 0$ and thus $B[-1] = BC[1] = \Comp{BC}[-1] = \Comp{C}[1]$ and $BC$ is non-simple.
So $BC[1] \neq \Comp{BC}[-1]$ and by symmetry, $DE[1] \neq \Comp{DE}[-1]$.
So $A, \Back{A}$ are admissible.

\textbf{$B$, $C$, $D$, $E$ are admissible.}
Consider the pairs of non-equal consecutive letters in $W$.
These pairs come from sets $\mathcal{R} = \{\Left\Up, \Up\Right, \Right\Down, \Down\Left\}$ and $\mathcal{L} = \{\Up\Left, \Left\Down, \Down\Right, \Right\Up\}$, and Proposition~6 of~\cite{Daurat-2005} states that the number of pairs from $\mathcal{R}$ is four more than the number from $\mathcal{L}$. 
Also, any palindrome contains an equal number of consecutive letter pairs from $\mathcal{L}$ and $\mathcal{R}$.

If $|A| = 0$, then $W$ has factorization $W = BCDE$ with the four consecutive-letter pairs from $\mathcal{R}$ not contained in any factor, i.e., for each factor $X \in \{B, C, D, E\}$, $W = XY$ with $Y[-1]X[1], X[-1]Y[1] \in \mathcal{R}$.
Since $X$ is a palindrome, $X[-1]Y[1] = X[1]Y[1] \in \mathcal{R}$ and so $Y[1] \neq Y[-1]$.
Thus $X$ is admissible.

If $|A| > 0$, then $|BC| > 0$, since otherwise $A[-1]\Back{A}[1] = A[-1]\Comp{A[-1]}$ is a subword and $W$ is non-simple.
Without loss of generality, $|B| > 0$.
If $|C| = 0$, then $W = BY$ with $Y[1] = \Back{A}[1] = \Comp{A}[-1] \neq A[-1] = Y[-1]$ and $B$ is admissible.
If $|C| > 0$, then $C[1] = C[-1] \neq \Comp{\Back{A}}[1] = A[-1]$.
So $W = BY$ with $Y[-1] = A[-1] \neq C[1] = Y[1]$ and so $B$ is admissible.
By symmetry, it is also the case that $C$, $D$ and $E$ are also admissible. 
\end{proof}

\begin{theorem}
\label{thm:HLF-algorithm}
Let $P$ be a polyomino with $|\Bou{P}| = n$.
It can be decided in $O(n\log^2{n})$ time if $\Bou{P}$ has a half-turn factorization.
\end{theorem}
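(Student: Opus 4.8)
The plan is to decide the existence of a half-turn factorization $W=ABC\Back{A}DE$ by searching over the $O(n)$ possible positions of the boundary between $D$ and $E$, refining each such position by $O(\log n)$-many guesses about the shapes of $D$ and $E$, and in each case invoking the subroutines of Lemma~\ref{lem:HLF-half-turn-subroutine} (and, in degenerate cases, Lemma~\ref{lem:HLF-dp-finish}). First I would run the preprocessing these structural results need: compute the prefix and suffix palindrome factorizations of every shift of $W=\Bou{P}$ in $O(n\log n)$ time (Lemma~\ref{lem:HLF-all-prefix-facts-fast} and its suffix analogue), and build a longest-common-extension structure for $W$ and $\Back{W}$ in $O(n)$ time (Lemma~\ref{lem:HLF-longest-common-extension}). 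Both costs are within the stated budget.

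For the search proper, fix a candidate position $p$ for the boundary between $D$ and $E$ and let $W_p$ be the shift of $W$ that begins at $p$, so that in any half-turn factorization witnessed at $p$ the factor $E$ is a prefix palindrome of $W_p$ and $D$ is a suffix palindrome of $W_p$ (and, by Lemma~\ref{lem:HLF-half-turn-factors-admissible}, $A,\Back{A}$ is an admissible gapped mirror and $B,C,D,E$ are admissible). By the Prefix Palindrome Factorization Lemma (Lemma~\ref{lem:HLF-prefix-palin-factor}), $|E|$ lies in one of only $O(\log n)$ runs $X_i^{r_i}$ of the prefix palindrome factorization of $W_p$; committing to one such run fixes a prefix palindrome $Q_E$ (the one just before the run), a primitive word $R=X_i$, and an interval of exponents so that $E=Q_E R^{c}$. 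A symmetric statement with the suffix palindrome factorization gives $D=L^{b} Q_D$ with $L$ primitive. Having guessed $p$ and the two runs, everything but $b$, $c$, and the inner block $ABC\Back{A}$ is pinned down, and the remaining question is whether a fixed arc of $W$ (the one obtained from $W_p$ by deleting the prefix $Q_E$ and the suffix $Q_D$) can be written as $R^{c} A B C \Back{A} L^{b}$ with $B,C$ palindromes, $A,\Back{A}$ admissible, and $b,c$ in the prescribed intervals --- which is exactly what Lemma~\ref{lem:HLF-half-turn-subroutine} decides, with the two primitive words in swapped roles. One small point to verify is that the admissibility Lemma~\ref{lem:HLF-half-turn-subroutine} imposes coincides with admissibility of $A,\Back{A}$ as a gapped mirror of $W$; this follows because $D$ begins with $L[1]$ and $E$ ends with $R[-1]$, so the relevant letter inequalities agree.

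The ``degenerate'' cases must be treated on the side: if $|A|=0$ the task reduces to recognizing a concatenation of four palindromes, which is handled for each $p$ and each $O(\log n)$ run guess by Lemma~\ref{lem:HLF-dp-finish}; if one of $B,C,D,E$ has length $0$ (or a chosen run has size one, pinning $b$ or $c$), the same scheme applies with one fewer palindrome or one fewer free exponent. These add only $O(n\log^2 n)$.

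The main obstacle is the running-time analysis. Naively, the search makes $O(n\log n)$ guesses, and $O(n\log^2 n)$ if the runs for $D$ and $E$ are guessed independently, while a single call to Lemma~\ref{lem:HLF-half-turn-subroutine} costs $O((l+r)\log n)$, which does not combine to $O(n\log^2 n)$. Closing this gap is where the quantitative structural results of this section are essential --- the $O(\log n)$ bound on the number of runs (Lemma~\ref{lem:HLF-prefix-palin-factor}), the $O(\log n)$-size candidate sets of exponents (Lemma~\ref{lem:HLF-small-candidate-rep-set}), and the linear-structure description of the common prefixes (Lemma~\ref{lem:HLF-period-common-prefix}) --- which together let one restrict the exponent intervals passed to Lemma~\ref{lem:HLF-half-turn-subroutine} and amortize the work across guesses so that the total is $O(n\log^2 n)$. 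Getting this accounting right, together with checking that the reduction above is faithful in all the degenerate cases, is essentially the whole content of the proof.
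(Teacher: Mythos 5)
Your overall strategy is the same as the paper's: fix the boundary position between $D$ and $E$, use Lemma~\ref{lem:HLF-half-turn-factors-admissible} to restrict to admissible $D$, $E$, use Lemma~\ref{lem:HLF-prefix-palin-factor} to reduce each to one of $O(\log n)$ terminating runs $X_i^{r_i}$, and hand the residual problem $L^bABC\Back{A}R^c$ to Lemma~\ref{lem:HLF-half-turn-subroutine}. The preprocessing, the reduction, and the degenerate-case remarks are all consistent with the paper.

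However, there is a genuine gap: you explicitly identify the running-time accounting as the obstacle (``which does not combine to $O(n\log^2 n)$'') and then do not close it, gesturing instead at ``restricting the exponent intervals'' and unspecified amortization. That is not how the bound is obtained, and the missing argument is short but essential. The paper passes the \emph{full} exponent ranges $l=r_p$, $r=r_s$ to Lemma~\ref{lem:HLF-half-turn-subroutine} and charges the $O((r_p+r_s)\log n)$ cost to admissible palindromes: each value of the exponent $b$ (resp.\ $c$) corresponds to one candidate admissible palindrome $E$ (resp.\ $D$), and since all prefix palindromes sharing a terminator are admissible or none are, the sum of $r_T$ over all \emph{marked} terminators over all $n$ starting positions is bounded by the total number of admissible palindromes, which is $O(n)$ (one per center, by maximality). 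Each admissible palindrome then participates in only $O(\log n)$ calls --- one per choice of terminator on the opposite side --- at cost $O(\log n)$ per participation, giving $O(n\log^2 n)$ overall. Without this double-counting step your analysis stalls at the naive bound you yourself compute; note also that Lemma~\ref{lem:HLF-small-candidate-rep-set} and Lemma~\ref{lem:HLF-period-common-prefix}, which you cite as the tools for closing the gap, are already consumed inside the proof of Lemma~\ref{lem:HLF-half-turn-subroutine} and play no role in the outer accounting.
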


\begin{proof}
First, compute the prefix palindrome factorizations of each shift of $W$ by computing and truncating the prefix palindrome factorizations of the  $|W|$ longest suffixes of $W$ using Lemma~\ref{lem:HLF-all-prefix-facts-fast}. 
Similarly compute the suffix palindrome factorization of every shift of $W$.
By Lemma~\ref{lem:HLF-all-prefix-facts-fast}, this takes $O(|W|\log{|W|})$ total time. 

Next, compute the admissible factors (including zero-length factors), i.e., palindromes maximal about their center, by computing and truncating the maximal palindromes of $WW$ output by Manacher's $O(|W|)$-time algorithm~\cite{Manacher-1975}.
Each admissible factor $F$ is contained in a prefix palindrome factorization as $X_1^{r_1} X_2^{r_2} \dots X_i^j$ with either $1 \leq i \leq m$ and $0 \leq j < r_i$, or $i = m$ and $j = r_m$ if $F$ is the longest prefix palindrome of the word.
If $j < r_i$, call $X_i^{r_i}$ the \emph{terminator} of $F$, otherwise call $Q^1$ the terminator of $F$.
Either all or none of the prefix palindromes with a given terminator are admissible.
Similar definitions and observations apply to suffix palindrome factorizations.

For each admissible factor $W$, mark the two terminators (one prefix, one suffix) of the factor. 
Locating the prefix and suffix terminators for each of the $2|W|$ admissible factors takes $O(|W|\log{|W|})$ total time. 

Without loss of generality, every solution half-turn factorization has $|E| > 0$.
Search for half-turn factorizations $ABC\Back{A}DE$ by iterating over possible first letters of $E$.
By Lemma~\ref{lem:HLF-half-turn-factors-admissible}, only solutions with admissible $D$ and $E$ must be considered.
This corresponds to a palindromes $D$ and $E$ starting and ending with a marked terminators $X_s^{r_s}$ and $X_p^{r_p}$, respectively.
For each such terminator pair $X_s^{r_s}, X_p^{r_p}$, use Lemma~\ref{lem:HLF-half-turn-subroutine} with $L = X_p$, $l = r_p$, $R = \Rev{X_s}$, $r = r_s$ to check for a partial factorization $ABC\Back{A}$ to complete the factorization along with $D$ and $E$ with the marked terminator pair. 
By Lemma~\ref{lem:HLF-prefix-palin-factor}, $X_s$ and $X_p$ are primitive and by definition of palindrome factorizations, $l$ and $r$ are maximal.

Checking for a partial factorization using Lemma~\ref{lem:HLF-half-turn-subroutine}, each pair $X_s^{r_s}, X_p^{r_p}$ takes $O((r_s + r_p)\log{|W|})$ time, $O(\log{|W|})$ time for each admissible factor involved.
Moreover, each admissible factor is involved in $O(\log{|W|})$ pairs of terminators: $O(\log{|W|})$ prefix (suffix) terminators when $E$ ($D$). 
So $O(\log^2{|W|})$ total time is spent per admissible factor and in total the the algorithm takes $O(|W|\log^2{|W|})$ time.
\end{proof}

% CODE: QRT

\section{Quarter-Turn Factorizations}
\label{sec:QRT}

\begin{definition}
A \emph{quarter-turn factorization} of a boundary word $W$ has the form $W = ABC$ with $A$ a palindrome and $B$, $C$ $90$-dromes.
\end{definition}

\subsection{Long $90$-dromes}

\begin{lemma}
\label{lem:QRT-no-ddrome-periodic}
Let $W$ be a word with a period of length $p$, and $X$ a $90$-drome subword of $W$.
Then $|X| \leq p$.
\end{lemma}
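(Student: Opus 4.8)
The claim is that a $90$-drome subword $X$ of a word $W$ with period $p$ cannot be longer than $p$. I would argue by contradiction: suppose $|X| > p$. The key structural fact about a $90$-drome is that $X = Y\,\Rot{270}{\Rev{Y}}$ (taking $\Theta = 90$, so $\Theta + 180 = 270$), which means $X$ has a ``skew-palindromic'' center: if we write $X = X[1]X[2]\cdots X[\ell]$ with $\ell = |X|$, then $X[\ell+1-i] = \Rot{270}{X[i]}$ for every $i$. In particular $X[i]$ and $X[\ell+1-i]$ are always \emph{different} letters (a $90^\circ$ rotation of a letter in $\{\Up,\Down,\Left,\Right\}$ is never equal to the letter itself). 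I would record this as the crucial local obstruction: no two positions symmetric about the center of $X$ carry the same letter.

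\textbf{Main argument.}
Now use the period. Since $X$ sits inside $W$ and $W$ has period $p$, the restriction of $W$ to the positions occupied by $X$ satisfies $X[i] = X[i+p]$ whenever both indices are in range $\{1,\dots,\ell\}$. Assuming $\ell > p$, pick the pair of center-symmetric positions and shift one of them by $p$: concretely, I would choose an index $i$ with $1 \le i \le \ell$ such that $i' := \ell+1-i$ satisfies $|i - i'|$ is a multiple of... actually the cleanest route is: because $\ell > p$, there exist indices $a < b$ in $\{1,\dots,\ell\}$ with $b - a = p$ and $a, b$ symmetric-ish; more carefully, consider the involution $\sigma(i) = \ell+1-i$ on $\{1,\dots,\ell\}$ and the translation $\tau(i) = i+p$. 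Since $\ell > p \ge 1$, some position $i$ has both $\sigma(i)$ and $\sigma(i)+p$ (or $\sigma(i)-p$) inside $\{1,\dots,\ell\}$, giving $X[i] = X[\sigma(i) \pm p] = X[\sigma(i)]$ by periodicity — contradicting that $X[i] \ne X[\sigma(i)]$. Let me state the index choice I expect to use: take $i = \lceil (\ell - p)/2 \rceil$ or similar so that $i \le \ell/2 < \sigma(i)$ and $\sigma(i) - p \ge i$ still lies on the same side or crosses over; the arithmetic just needs $\sigma(i) - p \in \{1,\dots,\ell\}$ and $X[\sigma(i)-p]$ forced equal to $X[i]$ while also $X[\sigma(i)]$ is the rotation of $X[i]$. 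Since period $p$ forces $X[\sigma(i)-p] = X[\sigma(i)]$ (walking from $\sigma(i)-p$ to $\sigma(i)$ in steps of $p$ — here a single step), we get $X[i] = \Rot{270}{X[i]}$, impossible.

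\textbf{Expected obstacle.}
The only real subtlety is the bookkeeping of indices: making sure that for $\ell > p$ there genuinely is a position $i \le \ell - p$ (so that $i+p \le \ell$) which is \emph{also} center-symmetric to a position reachable by the period, i.e. that the ``fold'' of the $90$-drome and the ``slide'' of the period interact. I expect to handle this by noting that $\ell > p$ means positions $1,\dots,\ell-p$ and positions $p+1,\dots,\ell$ overlap, so there is an index $j$ with $1 \le j \le \ell-p$ and $\ell+1-j \ge p+1$ (equivalently $j \le \ell - p$), and for such $j$ periodicity gives $X[\ell+1-j] = X[\ell+1-j-p]$; iterating the period down to land in the ``first half'' or directly invoking $X[\ell+1-j] = X[\ell+1-j-p\cdot t]$ for the appropriate $t$ to reach index $j$ itself would force $X[j] = X[\ell+1-j] = \Rot{270}{X[j]}$. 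I might instead phrase it even more simply: a $90$-drome of length $> p$ would, under the period, produce two equal letters at distance $p$ that are also at symmetric positions, so it suffices to show \emph{some} such collision exists, which is immediate once $\ell > p$. No heavy machinery is needed — this is a short contradiction lemma — so the ``hard part'' is purely choosing clean index notation that a referee will accept without squinting.
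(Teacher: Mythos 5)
Your structural setup is right: a $90$-drome of length $\ell$ satisfies $X[\ell+1-i]=\Rot{270}{X[i]}$, and no letter of $\Sigma$ equals its own $\D{90}$ (or $\D{270}$) rotation. But the collision you claim is ``immediate once $\ell > p$'' does not exist in general, and this is a genuine gap. Your argument needs an index $i$ whose mirror $\sigma(i)=\ell+1-i$ is reachable from $i$ by steps of the period, i.e.\ $\sigma(i)-i=\ell+1-2i$ must be a (positive) multiple of $p$. A $90$-drome necessarily has even length (its center sits between two letters; an odd-length center letter $c$ would satisfy $c=\Rot{270}{c}$, which is impossible), so $\ell+1-2i$ is \emph{odd for every} $i$. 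Hence whenever $p$ is even there is no pair of center-symmetric positions at distance divisible by $p$, and your single-fold contradiction $X[i]=X[\sigma(i)]=\Rot{270}{X[i]}$ can never be set up. For instance, with $p=2$ and a hypothetical $90$-drome of length $4$, the symmetric pairs are at distances $3$ and $1$, neither a multiple of $2$; your argument produces nothing, yet the lemma must still rule this case out.

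The repair is to chain \emph{two} applications of the $90$-drome symmetry through the period rather than one, which is what the paper does. Shrinking $X$ about its center, assume $|X|=p+2$. Periodicity gives $X[1]=X[p+1]=X[-2]$ and $X[2]=X[p+2]=X[-1]$; the $90$-drome symmetry gives $X[1]=\Rot{90}{X[-1]}$ and $X[2]=\Rot{90}{X[-2]}$. Composing, $X[1]=\Rot{90}{X[-1]}=\Rot{90}{X[2]}=\Rot{180}{X[-2]}=\Rot{180}{X[1]}$, which is impossible. Note that here the period does not map a position to its own mirror; it maps position $1$ to the mirror of position $2$ and vice versa, and the two $\D{90}$ rotations accumulate into a $\D{180}$ rotation, which is what actually yields the contradiction. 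Your write-up as it stands would only cover odd $p$.
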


\begin{proof}
Suppose, for the sake of contradiction and without loss of generality, that $|X| = p+2$.
Since $W$ has a period of length $p$, $X$ has a period of length $p$.
So $X[1] = X[p+1] = X[-2]$ and $X[2] = X[2+p] = X[-1]$.
Since $X$ is a $90$-drome, $X[1] = \Rot{90}{X[-1]}$ and $X[2] = \Rot{90}{X[-2]}$.
So $X[1] = \Rot{90}{X[-1]} = \Rot{90}{X[2]} = \Rot{180}{X[-2]} = \Rot{180}{X[1]}$, a contradiction.
\end{proof}

\begin{lemma}
\label{lem:QRT-third-ddrome-bound}
Let $W$ be a word. 
Let $P \Prefix W$, $S \Suffix W$ be distinct $90$-dromes with $|P|, |S| \geq 2/3|W|$.
Then any other $90$-drome factor of $W$ with center not shared with $P$ or $S$ has length less than $2|W|-(|P|+|S|)$.
\end{lemma}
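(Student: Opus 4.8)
The plan is to argue by contradiction. Suppose $W$ has a $90$-drome factor $Z$ with $|Z|\ge 2|W|-(|P|+|S|)$ whose centre agrees with the centre of neither $P$ nor $S$, and aim to contradict Lemma~\ref{lem:QRT-no-ddrome-periodic}. Write $n=|W|$, $a=|P|$, $b=|S|$ and $d=a+b-n$. From $a,b\ge\tfrac23 n$ we get $d\ge\tfrac13 n>0$, so $P$ and $S$ overlap; and $d<n$, since otherwise $a=b=n$ and $P=S=W$, contradicting that $P,S$ are distinct. The target bound is exactly $|Z|\ge n-d$. A useful standing reduction: the word operation $\phi$ that reverses $W$ and reflects every letter across the $x$-axis sends $90$-dromes to $90$-dromes — reversal turns a $90$-drome into a $270$-drome and the reflection turns a $270$-drome back into a $90$-drome — it exchanges prefixes with suffixes and preserves all three length hypotheses, so we may apply it whenever convenient.

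Next I would pin down where $Z=W[s..e]$ sits. Comparing its endpoints with those of $P$ and $S$ and using $|Z|\ge n-d$, one finds that $Z$ is a factor of $P$, a factor of $S$, or straddles the overlap $W[n-b+1..a]$ (its span properly containing that overlap). Applying $\phi$ if necessary we may assume $Z$ is a factor of $P$ or straddles the overlap; in the straddle case $|Z\cap P|+|Z\cap S|=|Z|+d$, so after a further $\phi$ we may assume $Z$ shares more than half its letters with $P$; and when $Z$ is a factor of $P$, reflecting $Z$ through the $90$-drome structure of $P$ (which again produces a $90$-drome factor of the same length, with centre $a+1-c_Z$) lets us additionally assume its centre $c_Z$ lies strictly left of the centre $c_P$ of $P$. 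I would also record the exact identities relating $|c_Z-c_P|$ and $|c_Z-c_S|$ to $a,b,n,|Z|$.

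The engine is the composition of two $90$-drome folds: for $90$-drome factors $U,V$ with centres $c_U<c_V$, reflecting through $V$ and then through $U$ yields, on the interval spanned by $U\cup V$, a relation $W[q]=\Rot{\delta(q)}{W[q-2(c_V-c_U)]}$ with $\delta(q)\in\{0,180\}$, where $\delta(q)=180$ on a single ``twisted window'' of length exactly $c_V-c_U$ and $\delta(q)=0$ on the rest of the valid range; off that window this is a genuine period of length $2(c_V-c_U)$. I would apply this and its iterate to $Z$ together with the $P$-reflection (or $S$-reflection) of $Z$ — whichever overlaps $Z$; the length inequalities $|Z|\ge n-d$ and $a,b\ge\tfrac23 n$ are what guarantee some such overlap and, once the twisted windows are excised, a periodic sub-interval longer than its period. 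That sub-interval then contains an honest $90$-drome (a central part of $Z$ or of its reflected image) longer than the period, contradicting Lemma~\ref{lem:QRT-no-ddrome-periodic}.

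I expect the real obstacles to be twofold. First, the twist bookkeeping: because the composed fold is a true period only away from a window of length $|c_V-c_U|$, I must choose the fold pair and the sub-interval so that a twist-free portion survives which is simultaneously longer than the period and placed to contain a genuine $90$-drome, and then verify — this is where the three length hypotheses are spent, down to the boundary cases $a=n$, $b=n$ and $a=b=\tfrac23 n$ — that no configuration attains the stated bound with equality. Second, the configurations in which $Z$'s centre is far from both $c_P$ and $c_S$ and no single naive fold overlaps $Z$: there $d$ is forced to be large, $P$ and $S$ overlap heavily, and one must instead exploit that the common factor $W[n-b+1..a]$ carries the two compatible symmetries inherited from $P$ and $S$ to extract a short period of $W$ itself. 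Everything else — the symmetry reduction, the trichotomy for the position of $Z$, and the closing appeal to Lemma~\ref{lem:QRT-no-ddrome-periodic} — is routine once this is arranged.
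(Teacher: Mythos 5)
You have found the right engine --- composing two $90$-drome reflections gives a translation by twice the distance between their centers with a letter twist in $\{\D{0},\D{180}\}$, which cannot coexist with the $\Rot{90}{\cdot}$ relation a third long $90$-drome would impose --- but the proposal does not become a proof, and the route chosen makes the hard part harder than necessary. Two concrete problems. First, one of your normalization steps is false: reflecting a $90$-drome factor $Z \Factor P$ through the center of $P$ does \emph{not} produce another $90$-drome. If the $P$-fold gives $W[q]=\Rot{90}{W[\sigma(q)]}$ and $Z[j]=\Rot{90}{Z[-j]}$, then the image factor $Z'$ satisfies $Z'[j]=\Rot{270}{Z'[-j]}$, i.e.\ it is a $270$-drome. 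So you may not ``additionally assume $c_Z$ lies strictly left of $c_P$,'' and the same issue undermines the endgame, where you need the reflected image of $Z$ to be ``an honest $90$-drome'' sitting inside the periodic window before invoking Lemma~\ref{lem:QRT-no-ddrome-periodic}. Second, and more fundamentally, the two ``obstacles'' you defer --- verifying that a twist-free sub-interval longer than its period survives in a position containing a central part of $Z$, and the configuration where $c_Z$ is far from both $c_P$ and $c_S$ --- are the entire content of the lemma, and they are generated by your choice of fold pairs: folding $Z$ against $P$ or $S$ produces a period $2|c_Z-c_P|$ that depends on the unknown location of $Z$, which is exactly what forces the case explosion you describe but do not resolve.

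The paper avoids all of this by composing the $P$-fold with the $S$-fold --- precisely the pairing you relegate to a fallback for one sub-case. Writing $W=P_LAS_R$ with $|A|=|W|-(|P|+|S|)/2$, that composition yields, for \emph{every} pair of positions at distance $2|A|=2|W|-(|P|+|S|)$, either $W[l]=W[r]$ or $W[l]=\Rot{180}{W[r]}$ (the $\D{180}$ window being exactly the positions whose $P$-reflection lands in $A$). Neither relation is compatible with $W[l]=\Rot{90}{W[r]}$, so no $90$-drome exceeding that length with a new center can exist: no case analysis on where $Z$ sits, no excision of twisted windows, and no appeal to Lemma~\ref{lem:QRT-no-ddrome-periodic} at all. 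If you replace all of your fold pairs with the single pair $(P,S)$, your argument collapses to the paper's; as written, the proposal has a genuine gap at each of the points you flag as ``obstacles,'' plus the incorrect $90$-drome/$270$-drome reduction.
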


\begin{proof}
Let $P = P_L P_R$ and $S = S_L S_R$ with $|P_L| = |P_R|$ and $|S_L| = |S_R|$.
So there exists $A \Factor W$ such that $W = P_L A S_R$ (see Figure~\ref{fig:third-ddrome-bound}).
Since $|P_L| + |A| + |S_R| = |W|$, $|A| \leq |W|/3$.
Then because $|P_R|, |S_L| \leq |W|/3$, it follows that $A \Factor P_R, S_L$. 

\begin{figure}[ht]
\centering
\includegraphics[scale=1.0]{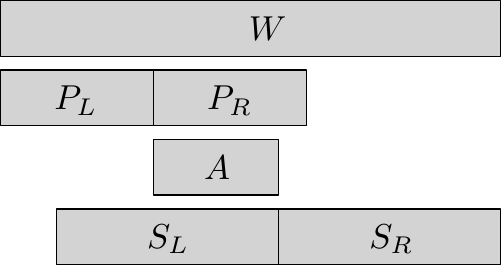}
\caption{The words used in the proof of Lemma~\ref{lem:QRT-third-ddrome-bound}.}
\label{fig:third-ddrome-bound}
\end{figure}

Let $1 \leq l, r \leq |W|$ such that $r = l + 2|A|$, $l < |P_L|$, and $(l+r)/2 \neq (|P|-1)/2$ (the center of $W[l..r]$ is not the center of $P$).
The approach is to prove $W[l] \neq \Rot{90}{W[r]}$ and thus any $90$-drome factor with center not shared with $P$, or $S$ by symmetry, has length at most $r-l = 2|A| = 2|W|-(|P|+|S|) < 2|W|-2/3(|P|+|S|)$. 
Since for any pair $1 \leq l, r \leq |W|$ such that $r-l = 2|A|$, either $l \leq |P_L|$ or $|P_L A| < r$, there are two cases: $l \leq |P_L|-|A|$ and $|P_L|-|A| < l \leq |P_L|$.

\textbf{Case 1: $\bm{l \leq |P_L|-|A|}$.}
\begin{equation*}
\begin{split}
W[l] &= P_L[l] \\ 
&= \Rot{90}{P_R[|P_R|-l]} \\
&= \Rot{90}{S_R[|P_R|-l-|A|]} \\
&= S_L[|S_L|-(|P_R|-l-|A|)] \\
&= P_LA[|P_LA|-(|P_R|-l-|A|)] \\
&= P_LA[l + 2|A|] \\
&= W[r] \\
\end{split}
\end{equation*}

\textbf{Case 2: $\bm{|P_L|-|A| < l \leq |P_L|}$.}
\begin{equation*}
\begin{split}
W[l] &= P_L[l] \\
&= \Rot{90}{P_R[|P_R|-l]} \\
&= \Rot{90}{A[|P_R|-l]} \\
&= \Rot{90}{S_L[|S_L|-|A|+(|P_R|-l)]} \\
&= \Rot{90}{S_L[|S_L|+(|P_R|-|A|-l)]} \\
&= \Rot{180}{S_R[-(|P_R|-|A|-l)]} \\
&= \Rot{180}{W[|P_L|+|A|-(|P_R|-|A|-l)]} \\
&= \Rot{180}{W[l+2|A|]}
\end{split}
\end{equation*}

\end{proof}

\begin{lemma}
\label{lem:QRT-prefix-ddrome-exp-growth}
Let $P, Q, W$ be $90$-dromes with $P, Q \Prefix W$ and $|P| < |Q| < |W|$.
Then $|P| < 2/3|W|$.
\end{lemma}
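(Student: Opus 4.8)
The plan is to mimic the structure of Lemma~\ref{lem:QRT-no-ddrome-periodic}: if a word has two nested prefix $90$-dromes $P \Prefix Q$, then $Q$ (and hence $W$) acquires a short period, and a short period forces all $90$-drome subwords — in particular $W$ itself — to be short. First I would apply Lemma~\ref{lem:HLF-in-between-palin}-style reasoning adapted to $90$-dromes: write $Q = P X$ with $P$, $Q$ both $90$-dromes. The reflection/rotation identity for $90$-dromes ($Y$ is a $90$-drome iff $Y = Z\,\Rot{270}{\Rev{Z}}$, i.e. $\ORot{90}$ applied appropriately) should, just as in the palindrome case, give $Q = \Rot{\alpha}{\Rev{Q}}$-type equalities from which one extracts that $\Rev{X}$ (or a rotated version of it) is a period of $Q$, so $Q$ has a period of length $|X| = |Q| - |P|$.

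Next I would bound $|X|$. Since $|P| < |Q|$ and both are prefixes of $W$ with $|Q| < |W|$, and $P$ is a $90$-drome subword of $Q$ which has period $|X|$, Lemma~\ref{lem:QRT-no-ddrome-periodic} gives $|P| \leq |X| = |Q| - |P|$, hence $|P| \leq |Q|/2$. That alone isn't quite $2/3|W|$, so I then push further: $Q$ is itself a $90$-drome with period $p = |X| \leq |Q| - |P| \le |Q|/2 < |W|$. If $|W| > \tfrac32 |P|$ we are done, so suppose $|P| \ge \tfrac23|W|$ toward a contradiction — then $|Q| \ge 2|P| \ge \tfrac43|W| > |W|$, contradicting $|Q| < |W|$. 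Wait: that uses $|P| \le |Q|/2$ i.e. $|Q| \ge 2|P|$, which combined with $|P| \ge \tfrac23|W|$ indeed gives $|Q| \ge \tfrac43|W|$, contradicting $|Q| < |W|$. So the bound $|P| < \tfrac23|W|$ drops out immediately once the period estimate $|Q| \ge 2|P|$ is established.

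So the whole argument reduces to the single key step: \emph{two nested prefix $90$-dromes $P \Prefix Q$ force $Q$ to have a period of length $|Q|-|P|$, and a period-length-$p$ $90$-drome has no $90$-drome subword longer than $p$ (Lemma~\ref{lem:QRT-no-ddrome-periodic} applied to $P$ inside $Q$) so $|P| \le |Q|-|P|$.} I expect the main obstacle to be getting the rotation bookkeeping exactly right: unlike palindromes, reversing a $90$-drome introduces a $\pm 90^\circ$ twist, so ``$\Rev{X}$ is a period of $Q$'' must be replaced by the correct statement that some fixed rotation $\Rot{\theta}{\Rev{X}}$ is a period, and one must check this still yields a genuine period (a length, with $Q[i] = Q[i+p]$) rather than merely a ``rotated period.'' The cleanest route is probably to avoid reversal entirely: directly verify from $P, Q$ both $90$-dromes and $Q = PX$ that $Q[i] = Q[i + |X|]$ for all valid $i$, by chasing the $90$-drome symmetry of $Q$ through the center and using that $P$, as a prefix of $Q$, is pinned to $Q$'s reflected suffix. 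Once that periodicity is in hand, the rest is the short arithmetic above.
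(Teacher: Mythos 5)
Your proof is correct, but it takes a genuinely different route from the paper's. The paper reduces the statement to Lemma~\ref{lem:QRT-third-ddrome-bound}: if $|Q| > 2/3|W|$ (the other case being trivial), then $Q$ and $W$ itself serve as the two long prefix/suffix $90$-dromes of $W$, and $P$ --- whose center differs from both --- is then forced to be short. You instead prove a periodicity fact: writing $Q = PX$, the two symmetries $Q[i] = \Rot{90}{Q[|Q|+1-i]}$ and $Q[i] = P[i] = \Rot{90}{Q[|P|+1-i]}$ for $i \le |P|$ combine to give $Q[j] = Q[j+|X|]$ for all $1 \le j \le |P| = |Q|-|X|$, i.e.\ $Q$ has a genuine period of length $|X| = |Q|-|P|$ (no residual rotation survives --- the two $\Rot{90}{\cdot}$'s cancel); Lemma~\ref{lem:QRT-no-ddrome-periodic} applied to the $90$-drome $P \Factor Q$ then yields $|P| \le |Q|-|P|$. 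Your instinct to verify the periodicity by direct index-chasing through the center rather than via reversal identities is the right call, and the computation does go through. Two remarks. First, you prove more than is asked: $|P| \le |Q|/2$ together with $|Q| < |W|$ already gives $|P| < |W|/2 < 2/3|W|$, so your final contradiction step is unnecessary and your bound is strictly stronger than the paper's. Second, your route is self-contained modulo Lemma~\ref{lem:QRT-no-ddrome-periodic} and bypasses Lemma~\ref{lem:QRT-third-ddrome-bound} entirely, whereas the paper's two-line derivation leans on that heavier structural result (and its displayed arithmetic appears garbled, though the intended application is clear); the trade-off is that you must establish the nested-prefix-$90$-drome periodicity claim yourself, which the paper never states explicitly.
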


\begin{proof}
If $|Q| \leq 2/3|W|$, then clearly the lemma holds.
Otherwise, Lemma~\ref{lem:QRT-third-ddrome-bound} implies that $|P| < 2|W|-2/3(|P|+|W|) = 4/3|W|-2/3|P| < 4/3|W|-2/3|W| = 2/3|W|$.
\end{proof}

% NOTE: If you change this lemma, also update it in overview.tex!
\begin{lemma}
\label{lem:QRT-long-ddrome-constant}
Let $W$ be a boundary word.
There exist $O(1)$ admissible $90$-drome factors of $W$ with length at least $|W|/3$.
\end{lemma}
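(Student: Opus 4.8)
The plan is to show that all admissible $90$-dromes of length at least $|W|/3$ fall into a constant number of ``center classes,'' each contributing exactly one admissible factor (since admissibility fixes the maximal $90$-drome at a given center). First I would partition the long $90$-dromes by whether their center coincides with that of one of two designated extreme $90$-dromes. Concretely, consider $W$ as a circular word and fix an arbitrary shift; let $P$ be a longest $90$-drome prefix of $W$ and $S$ a longest $90$-drome suffix. If either has length less than $2|W|/3$, then by Lemma~\ref{lem:QRT-prefix-ddrome-exp-growth} (applied to prefixes, and its suffix analogue) there is at most one $90$-drome prefix longer than $|W|/3$ and at most one such suffix, so only $O(1)$ long $90$-dromes touch either endpoint region; I will need to be slightly careful here, but the point is that ``few long prefixes/suffixes'' bounds the count near the two ends.

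The core step is the case where $P$ and $S$ are both long, $|P|,|S| \geq 2|W|/3$: then Lemma~\ref{lem:QRT-third-ddrome-bound} says every other long $90$-drome factor (with center distinct from those of $P$ and $S$) has length at most $2|W| - (|P|+|S|) \leq 2|W|/3$, which for $|P|,|S|$ sufficiently close to $|W|$ is below $|W|/3$ and hence excluded. So the only long $90$-dromes are those sharing a center with $P$ or $S$, and admissibility makes each such center contribute a single factor — that is $O(1)$. The remaining subtlety is the quantitative gap: if $|P|+|S|$ is only barely above $4|W|/3$ (say $|P| = |S| = 2|W|/3$), then $2|W|-(|P|+|S|)$ could be as large as $2|W|/3 > |W|/3$, so Lemma~\ref{lem:QRT-third-ddrome-bound} alone does not immediately kill all other long $90$-dromes.

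To close that gap I would iterate the argument: among all $90$-dromes of length $\geq |W|/3$, repeatedly extract the longest one, $X_1$, then the longest one with a different center, $X_2$, etc.; Lemma~\ref{lem:QRT-third-ddrome-bound} applied with $P,S$ replaced by $X_1,X_2$ forces $|X_3| < 2|W| - (|X_1|+|X_2|)$, and since $|X_1| \geq |X_2| \geq |X_3| \geq |W|/3$ we get $3 \cdot |W|/3 \leq |X_1|+|X_2|+|X_3| < 2|W|$, a contradiction once we have four classes; more precisely, with $|X_1|,|X_2| \geq |X_3|$ and $|X_3| \geq |W|/3$ the inequality $|X_3| < 2|W|-(|X_1|+|X_2|) \leq 2|W| - 2|X_3|$ gives $|X_3| < 2|W|/3$, which is not yet a contradiction, so one must chain one more step, applying the lemma to $X_2,X_3$ or re-running with the sharper bound, to conclude there are at most three (or some explicit constant number of) distinct centers of long $90$-dromes. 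Each center yields one admissible $90$-drome, so the total is $O(1)$. The main obstacle is nailing down this constant cleanly — pushing the $2/3$-length arithmetic of Lemma~\ref{lem:QRT-third-ddrome-bound} through enough iterations to bound the number of distinct centers by an explicit constant, rather than getting stuck at the non-contradictory bound $|X| < 2|W|/3$ after a single application.
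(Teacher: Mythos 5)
There is a genuine gap, and it is exactly at the point you flag as ``the main obstacle.'' Your iteration step applies Lemma~\ref{lem:QRT-third-ddrome-bound} ``with $P,S$ replaced by $X_1,X_2$,'' but that lemma's hypotheses are that $P$ is a \emph{prefix} and $S$ a \emph{suffix} of the word it is applied to, each of length at least $2/3$ of that word. Two arbitrary long admissible $90$-dromes $X_1,X_2$ of $W$ satisfy neither condition for $W$ itself, and if you instead apply the lemma to the shortest factor $G$ containing both, the hypothesis $|X_1|,|X_2|\geq \tfrac{2}{3}|G|$ fails whenever their centers are far apart: with $|X_1|=|X_2|=|W|/3$ and centers $|W|/2$ apart, $|G|$ is about $5|W|/6$, so $\tfrac{2}{3}|G|\approx \tfrac{5}{9}|W| > |W|/3$. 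So the greedy extraction of $X_1, X_2, X_3,\dots$ never gets to invoke the lemma, and, as you yourself note, even where it could be invoked a single application only yields the non-contradictory bound $|X_3|<2|W|/3$; the proposal does not actually supply the ``one more step'' that closes this.

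The missing idea is localization rather than iteration. The paper partitions the possible centers into $17$ windows each of length at most $|W|/16$ and argues within one window at a time: if $P$ and $S$ are the elements of a window that are a prefix and a suffix of the group's minimal containing factor $G$, then because their centers are within $|W|/16$ of each other one gets $|S|\leq |P|+|W|/8$ and $|G|\leq \tfrac{3}{2}|P|$, so the $2/3$ hypothesis of Lemma~\ref{lem:QRT-third-ddrome-bound} holds for $G$, \emph{and} the resulting bound $2|G|-(|P|+|S|)\leq |W|/4$ is strictly below $|W|/3$, excluding every other member of the window. That small window diameter is what simultaneously legitimizes the lemma and sharpens its conclusion past the $2|W|/3$ barrier; each window then contributes at most two admissible factors (at most $34$ in total). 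There is also a degenerate case your sketch does not address: the minimal containing factor $G$ of a group may itself be one of the admissible long $90$-dromes, so that no two \emph{distinct} group members serve as its prefix and suffix; the paper handles this by passing to a second, concentric factor $G'$. Your observation that admissibility yields at most one factor per center is correct and is used the same way in the paper, but the count of distinct centers is not established by your argument as written.
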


\begin{proof}
Consider the set $\mathscr{I}$ of all such factors with centers contained in a factor of $W$ with length at most $|W|/16$.
Let $G$ be the shortest factor of $W$ that contains all factors in $\mathscr{I}$.
Either there exist two distinct factors $P, S \in \mathscr{I}$ such that $P \Prefix G$ and $S \Suffix G$ or $G \in \mathscr{I}$.

First, assume the former.
Since the centers of $P$ and $S$ lie in a common factor of length at most $|W|/16$, $P \Prefix G$, and $S \Suffix G$, $|P|+|W|/8 \geq |S|$.
So $|G| \leq |P|/2 + |W|/16 + |S|/2 \leq |P| + |W|/16 + (|P|+|W|/8)/2 \leq |P|+|W|/8 \leq 3/2|P|$ and thus $2/3|G| \leq |P|$.
By symmetry, also $2/3|G| \leq |S|$.
Moreover, since every element of $\mathscr{I}$ is admissible, no pair of elements share the same center.
Then by Lemma~\ref{lem:QRT-third-ddrome-bound}, any factor $F \in \mathscr{I}$ with $F \neq P, S$ has length less than $2|G|-(|P|+|S|) \leq |P|+|W|/4+|S|-(|P|+|S|) = |W|/4 < |W|/3$.
So $\mathscr{I} = \{P, S\}$.

Next, assume the latter.
Let $G'$ be the shortest factor with the same center as $G'$ that contains all elements of $\mathscr{I}$ except $G$.
Since $G'$ is not admissible, $G' \not \in \mathscr{I}$.
Without loss of generality, there exists $P' \in \mathscr{I}$ such that $P' \Prefix G'$.
Using the same argument as before, $2/3|G'| \leq |P'|$.
Also, clearly $2/3|G'| \leq |G'|$ and since $P'$ is admissible and $G'$ is not, $P' \neq G'$.
Then any factor $F \in \mathscr{I}$ with $F \neq P', G', G$ has length less than $2|G'|-(|P'|+|G'|) = |G'|-|P'| \leq |W|/4$.
So $\mathscr{I} = \{P', G\}$.

Thus the maximum number of admissible $90$-drome factors of $W$ with length at least $|W|/3$ is at most $17|\mathscr{I}| = 34$, obtained by partitioning all such factors into at most~17 groups with centers contained in common factors of length at most $|W|/16$.
\end{proof}

\subsection{Long palindromes}

\begin{lemma}
\label{lem:QRT-presuf-palin-period}
Let $W$ be a word.
Let $P \Prefix W$, $S \Suffix W$ with $|P|, |S| \geq 2/3|W|$, $P \neq S$, and $P$, $S$ palindromes.
Then $W$ has a period of length $2|W|-(|P|+|S|)$.
\end{lemma}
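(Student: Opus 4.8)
The plan is to mimic the structure of Lemma~\ref{lem:QRT-third-ddrome-bound}, exploiting the fact that palindromes are exactly the $180$-dromes, i.e., $\Theta$-dromes with $\Rot{\Theta+180}{\cdot}$ equal to the identity. Write $P = P_L P_R$ and $S = S_L S_R$ with $|P_L| = |P_R|$ and $|S_L| = |S_R|$. Since $|P|, |S| \geq 2/3|W|$, the central overlap forces $W = P_L A S_R$ for some factor $A$ with $|A| = |W| - |P_L| - |S_R| \leq |W|/3$, and because $|P_R|, |S_L| \leq |W|/3 \geq |A|$ we get $A \Factor P_R$ and $A \Factor S_L$. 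The goal is to show $W$ has a period of length $2|W| - (|P| + |S|) = 2|A|$; equivalently, $W[i] = W[i + 2|A|]$ for every $1 \le i \le |W| - 2|A|$.

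Next I would fix such an $i$, set $j = i + 2|A|$, and do a case analysis on the position of $i$ relative to $|P_L|$, exactly parallel to the two cases in Lemma~\ref{lem:QRT-third-ddrome-bound}. In Case~1, $i \le |P_L| - |A|$: then $W[i] = P_L[i] = P_R[|P_R| - i + 1]$ (palindrome symmetry of $P$), and since $A$ is a suffix of $P_R$ \emph{and} a prefix of $S_L$ we can slide the index across $A$ into $S_L$, then use palindrome symmetry of $S$ to land in $S_R$, and finally re-read that as $W[j]$ after checking the index arithmetic gives $i + 2|A|$. In Case~2, $|P_L| - |A| < i \le |P_L|$: now $P_R[|P_R| - i + 1]$ lies inside the copy of $A$ at the end of $P_R$, which coincides with the copy of $A$ at the start of $S_L$; using palindrome symmetry of $S$ this reflects to a position in $S_R$, which as a factor of $W$ sits at index $i + 2|A|$. (There is a third range, $i > |P_L|$, but then $i$ and $j$ both lie inside $A S_R$, and since $A$ is a prefix of $S_L$ and $S$ is a palindrome, $W[i] = W[j]$ follows from $S$ being a palindrome applied to the suffix $A S_R$ of $S$; alternatively this range is empty when $|A| \ge |S_R|$, and otherwise handled symmetrically.) Each case is a short chain of index identities of the kind already displayed in the preceding proofs, so I would present them as two or three aligned computations without belaboring the off-by-one bookkeeping.

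The main obstacle is purely the index arithmetic: making sure the reflections through the centers of $P$ and $S$, composed with the two identifications of the block $A$ (as a suffix of $P_R$ and as a prefix of $S_L$), actually compose to the shift by $2|A|$, and that the claimed indices stay within the valid ranges $[1,|P|]$, $[1,|S|]$, $[1,|A|]$, and $[1,|W|]$ throughout. This is exactly the same bookkeeping that appears in the two displayed computations of Lemma~\ref{lem:QRT-third-ddrome-bound}, only now with both $\Rot{90}{\cdot}$ maps replaced by the identity, so the letter-value identities collapse and one is left checking positions alone. A secondary point to handle cleanly is the degenerate sub-case $|A| = 0$ (then $|P| + |S| = 2|W|$ and the claimed period has length $0$, which is vacuous) and the case where $A S_R$ and $P_L A$ overlap, i.e. $2|A| > |P_L|$; in the latter, restrict attention to the genuine range of $i$ and observe the overlapping region is governed directly by the palindromicity of $P$ (resp. $S$), exactly as the ``handling overlap'' discussions in earlier lemmas do. Once the positional identities are verified, periodicity of length $2|A| = 2|W| - (|P| + |S|)$ follows immediately.
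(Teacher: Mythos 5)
Your proposal matches the paper's proof, which likewise just reruns the two cases of Lemma~\ref{lem:QRT-third-ddrome-bound} with the $\D{90}$-rotations replaced by the identity (so both cases now yield $W[i]=W[i+2|A|]$) and then extends to all $1 \leq i \leq |W|-2|A|$ by symmetry. One orientation slip to correct when you do the index bookkeeping: in the decomposition $W = P_L A S_R$ the overlap block $A$ is a \emph{prefix} of $P_R$ and a \emph{suffix} of $S_L$ (not the reverse), so in your Case~1 the reflection through the center of $P$ lands directly in $S_R$, and the reflection through the center of $S$ then carries that position back into $S_L$ at index $i+2|A|$.
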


\begin{proof}
Let $P = P_L P_R$ and $S = S_L S_R$ with $|L_P| = |R_P|$ and $|L_S| = |R_S|$.
The proof proceeds identically to that of Lemma~\ref{lem:QRT-third-ddrome-bound}, except that since elements equidistant from the center of a palindrome are equal, rather than at a $\D{90}$ rotation as in a $90$-drome, $W[i] = W[i+2|A|]$ for all $i$ with either $1 \leq i \leq |P_L|-|A|$ (case~1) or $|P_L|-|A| < i \leq |P_L|$ (case~2). 
Thus, by symmetry, $W[i] = W[i+2|A|]$ for all $i$ with $1 \leq i \leq |W|-2|A|$.
\end{proof}

\begin{lemma}
\label{lem:QRT-middle-palins-nonadmissible}
Let $W$ be a word.
Let $P \Prefix W$, $S \Suffix W$ be distinct palindromes with $|P|, |S| \geq 2/3|W|$.
Then any admissible middle palindrome factor of $W$ has length at most $2|W|-(|P|+|S|)$.
\end{lemma}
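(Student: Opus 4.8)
The plan is to exploit the strong periodicity that two long palindromic affixes force on $W$, exactly as in the preceding lemma. By Lemma~\ref{lem:QRT-presuf-palin-period}, $W$ has a period of length $p := 2|W|-(|P|+|S|)$; note $p \geq 1$, since $p \leq 0$ would force $|P| = |S| = |W|$ and hence $P = W = S$, contradicting $P \neq S$. The goal is to show that a middle palindrome factor cannot be admissible once its length exceeds $p$, so that admissibility caps its length at $p = 2|W|-(|P|+|S|)$.

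So let $X = W[l..r]$ be an admissible middle palindrome factor and suppose, for contradiction, that $|X| \geq p+1$, i.e.\ $r-l \geq p$. Since $X$ is a middle factor it is not an affix, so $l \geq 2$ and $r \leq |W|-1$; hence $W[l-1]$, $W[r+1]$ are defined and $W[l-1..r+1]$ is a genuine factor of $W$. The key step is the chain
\[
W[l-1] \;=\; W[l-1+p] \;=\; W[r+1-p] \;=\; W[r+1],
\]
where the outer equalities are applications of the period $p$ and the middle one is the palindrome symmetry of $X$ evaluated at the position $l-1+p$, which lies in $[l,r]$ because $1 \leq p \leq r-l$. The index arithmetic stays in the valid ranges precisely because $X$ sits in the middle of $W$ and $r-l \geq p$: $l-1 \geq 1$, $(l-1)+p \leq r-1 \leq |W|$, $r+1-p \geq l+1 \geq 1$, and $(r+1-p)+p = r+1 \leq |W|$. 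From $W[l-1] = W[r+1]$ it follows that $W[l-1..r+1]$ is a palindrome with the same center as $X$ but length $|X|+2$, contradicting the admissibility (center-maximality) of $X$. Therefore $|X| \leq p$, as claimed.

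I expect no real obstacle: this is essentially the routine fact that a maximal interior palindrome in a word of period $p$ has length at most $p$, and the work of extracting the period was already done in Lemma~\ref{lem:QRT-presuf-palin-period}. The only points needing care are the index bookkeeping above and noticing that the hypothesis ``$X$ is a \emph{middle} factor'' is what provides room to extend on both sides --- this is the role played by ``center not shared with $P$ or $S$'' in Lemma~\ref{lem:QRT-third-ddrome-bound}; indeed the maximal palindrome about the center of $P$ (resp.\ $S$) is forced to be a prefix (resp.\ suffix), so an interior admissible palindrome automatically avoids both their centers, though our argument does not need this observation. If one prefers to track the preceding lemma literally, one can instead write $W = P_L A S_R$ as there and rerun its two-case computation with the $\D{90}$ rotations replaced by plain equalities; the periodicity argument above is shorter.
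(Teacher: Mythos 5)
Your proof is correct and follows essentially the same route as the paper: both derive the period $p=2|W|-(|P|+|S|)$ from Lemma~\ref{lem:QRT-presuf-palin-period} and then combine that period with the palindrome symmetry of the middle factor to show the letters immediately outside it are equal, contradicting admissibility. Your single shift by $p$ is a slightly cleaner bookkeeping of the paper's iterated shift $W[l+px]$, $W[r-px]$, but it is the same argument.
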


\begin{proof}
By Lemma~\ref{lem:QRT-presuf-palin-period}, $W$ has a period of length $p = 2|W|-(|P|+|S|)$.
Let $W[l-1..r+1]$ be a middle palindrome factor of $W$ with length more than $p$.
So $(r-1)-(l+1)+1 > p$ and thus $r-l \geq p+2$.

Consider $W[r-pi]$, $W[l+pi]$ for integers $i$ such that $1 \leq l+pi, r-pi \leq |W|$.
Since $W$ has a period of length $p$, $W[l+pi]$ is equal for all such values of $i$, as is $W[r-pi]$.

Let $f(i) = (r-pi)-(l+pi)$.
Then $f(0) = r-l \geq p+2$ and $f(i+1) = f(i)-2p$, so there exists an integer $x > 0$ such that $|f(x)| \leq p$.
Also, $W[l+px]$ and $W[r-px]$ are equidistant from the center of a palindrome of length at least $p+1$.
So $W[l+px] = W[r-px]$.
Combining this with the previous observation, $W[l] = W[l+px] = W[r-px] = W[r]$.
So $W[l+1..r-1]$ is not admissible.
\end{proof}

% NOTE: If you change this lemma, also update it in overview.tex!
\begin{lemma}
\label{lem:QRT-long-palindrome-ends-constant}
Let $W$ be a word.
There exists a $O(1)$-sized set $\mathscr{F}$ of factors of $W$ such that every admissible palindrome factor with length at least $|W|/3$ is an affix factor of an element of $\mathscr{F}$.
Moreover, $\mathscr{F}$ can be computed in $O(|W|)$ time.
\end{lemma}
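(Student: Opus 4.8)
The plan is to mirror the structure of the proof of Lemma~\ref{lem:QRT-long-ddrome-constant}, which handled the analogous statement for $90$-dromes, but with the palindrome-specific tools (Lemmas~\ref{lem:QRT-presuf-palin-period} and~\ref{lem:QRT-middle-palins-nonadmissible}) in place of the $90$-drome ones. First I would partition all admissible palindrome factors of length at least $|W|/3$ according to whether their center lies in one of $O(1)$ fixed factors of $W$ of length at most $|W|/16$ (there are at most $17$ such groups covering $W$). Fix one such group, let $\mathscr{I}$ be the palindromes in it, and let $G$ be the shortest factor of $W$ containing every element of $\mathscr{I}$. Exactly as in Lemma~\ref{lem:QRT-long-ddrome-constant}, either there are distinct $P, S \in \mathscr{I}$ with $P \Prefix G$, $S \Suffix G$, or $G$ itself is in $\mathscr{I}$.

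In the first case, the bound $|P| + |W|/8 \geq |S|$ (from the centers lying in a common short factor) forces $|P|, |S| \geq \tfrac{2}{3}|G|$, so Lemma~\ref{lem:QRT-middle-palins-nonadmissible} applies to $G$: any admissible middle palindrome factor of $G$ — in particular any element of $\mathscr{I}$ other than $P$ and $S$ — has length at most $2|G| - (|P|+|S|) \leq |W|/4 < |W|/3$, a contradiction. Hence $\mathscr{I} = \{P, S\}$. In the second case ($G \in \mathscr{I}$), I would take $G'$ to be the shortest factor cocentric with $G$ containing all of $\mathscr{I} \setminus \{G\}$; since $G$ is admissible (hence maximal about its center) and $G' \neq G$, $G'$ is not admissible. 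One of the two affixes of $G'$ of length $|\mathscr{I}\setminus\{G\}|$'s extreme element, say $P' \Prefix G'$, satisfies $|P'| \geq \tfrac{2}{3}|G'|$ by the same computation; then Lemma~\ref{lem:QRT-middle-palins-nonadmissible} (applied to $G'$, with $P'$ and $G'$ itself playing the roles of the two long affix palindromes, using $|G'| \geq \tfrac{2}{3}|G'|$ trivially) shows every remaining element has length at most $|G'| - |P'| \leq |W|/4$, so $\mathscr{I} = \{P', G\}$. Either way $|\mathscr{I}| \leq 2$, giving at most $34$ admissible palindrome factors of length $\geq |W|/3$ in total.

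For the final conclusion of the lemma I would not just count these factors but organize them: the set $\mathscr{F}$ consists, for each of the $O(1)$ groups, of the factor $G$ (or $G'$) constructed above, so every admissible long palindrome factor is an affix (prefix or suffix) of some element of $\mathscr{F}$. To compute $\mathscr{F}$ in $O(|W|)$ time, run Manacher's algorithm on $WW$ (as already invoked in Theorem~\ref{thm:HLF-algorithm}) to obtain all maximal palindromes, truncate to palindromes of $W$, discard those of length $< |W|/3$, bucket the survivors by their center into the $17$ fixed windows, and in each bucket take the extreme endpoints to form $G$; all of this is linear. I expect the main obstacle to be the bookkeeping in the second case — verifying that $G'$ is genuinely non-admissible and that the hypotheses of Lemma~\ref{lem:QRT-middle-palins-nonadmissible} are met with the degenerate choice of "long affix palindrome" being $G'$ itself — together with being careful that the element of $\mathscr{F}$ recorded really does have every long admissible palindrome as an affix and not merely as an arbitrary factor.
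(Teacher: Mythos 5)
Your overall plan---partition the long admissible palindromes by center into $17$ windows, take the shortest containing factor $G$ (and the cocentric $G'$ in the degenerate case), and invoke Lemma~\ref{lem:QRT-middle-palins-nonadmissible} to exclude long middle factors---is exactly the paper's proof, and your definition and $O(|W|)$-time computation of $\mathscr{F}$ match it. However, there is a genuine error in your first case: from ``every admissible middle palindrome factor of $G$ has length $< |W|/3$'' you conclude ``a contradiction, hence $\mathscr{I} = \{P,S\}$,'' and later that there are ``at most $34$ admissible palindrome factors of length $\geq |W|/3$ in total.'' That inference silently assumes that every element of $\mathscr{I}$ other than $P$ and $S$ is a \emph{middle} factor of $G$. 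It need not be: a third long admissible palindrome can be a proper prefix or suffix of $G$ with a center different from those of $P$ and $S$, and then no contradiction arises. Indeed, the number of admissible palindromes of length at least $|W|/3$ is not $O(1)$ in general---in a highly periodic word such as $(\Up\Right)^k$, the maximal palindrome about each center near the middle is a distinct affix of the word of length close to $|W|$, giving $\Theta(|W|)$ of them. This is precisely why the present lemma is phrased in terms of $O(1)$ \emph{containers} with an affix property rather than as an $O(1)$ count, in contrast to Lemma~\ref{lem:QRT-long-ddrome-constant}, where Lemma~\ref{lem:QRT-third-ddrome-bound} bounds all non-cocentric $90$-dromes whether or not they are affixes.

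The correct conclusion of your middle-factor bound is the weaker statement the paper proves: every $Q \in \mathscr{I}$ with $Q \neq P, S$ satisfies $Q \Affix G$, so $\mathscr{F}_i = \{G\}$ suffices in the first case; in the second case every element of $\mathscr{I}$ other than $G$ is an affix of $G'$, and $\mathscr{F}_i$ must contain \emph{both} $G$ and $G'$ (your ``the factor $G$ (or $G'$)'' reads as either/or, but $G$ is itself a long admissible palindrome and is not a factor of the strictly shorter cocentric $G'$, while the remaining elements are middle, not affix, factors of $G$). Because your final $\mathscr{F}$ is defined as the set of containers rather than as the (mis)counted palindromes themselves, the lemma's conclusion survives, but the claims $\mathscr{I} = \{P,S\}$, $\mathscr{I} = \{P',G\}$, and the global bound of $34$ palindromes are false as written and must be replaced by the affix statements.
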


\begin{proof}
\textbf{Three factors.}
Let $P, Q, S$ be distinct admissible palindrome factors of $W$ with length at least $|W|/3$ and centers contained in a factor of $W$ with length at most $|W|/16$.

Let $G \Factor W$ be the shortest factor such that $P, Q, S \Factor G$.
Now it is proved that if $G \neq P, Q, S$, then $P, Q, S \Affix G$.

Without loss of generality, suppose $P \Prefix G$, $S \Suffix G$.
Since the centers of $P$ and $S$ lie in a common factor of length at most $|W|/16$ and $S \neq G$, $|P| + |W|/8 \geq |S|$.
So $|G| \leq |P|/2 + |W|/16 + |S|/2 \leq |P|/2 + |W|/16 + (|P|+|W|/8)/2 = |P|+|W|/8 \leq 3/2|P|$ and thus $2/3|G| \leq |P|$.
By symmetry, also $2/3|G| \leq |S|$.
Then by Lemma~\ref{lem:QRT-middle-palins-nonadmissible}, $|Q| \leq 2|G| - (|P|+|S|) \leq |P| + |W|/8 + |S| - (|P|+|S|) = |W|/8$.
So $Q \not \Middle G$ and thus $Q \Affix G$.

\textbf{More than three factors.}
Consider a set $\mathscr{I} = \{F_1, F_2, \dots, F_m\}$ of at least three admissible factors of $W$ of length at least $|W|/3$ such that the centers of the factors are contained in a common factor of $W$ of length $|W|/16$.
Now it is proved that every element of $\mathscr{I}$ is an affix factor of one of the two factors.

Let $G \Factor W$ be the shortest factor such that $F \Factor G$ for every $F \in \mathscr{I}$.
Either there exist distinct $P, S \in \mathscr{I}$ with $P \Prefix G$, $S \Suffix G$, or that $G \in \mathscr{I}$ and every $F \in \mathscr{I}$ besides $G$ has $F \Middle G$.

In the first case, the previous claim regarding three factors implies $F \Affix G$ for any $F \in \mathscr{I}$ with $F \neq P, S$.
Also $P, S \Affix G$.
So every factor in $\mathscr{I}$ is an affix factor of $G$.

In the second case, let $G' \Factor G$ be the shortest factor with the same center as $G$ such that every factor in $\mathscr{I}$ excluding $G$ is a factor of $G'$.
Since $G'$ is not admissible, $G' \not \in \mathscr{I}$.
Without loss of generality, there exists $P \in \mathscr{I}$ such that $P \Prefix G'$.
Since $P \in \mathscr{I}$ and $G' \not \in \mathscr{I}$, $P \neq G'$.
Using the same argument as before, $2/3|G'| \leq |P|$.
Then by Lemma~\ref{lem:QRT-middle-palins-nonadmissible}, every middle factor of $G'$ in $\mathscr{I}$ has length at most $2|G'|-(|G'|+|P|) \leq |G'| - |P| \leq |W|/3$.
So every factor of $G'$ in $\mathscr{I}$ is an affix factor of $G'$.
Thus every factor in $\mathscr{I}$ is either $G$ or an affix factor of $G'$.

\textbf{All factors.}
Partition $W$ into 17 factors $I_1, I_2, \dots, I_{17}$ each of length at most $|W|/16$.
Let $\mathscr{I}_i$ be the set of factors with centers containing letters in $I_i$.
Then by the previous claim regarding more than three factors, there exists a set $\mathscr{F}_i$ ($G$ and possibly $G'$) such that every element of $\mathscr{I}_i$ is an affix factor of an element of $\mathscr{F}_i$ and $|\mathscr{F}_i| \leq 2$.
So every admissible $F \Mirror W$ with $|F| \geq |W|/3$ is an affix factor of an element of $\mathscr{F} = \bigcup_{i=1}^{17}{\mathscr{F}_i}$ and $|\mathscr{F}| \leq 2\cdot17$.

\textbf{Computing $\bm{\mathscr{F}}$.}
The proof as described nearly yields the algorithm.
Use Manacher's algorithm~\cite{Manacher-1975} to compute all $O(|W|)$ admissible palindrome factors of $W$ in $O(|W|)$ time.
Remove all factors shorter than $|W|/3$ and group the remainder into~17 groups $I_1, \dots, I_{17}$ as previously described.
For each group of palindromes, sort the indices of the first and last letters separately.
Use the first two values in the sorted lists, including ties, to compute and output $G$, and $G'$, if necessary.
\end{proof}

\subsection{Algorithm}

% Prove A, B, C are maximal.
% All solutions contain a long palindrome or long 90-drome
% Compute the O(1) locations where these all start/end.
% For each location, iterate through O(\log^2{|W|}) double 90-dromes, 
%  check for each if the remainder of W can be covered by a palindrome. 

\begin{lemma}
\label{lem:QRT-factors-admissible}
Let $P$ be a polyomino and $\Bou{P} = ABC$ with $A$ a palindrome and $B$, $C$ $90$-dromes.
Then $A$, $B$, $C$ are admissible.
\end{lemma}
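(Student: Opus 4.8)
The plan is to mirror the structure of the proof of Lemma~\ref{lem:HLF-half-turn-factors-admissible}, using the parity argument on consecutive-letter pairs (Proposition~6 of~\cite{Daurat-2005}) to force each of $A$, $B$, $C$ to be maximal about its center. Recall the two sets $\mathcal{R} = \{\Left\Up, \Up\Right, \Right\Down, \Down\Left\}$ and $\mathcal{L} = \{\Up\Left, \Left\Down, \Down\Right, \Right\Up\}$, and that in the boundary word of a simple polyomino the number of consecutive-letter pairs from $\mathcal{R}$ exceeds the number from $\mathcal{L}$ by exactly four. The key algebraic facts I will need are: (i) a palindrome contains equally many $\mathcal{L}$- and $\mathcal{R}$-pairs, since reversing a word swaps $\mathcal{L}$ and $\mathcal{R}$; and (ii) a $90$-drome contributes pairs in a controlled way — if $X = Y\,\Rot{270}{\Rev{Y}}$, then the pairs inside the first half $Y$ and the pairs inside the second half $\Rot{270}{\Rev{Y}}$ are related by a $\D{90}$-rotation, which maps $\mathcal{R}$ to $\mathcal{R}$ and $\mathcal{L}$ to $\mathcal{L}$ (each of the four arrows of $\mathcal{R}$ rotates to another arrow of $\mathcal{R}$), so a $90$-drome also contains equally many $\mathcal{L}$- and $\mathcal{R}$-pairs except possibly for the single pair straddling its center.

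First I would establish admissibility of $B$ and $C$ as $90$-dromes. Recall a $90$-drome $X$ with $W = XU$ is admissible when $U[-1] \neq \Rot{270}{U[1]}$ — equivalently, $X$ is not extendable to a longer $90$-drome with the same center. Suppose $C$ is not admissible; then the letters flanking the factorization point between $A$ (well, $BC$) and — more precisely, writing $\Bou{P} = VC$ with $V = AB$, non-admissibility means $V[-1] = \Rot{270}{V[1]}$ up to the boundary wrapping, so $C$ could be extended on both ends keeping its center. I would argue that this forces a local configuration $V[-1]\,V[1] = V[-1]\,\Rot{270}{V[-1]}$ appearing as a consecutive pair in the circular word; such a pair $x\,\Rot{270}{x}$ — e.g. $\Up\Left$, $\Left\Down$ — always lies in $\mathcal{L}$, and a short case analysis shows it creates a non-simple crossing of the boundary (the boundary turns clockwise by $\D{90}$, then the extended $90$-drome's symmetry forces the path to re-close), contradicting simplicity. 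The same argument applies to $B$ by the symmetric factorization $\Bou{P} = B(CA)$ after a cyclic shift, and to $A$ viewed as a palindrome: if $A$ is not admissible then $\Bou{P} = A U$ with $U[-1] = U[1]$, but then following the palindrome symmetry of $A$ one unit further on each side yields equal letters adjacent in a way that pinches the boundary.

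Alternatively — and this is the cleaner route I would actually pursue — I would run the counting argument globally. The four ``excess'' $\mathcal{R}$-pairs of $\Bou{P}$ must be distributed among the three factors $A$, $B$, $C$ and the (at most) three pairs straddling the factorization boundaries $A|B$, $B|C$, $C|A$. Since $A$ is a palindrome it carries zero net excess, and since $B$, $C$ are $90$-dromes each carries net excess at most one, coming only from its central pair. So the three boundary-straddling pairs must account for at least two of the four units of excess, and in fact, matching the count exactly, I expect to conclude that all three boundary pairs lie in $\mathcal{R}$ and both $90$-drome centers contribute $+1$. From ``the pair straddling $C|A$ lies in $\mathcal{R}$'' — that pair is $C[-1]A[1]$ — together with $A$ being a palindrome (so $A[1] = A[-1]$) and the analogous facts at the other boundaries, I would derive exactly the inequalities $C[-1] \neq \Rot{270}{A[1]}$, etc., which are the admissibility conditions. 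The main obstacle will be the bookkeeping in fact (ii): carefully verifying that the $\D{90}$-rotation pairing a $90$-drome's two halves is consecutive-pair-preserving up to the central straddle, and handling the degenerate cases where one of $A$, $B$, $C$ has length $0$ or $1$ (where ``center'' and ``straddling pair'' coincide or vanish) — these edge cases, as in Lemma~\ref{lem:HLF-half-turn-factors-admissible}, will need to be dispatched individually using simplicity of $\Bou{P}$.
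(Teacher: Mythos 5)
Your main proposed route (the global counting argument) is genuinely different from the paper's proof. The paper does not invoke Proposition~6 of Daurat--Nivat here at all; it argues locally: writing $\Bou{P}=XU$ cyclically for each factor $X$, it chains the first/last-letter relations $A[1]=A[-1]$, $B[1]=\Rot{90}{B[-1]}$, $C[1]=\Rot{90}{C[-1]}$ together with the single simplicity fact that no letter is followed by its complement, and directly deduces the non-extendability condition in each of a few short cases (e.g.\ for $A$: $B[1]=\Rot{90}{B[-1]}\neq\Rot{90}{\Comp{C[1]}}=C[-1]$). That is essentially your first, vaguer sketch --- but as you state it (``a short case analysis shows it creates a non-simple crossing'') it is not yet a proof, and the crossing picture is not even the right mechanism: the only simplicity input needed is the absence of a factor $x\Comp{x}$.

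The counting route can be made to work, but your proposal has a concrete error in the step you rely on. When all three factors are nonempty, each nonempty $90$-drome contributes net excess \emph{exactly} $+1$ (its central pair $a\,\Rot{270}{a}$ always lies in $\mathcal{R}$, and the remaining pairs cancel because the two halves are related by rotation \emph{composed with reversal}, which swaps $\mathcal{L}$ and $\mathcal{R}$ --- note your stated justification, a pure $\D{90}$-rotation preserving $\mathcal{R}$, would make the halves' excesses add rather than cancel, contradicting your own conclusion). So the three straddling pairs must contribute $4-2=+2$, not $+3$: they \emph{cannot} all lie in $\mathcal{R}$; exactly one of them is an equal-letter pair. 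Your claimed conclusion (``all three boundary pairs lie in $\mathcal{R}$'') is therefore false, and the derivation of the admissibility inequalities has to be redone allowing one neutral junction (it does still go through --- e.g.\ for $A$ one gets $B[1]\in\{\Rot{180}{C[-1]},\Rot{270}{C[-1]}\}$, either way $\neq C[-1]$ --- but none of this is in your write-up). You would also need the degenerate cases; the count itself shows $|B|,|C|>0$ always, which simplifies matters but again is not observed. As written, the proposal does not close.
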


\begin{proof}
%\textbf{At most one of $|A|$, $|B|$, $|C|$ equal~0.}
%Suppose, for the sake of contradiction, that more than one of $|A|$, $|B|$, $|C|$ are~0.
%Then $W$ is is a single factor: either a palindrome or a $90$-drome.
%Since $|W| > 0$ and simple, the endpoints of the factor are at different locations than the midpoint.
%Because $W$ is a palindrome or $90$-drome, one endpoint of $A$ is a $\D{180}$ or $\D{90}$ rotation of the other endpoint around the midpoint of $A$.
%So the endpoints of $A$ are distinct. 
%Thus $|BC| > 0$ and at most one of $|B|$, $|C|$ is~0.
\textbf{$A$ is admissible.}
Without loss of generality, $|B| > 0$ and either $|C| = 0$ or $|C| > 0$.
If $|C| = 0$, then $Y[1] = B[1] = \Rot{90}{B[-1]} \neq \Rot{0}{Y[-1]}$.
If $|C| > 0$, then $Y[1] = B[1] = \Rot{90}{B[-1]} \neq \Rot{90}{\Comp{C[1]}} = \Rot{270}{(\Rot{90}{C[-1]})} = \Rot{0}{Y[-1]}$.

\textbf{$B$, $C$ are admissible.}
If $|A| = 0$, then $|B|, |C| > 0$.
So $C[1] = \Rot{90}{C[-1]} \neq \Rot{270}{C[-1]}$.
So $W = BC$ with $C[1] \neq \Rot{270}{C[-1]}$ and thus $B$ is admissible.
If $|A| > 0$, then $C[1] = \Rot{-90}{C[-1]} \neq \Rot{-90}{\Comp{A[1]}} = \Rot{270}{A[1]} = \Rot{270}{A[-1]}$.
So $W = BY$ with $Y[1] = C[1] \neq \Rot{270}{A[-1]}$ and thus $B$ is admissible.
Combining both cases, $B$ is always admissible.
By symmetry, $C$ is also always admissible.
\end{proof}

\begin{theorem}
Let $P$ be a polyomino with $|\Bou{P}| = n$.
It can be decided in $O(n)$ time if $\Bou{P}$ has a quarter-turn factorization.
\end{theorem}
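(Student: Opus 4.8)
The plan is to reduce the question to: find three cut points on the circular word $W$ splitting it, in cyclic order, into a palindrome $A$ and two $90$-dromes $B$, $C$. Since $|A|+|B|+|C|=n$, at least one of the three factors has length $\geq n/3$, and by Lemma~\ref{lem:QRT-factors-admissible} all three factors of any quarter-turn factorization are admissible. So it suffices to guess which factor is long, pin down its $O(1)$ candidate occurrences, and then recover the other two factors quickly. As preprocessing I would compute, in $O(n)$ time: a circular version of Manacher's algorithm to test in $O(1)$ time whether a given factor of $W$ is a palindrome; a $90$-drome analogue of Manacher's algorithm to test in $O(1)$ time whether a given factor is a $90$-drome; the $O(1)$ admissible $90$-drome factors of length $\geq n/3$ (Lemma~\ref{lem:QRT-long-ddrome-constant}); and the $O(1)$-sized set $\mathscr{F}$ of factors such that every admissible palindrome factor of length $\geq n/3$ is an affix of an element of $\mathscr{F}$ (Lemma~\ref{lem:QRT-long-palindrome-ends-constant}). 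I would also record that, by iterating Lemma~\ref{lem:QRT-prefix-ddrome-exp-growth}, every word has only $O(\log n)$ prefix (resp.\ suffix) $90$-dromes.

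First I would handle the case that a $90$-drome is long. For each of the $O(1)$ admissible $90$-dromes $D$ of length $\geq n/3$, and for each of the roles $D=B$ and $D=C$, the complementary arc $V$ of $W$ is a fixed factor of length $\leq 2n/3$ that must split either as a $90$-drome followed by a palindrome (if $D=B$, the arc read just after $D$ is $CA$) or as a palindrome followed by a $90$-drome (if $D=C$, it is $AB$). In either case one of the two pieces is a $90$-drome that is an affix of $V$; I would enumerate its $O(\log n)$ possibilities using a single $90$-drome Manacher run on $V$ in $O(|V|)$ time, and for each one check in $O(1)$ time whether the remaining affix is a palindrome (allowing zero-length factors and checking the boundary-letter conditions for admissibility). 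Since there are $O(1)$ choices of $D$, this costs $O(n)$ in total.

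Next I would handle the case that the palindrome $A$ is long. By Lemma~\ref{lem:QRT-long-palindrome-ends-constant}, $A$ is a prefix or a suffix of some $F\in\mathscr{F}$, so one endpoint of $A$ is one of $O(1)$ known positions; fix it, say the right endpoint $r$ (a fixed left endpoint is symmetric). Then $B$ begins at position $r+1$, hence $B$ is one of the $O(\log n)$ prefix $90$-dromes starting at $r+1$; for each choice of $B$, its right endpoint $s$ is determined, and the remaining arc $V=W[s+1..r]$ must factor as a $90$-drome $C$ followed by the palindrome $A$, where $|A|\geq n/3\geq|V|/3$. So it remains, for $O(\log n)$ fixed words $V$, to decide whether $V=CA$ with $C$ a $90$-drome and $A$ a long palindromic suffix.

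The hard part will be this last step: a priori $V$ can have $\Theta(n)$ long palindromic suffixes and there are $\Theta(\log n)$ words $V$, so naively re-running a $90$-drome Manacher on each $V$ would cost $\Theta(n\log n)$. To stay within $O(n)$ I would, during the $O(n)$ preprocessing, build a $90$-drome analogue of the palindromic tree (eertree) for $W$ with series links, so that the $O(\log n)$ prefix $90$-dromes starting at any queried position can be listed in $O(\log n)$ time; dually, by applying Lemma~\ref{lem:HLF-prefix-palin-factor} to the reversal, the long palindromic suffixes of $V$ are organized into $O(1)$ arithmetic-progression clusters, and within a cluster the underlying factor of $W$ is periodic, so Lemma~\ref{lem:QRT-no-ddrome-periodic} forces the adjacent $90$-drome $C$ to be short and the cluster collapses to $O(1)$ candidate pairs $(C,A)$, each checkable in $O(1)$ time via the Manacher tables. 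With these pieces, each of the $O(\log n)$ words $V$ is processed in $O(\log n)$ time, so the palindrome case also runs in $O(n)$ time; combining it with the $90$-drome case and the $O(n)$ preprocessing gives the claimed bound.
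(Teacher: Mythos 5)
Your overall skeleton matches the paper's: pigeonhole forces a factor of length at least $n/3$, Lemma~\ref{lem:QRT-factors-admissible} restricts attention to admissible factors, Lemma~\ref{lem:QRT-long-ddrome-constant} gives $O(1)$ long $90$-dromes, Lemma~\ref{lem:QRT-long-palindrome-ends-constant} gives $O(1)$ candidate endpoints for a long palindrome, and iterating Lemma~\ref{lem:QRT-prefix-ddrome-exp-growth} gives $O(\log n)$ $90$-dromes starting (or ending) at any fixed position. Your long-$90$-drome case is sound. The gap is in the long-palindrome case, in the step you yourself flag as the hard part. Having fixed the right endpoint $r$ of $A$ and a candidate $B$ ending at $s$, you try to decide whether $V = W[s+1..r]$ splits as $CA$ by clustering the long palindromic suffixes of $V$ into $O(1)$ arithmetic progressions and invoking Lemma~\ref{lem:QRT-no-ddrome-periodic} to conclude that $C$ must be short. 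That lemma bounds a $90$-drome that is \emph{entirely} a subword of a $p$-periodic word; here the $p$-periodic region induced by a cluster of suffix palindromes is only the suffix of $V$ covered by the longest palindrome in the cluster, and $C = V[1..|V|-|A_j|]$ merely overlaps that region in a suffix (it lies wholly inside it only when $V$ is itself a palindrome). So no bound on $|C|$ follows, the cluster does not collapse to $O(1)$ candidate pairs, and your subroutine either misses valid factorizations or must test $\Theta(n)$ members of a cluster, breaking the time bound.

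The fix is to not treat $A$'s left endpoint as the unknown at all: $C$ is itself an admissible $90$-drome that starts at the known position $s+1$, so by the same Lemma~\ref{lem:QRT-prefix-ddrome-exp-growth} argument there are only $O(\log n)$ candidates for $C$; once $B$ and $C$ are both chosen, $A$ is completely determined and a single Manacher/LCE lookup decides whether it is a palindrome. This is exactly what the paper does — it enumerates the $O(\log^2 n)$ ``double admissible $90$-dromes'' $BC$ beginning at $W[r+1]$ (and, symmetrically, ending at $W[l-1]$ when the left endpoint of $A$ is the known one) and checks the induced $A$. This also removes the need for your speculative ``$90$-drome eertree with series links'': since only admissible (maximal-per-center) $90$-dromes need be considered and there are $O(n)$ of them, bucketing them by first and last letter with a counting sort already gives, in $O(n)$ total time, the length-sorted $O(\log n)$-sized candidate lists at every position.
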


\begin{proof}
\textbf{Computing admissible factors.}
Use Lemma~\ref{lem:HLF-longest-common-extension} to preprocess $W$ and $\Rev{W}$ in $O(|W|)$ time to allow $O(1)$-time computation of the longest palindrome for each center $W[i]$ or $W[i..i+1]$, i.e., the admissible palindrome factors of $W$.
Do the same with $W$ and $\Rot{90}{\Rev{W}}$ to compute the admissible $90$-dromes of $W$.
For each letter $W[i]$, construct a length-sorted list of the admissible $90$-dromes that start at $W[i]$; do the same for those that end at $W[i]$.
Also construct similar lists of the admissible palindromes.

\textbf{Long factors.}
By pigeonhole, any quarter-turn factorization has at least one \emph{long} factor of length at least $|W|/3$.
Lemma~\ref{lem:QRT-long-ddrome-constant} states that there are $O(1)$ long $90$-drome factors, while Lemma~\ref{lem:QRT-prefix-ddrome-exp-growth} implies that the long palindrome factors can be summarized by a $O(|W|)$-time-computable, $O(1)$-sized set of letters that contains either the first or last letter of every such factor.
Finally, Lemma~\ref{lem:QRT-prefix-ddrome-exp-growth} implies that for any letter, there are $O(\log{|W|})$ $90$-drome factors that start (or end) at the letter, and thus $O(\log^2{|W|})$ double $90$-drome factors that start (or end) at the letter.
The approach is to search for factorizations containing long palindrome or $90$-drome factors separately by guessing the $90$-drome factors of the factorization, given either a long $90$-drome factor or the location of the first or last letter of a long palindrome factor.
The correctness follows from observing that every factorization can be constructed in this way.

\textbf{Long $90$-drome factors.}
Scan through the $O(|W|)$ admissible $90$-drome factors and extract the $O(1)$ long factors.
For each such factor $W[i..j]$, use the previously computed lists to combine this factor with the admissible $90$-drome factors that end at $W[i-1]$ and start at $W[j+1]$, yielding all factorizations $W = A D_1 D_2$ such that $D_1$ or $D_2$ equals $W[i..j]$ and both are admissible $90$-dromes. 
Then check for each factorization if $A$ is a palindrome, again using the previously computed lists. 
There are $O(1)$ choices for $W[i..j]$, $O(\log{|W|})$ factorizations containing it, and checking if $A$ is a palindrome takes $O(\log{|W|})$ time.
So $O(\log^2{|W|})$ total time is spent.

\textbf{Large palindrome factors.}
Let $W[i]$ be a letter in the aforementioned set of letters summarizing the long palindrome factors.
Iterate through the double admissible $90$-dromes $D_1 D_2$ starting at $W[i+1]$ using the previous lists, including choices with $|D_2| = 0$.
Each choice induces a factorization $W = A D_1 D_2$; check if $A$ is a palindrome.
Repeat the same process, but for factors $D_1 D_2$ that end at $W[i-1]$.
There are $O(\log^2{|W|})$ choices for the double admissible $90$-domes, and checking if $A$ is a palindrome takes $O(\log{|W|})$ time.
So $O(\log^3{|W|})$ total time is spent.
\end{proof}

%\input{quarter-turn-count}

% CODE: T1R

\section{Type-1 Reflection Factorizations}
\label{sec:T1R}

% IH 2
% Berglund (http://www.angelfire.com/mn3/anisohedral/isohedral.html) 3.
\begin{definition}
A \emph{type-1 reflection factorization} of a boundary word $W$ has the form $W = A B \Refl{\Theta}{B} \Back{A} C \Refl{\Phi}{C}$ for some $\Theta$, $\Phi$.
\end{definition}

\subsection{Enumerating square and reflect square factors}

This section starts by reviewing the $O(|W|\log{|W|})$-time algorithm of Main and Lorentz for enumerating all square factors of a word.
Their algorithm is then modified to output a set of $O(|W|\log{|W|})$ reflect square factors of a boundary word $W$ that contains all factors $B \Refl{\Theta}{B}$ and $C \Refl{\Phi}{C}$ found in type-1 reflection factorizations of $W$.

Let $W = UV$ with $|U| = |V| \pm 1$.
The algorithm by Main and Lorentz first enumerates all squares in $W$ that do not lie in $U$ or $V$, then recursively calls the algorithm on $U$ and $V$.
The key is a $O(|W|)$-time algorithm for enumerating the squares that do not lie in $U$ or $V$.
This algorithm relies on the following algorithmic and structural lemmas:

\begin{lemma}[\cite{Main-1984}]
\label{lem:T1R-main-lcp}
Let $A$, $B$ be words.
The longest common prefix of $A$ and $B[i..|B|]$ for all $1 \leq i \leq |B|$ can be computed in $O(|B|)$ total time.
\end{lemma}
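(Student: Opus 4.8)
The plan is to reduce the statement to the classical linear-time \emph{Z-array} computation (the ``fundamental preprocessing'' of a word), which, given a single word $S$, computes for every position $j$ the length $Z[j]$ of the longest common prefix of $S$ and the suffix $S[j..|S|]$, all in $O(|S|)$ time. Given $A$ and $B$, I would pick a symbol $\$$ that occurs in neither word (over a fixed alphabet, one simply treats the junction as an artificial mismatch), form $S = A \$ B$, and run the Z-array algorithm on $S$ in $O(|A|+|B|)$ time. For $1 \le i \le |B|$ the suffix of $S$ beginning at position $j = |A|+1+i$ is exactly $B[i..|B|]$; and since $\$$ occurs in $S$ only at position $|A|+1$, the prefix comparison underlying $Z[j]$ matches the letters of $A$ against those of $B[i..|B|]$ one by one and is forced to stop at or before the $\$$. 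Hence $Z[j]$ is precisely the longest common prefix of $A$ and $B[i..|B|]$, and reading off these $|B|$ values costs a further $O(|B|)$ time. In the applications of this lemma $A$ is a factor of $B$ (indeed, a reversed half of the string being recursed on), so $|A| = O(|B|)$ and the total is $O(|B|)$; in general it is $O(|A|+|B|)$, or $O(|B|)$ after an $O(|A|)$-time preprocessing of $A$ reused across recursive calls.

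For self-containedness I would also recall why the Z-array itself is computed in linear time, since that is the only nontrivial ingredient: one scans $j$ left to right while maintaining the interval $[\ell,r]$ with largest right endpoint $r$ such that $S[\ell..r]$ equals $S[1..r-\ell+1]$. When $j > r$ one compares $S[1..]$ against $S[j..]$ from scratch, recording the match length and, if positive, resetting $[\ell,r]$ to the new matched interval. When $j \le r$, one sets $k = j - \ell + 1$; then $S[j..]$ agrees with $S[k..]$ for the first $r-j+1$ positions, so if $Z[k] < r-j+1$ one outputs $Z[j] = Z[k]$ immediately, and otherwise one knows $S[j..r]$ is a prefix of $S$ and extends by explicit comparison from position $r+1$ onward, updating $[\ell,r]$. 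One can present the two-string version directly in the same shape (maintaining a matched interval inside $B$ and consulting the Z-array of $A$ at the shifted position), which avoids introducing $\$$ entirely.

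The only point requiring care — and the ``hard part,'' such as it is — is the amortized running-time analysis: every explicit letter comparison that succeeds strictly increases the right endpoint $r$, which is monotone non-decreasing and bounded by the string length, and each iteration performs at most one comparison that fails; hence the number of explicit comparisons, and thus the total work, is linear. The remaining bookkeeping (index arithmetic relating positions of $S$, of $B$, and of $A$, and the degenerate cases $A$ empty, $B$ empty, or an occurrence of $A$ running to the end of $B$) is routine. Since the ambient algorithm already spends $\Omega(|B|)$ time, this preprocessing adds no asymptotic cost.
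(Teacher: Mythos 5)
The paper offers no proof of this lemma at all --- it is cited directly to Main and Lorentz --- and your Z-array/fundamental-preprocessing argument is the standard and correct proof of exactly this statement, including the right amortization (successful explicit comparisons strictly advance the monotone right endpoint $r$, and each position incurs at most one failed comparison). The only nuance is that the stated bound is $O(|B|)$ rather than $O(|A|+|B|)$, which you already address; it can also be obtained unconditionally by truncating $A$ to its first $|B|$ letters before building $S = A\$B$, since no common prefix with a suffix of $B$ can exceed $|B|$.
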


\begin{lemma}[\cite{Main-1984}]
\label{lem:T1R-main-repetition}
Let $W = UV$.
Let $l, k$ be integers with $1 \leq l \leq |V|$ and $l \leq k \leq 2l$.
Let $s_i$ be the length of the longest common suffix of $U$ and $V[1..i]$, and $p_i$ be the length of the longest common prefix of $V$ and $V[i..|V|]$.
Then the factor $F$ with $|F|=2l$ and last letter at $V[k]$ is a square if and only if $2l - s_l \leq k \leq l + p_{l+1}$. 
\end{lemma}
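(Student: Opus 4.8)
The plan is to reduce the statement ``$F$ is a square'' to two independent word equalities, one governed by $s_l$ and one by $p_{l+1}$, and to match each equality with one of the two claimed inequalities.

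First I would fix notation: let $m = |U|$, so that $V[j] = W[m+j]$, and write $F = G_1 G_2$ with $|G_1| = |G_2| = l$, so $F$ is a square exactly when $G_1 = G_2$. Since $F$ ends at $W[m+k]$ and $|F| = 2l$, we get $G_2 = W[m+k-l+1 .. m+k] = V[k-l+1 .. k]$, which lies entirely inside $V$ because $k \ge l$, whereas $G_1 = W[m+k-2l+1 .. m+k-l]$ has its first $2l-k$ letters in $U$ (a genuine suffix of $U$, since $k \le 2l$ and $F$ is a valid factor of $W$) and its last $k-l$ letters in $V$.

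Next I would split $G_1$ and $G_2$ at the same offset. Write $G_1 = G_1' G_1''$ with $G_1'$ the suffix of $U$ of length $2l-k$ and $G_1'' = V[1..k-l]$, and write $G_2 = G_2' G_2''$ with $|G_2'| = 2l-k$ and $|G_2''| = k-l$, so that $G_2' = V[k-l+1..l]$ (which is exactly the suffix of $V[1..l]$ of length $2l-k$) and $G_2'' = V[l+1..k]$. Then $G_1 = G_2$ if and only if $G_1' = G_2'$ and $G_1'' = G_2''$. The equality $G_1' = G_2'$ says the length-$(2l-k)$ suffixes of $U$ and of $V[1..l]$ agree, which by the definition of $s_l$ is equivalent to $2l-k \le s_l$, i.e. $k \ge 2l - s_l$. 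The equality $G_1'' = G_2''$ says $V[1..k-l]$ equals the length-$(k-l)$ prefix of $V[l+1..|V|]$, which by the definition of $p_{l+1}$ is equivalent to $k-l \le p_{l+1}$, i.e. $k \le l + p_{l+1}$. Conjoining the two equivalences gives the lemma.

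The only delicate point is the index bookkeeping across the $U$-$V$ boundary: one must verify that $G_1$ and $G_2$ decompose at matching offsets (length $2l-k$, then length $k-l$), that $V[k-l+1..l]$ is indeed the claimed suffix of $V[1..l]$, and that the degenerate cases $k=l$ (empty $G_1''$, so only the $s_l$ inequality is active) and $k=2l$ (empty $G_1'$, so only the $p_{l+1}$ inequality is active) behave correctly. The hypotheses $l \le k \le 2l$, together with the implicit requirement that $F = W[m+k-2l+1 .. m+k]$ be a factor of $W$ (in particular $k \le |V|$ and $2l-k \le |U|$), make all of these checks routine.
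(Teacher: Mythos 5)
Your proposal is correct and follows essentially the same argument as the paper: both split the square $F$ into four pieces of lengths $2l-k$, $k-l$, $2l-k$, $k-l$ (your $G_1'G_1''G_2'G_2''$ are exactly the paper's $F_1F_2F_1'F_2'$), reduce $G_1'=G_2'$ to $2l-k\le s_l$ and $G_1''=G_2''$ to $k-l\le p_{l+1}$, and conjoin the two inequalities. No substantive difference.
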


\begin{proof}
\begin{figure}[ht]
\centering
\includegraphics[scale=1.0]{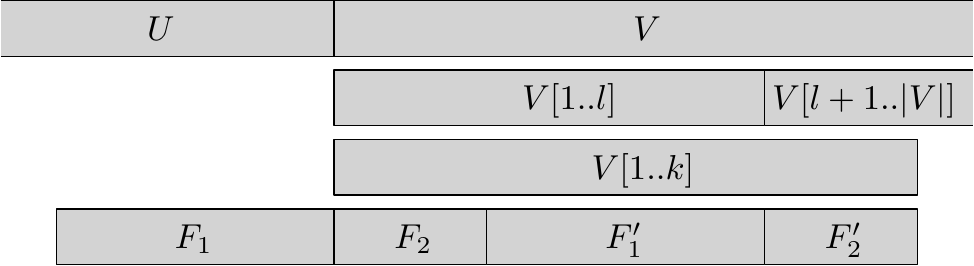}
\caption{The words used in the proof of Lemma~\ref{lem:T1R-main-repetition}.}
\label{fig:main-repetition-lemma}
\end{figure}
The factor $F$ is a square if and only if $F = F_1 F_2 F_1' F_2'$ such that $F_1 = F_1'$, $F_2 = F_2'$, $|F_1| = 2l-k$, and $|F_2| = k-l$ (see Figure~\ref{fig:main-repetition-lemma}).
In this case, $F_1$ is a suffix of $U$, $F_1'$ is a suffix of $UV[1..l]$, $F_2$ is a prefix of $V$, and $F_2'$ is a prefix of $V[l+1]$.
Thus the condition holds if and only if $s_l \geq 2l-k$ (ensuring $F_1 = F_1'$) and $p_{l+1} \geq k-l$ (ensuring $F_2 = F_2'$).
Combining these, $2l - s_l \leq k \leq l + p_{l+1}$.
\end{proof}

\begin{lemma}[\cite{Main-1984}]
\label{lem:T1R-main-enumeration}
Let $W$ be a word.
The square factors of $W$ can be enumerated in $O(|W|\log{|W|})$ time.
\end{lemma}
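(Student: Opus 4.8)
The plan is to establish Lemma~\ref{lem:T1R-main-enumeration} by the classical divide-and-conquer scheme of Main and Lorentz, using Lemmas~\ref{lem:T1R-main-lcp} and~\ref{lem:T1R-main-repetition} as the building blocks. Split $W$ as $W = UV$ with $|U|$ and $|V|$ differing by at most one (say $|U| = \lfloor |W|/2 \rfloor$). Every square factor of $W$ either lies entirely within $U$, or entirely within $V$, or \emph{straddles} the boundary, meaning it contains at least one letter of $U$ and at least one letter of $V$. The first two cases are handled by two recursive calls on $U$ and $V$; the crux is to enumerate all straddling squares in $O(|W|)$ time.

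For the straddling squares, I would group them by their half-length $l$, which ranges over $1 \le l \le |V|$ (and symmetrically, squares straddling from the $U$ side are handled by running the mirror-image procedure on $\Rev{W}$, or equivalently by a symmetric pass with the roles of $U$ and $V$ swapped). For a fixed $l$, a straddling square of length $2l$ whose last letter is at position $V[k]$ with $l \le k \le 2l$ is, by Lemma~\ref{lem:T1R-main-repetition}, exactly characterized by the inequality $2l - s_l \le k \le l + p_{l+1}$, where $s_l$ is the longest common suffix of $U$ and $V[1..l]$ and $p_{l+1}$ is the longest common prefix of $V$ and $V[l+1..|V|]$. So for each $l$ I only need the two numbers $s_l$ and $p_{l+1}$, and then I can emit the contiguous range of valid values of $k$ (which is an interval, possibly empty) in time proportional to the number of squares reported plus $O(1)$. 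The values $p_i$ for all $i$ are computed in $O(|V|) = O(|W|)$ total time by one application of Lemma~\ref{lem:T1R-main-lcp} with $A = V$ and $B = V$. The values $s_i$ for all $i$ are the longest common prefixes of $\Rev{U}$ and $\Rev{V}[|V|-i+1..|V|]$, i.e. suffixes of $\Rev{V}$, so one more application of Lemma~\ref{lem:T1R-main-lcp} (with $A = \Rev{U}$ and $B = \Rev{V}$) gives all of them in $O(|W|)$ time. Thus the straddling-square enumeration for all $l$ simultaneously takes $O(|W|)$ time plus time linear in the number of squares output.

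Putting the pieces together: the work at a node handling a word of length $n$ is $O(n)$ plus the output size at that node, and the recursion has depth $O(\log |W|)$ with total length $O(|W|)$ per level, so the total running time is $O(|W| \log |W|)$ plus the total number of square occurrences reported. Since a straddling square is reported exactly once — at the unique level of the recursion where the split point falls strictly inside it — and the number of such reports summed over all nodes is $O(|W| \log |W|)$ (a standard fact, following because the maximal-square structure means only $O(1)$ distinct square lengths per starting position contribute at each level, or more crudely because each level reports $O(n)$ by the interval structure of the valid $k$'s), the total is $O(|W| \log |W|)$.

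The main obstacle I anticipate is the bookkeeping needed to guarantee that the straddling squares reported across all recursion nodes sum to $O(|W|\log|W|)$ rather than something larger, and to be careful that each square occurrence is counted at exactly one node (the one where its span is first broken by the midpoint). This is where the precise statement of what ``enumerate'' means matters: one should report each square \emph{occurrence} (a factor, i.e. a position–length pair) once, and the interval structure from Lemma~\ref{lem:T1R-main-repetition} makes the per-node, per-$l$ output a single contiguous run of last-letter positions, which keeps the accounting clean. Everything else — the two calls to Lemma~\ref{lem:T1R-main-lcp}, the reversal tricks for $s_i$, and the symmetric pass for squares straddling from the $U$ side — is routine once this framing is in place.
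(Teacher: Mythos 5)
Your algorithm is the paper's algorithm: the same split $W=UV$, the same recursion, the same use of Lemma~\ref{lem:T1R-main-lcp} to get all $p_i$ and $s_i$ in linear time (your reversal trick for the $s_i$ is a correct and slightly more explicit version of the paper's ``by symmetry''), and the same characterization of straddling squares via Lemma~\ref{lem:T1R-main-repetition}. The one genuine problem is your final accounting step. You propose to emit each square occurrence individually, ``in time proportional to the number of squares reported,'' and then claim the total number of occurrences reported over all nodes is $O(|W|\log|W|)$. That claim is false: a word such as $W=\Up^{n}$ (or, for a genuine boundary word, $(\Up\Right\Down\Right)^{i}$, the paper's own example) has $\Theta(|W|^2)$ square occurrences, so no algorithm can list them one by one in $O(|W|\log|W|)$ time. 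Both of your justifications fail. ``Each level reports $O(n)$ by the interval structure'' is wrong because each level produces $O(n)$ \emph{intervals} $[2l-s_l,\,l+p_{l+1}]$, one per half-length $l$, but a single interval can contain $\Theta(n)$ values of $k$. And the ``standard fact'' you are reaching for concerns \emph{primitively rooted} square occurrences (of which there are $O(n\log n)$) or \emph{distinct} squares (of which there are $O(n)$); neither bounds the number of all square occurrences.

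The fix is exactly the reading of ``enumerate'' that the paper adopts: at each node and each $l$, output the nonempty interval $[2l-s_l,\,l+p_{l+1}]$ as a single compactly coded \emph{set} of squares (common length $2l$, contiguous last-letter positions), costing $O(1)$ per pair $(l,\text{node})$ and hence $O(|W|)$ per level and $O(|W|\log|W|)$ overall. You half-gesture at this in your last sentence, but your stated time bound relies on the false occurrence count rather than on interval-coded output. This distinction is not cosmetic for the rest of the paper: immediately after Lemma~\ref{lem:T1R-square-enum} the authors note that the number of (reflect) square factors output may be $\Omega(|W|^2)$ and that the running time is achieved precisely by outputting factors in sets, and Lemma~\ref{lem:T1R-small-fast-superset} then works by discarding every non-singleton set. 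So state the output format as interval-coded sets and drop the claim about the total number of individual occurrences; everything else in your argument stands.
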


\begin{proof}
A square factor $F$ is \emph{new} provided $F \not \Factor U, V$.
A new square factor of $F$ is \emph{left} provided $F[1..|F|/2] \Factor U$ and \emph{right} otherwise.
Then Lemma~\ref{lem:T1R-main-repetition} implies the new right square factors of length $l$ are those ending at $V[k]$ for $k$ in the interval $[2l-s_l, l+p_{l+1}]$.
Use Lemma~\ref{lem:T1R-main-lcp} to compute $p_i$ and $s_i$ for all values of $i$ in $O(|V|)$ time.
Enumerate the new right square factors of each length $2l$ as a set defined by the interval $[2l-s_l, l+p_{l+1}]$ of last-letter indices in $O(|V|)$ total time.

By symmetry, computing the left square factors is possible in $O(|U|)$ time.
All new square factors are either left or right, so computing all such factors is possible in $O(|U|+|V|) = O(|W|)$ time.
Recursion on $U$ and $V$ suffices to compute all factors in total time $T(|W|) = 2T(|W|/2) + O(|W|) = O(|W|\log{|W|})$ time.
\end{proof} 

% NOTE: If you change this lemma, also update it in overview.tex!
\begin{lemma}
\label{lem:T1R-square-enum}
Let $W$ be a word.
The reflect square factors of $W$ can be enumerated in $O(|W|\log{|W|})$ time.
\end{lemma}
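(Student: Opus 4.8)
The plan is to mirror the structure of Main and Lorentz's square-enumeration algorithm (Lemma~\ref{lem:T1R-main-enumeration}) but replace equality tests by reflection tests. A reflect square has the form $B\Refl{\Theta}{B}$, where $\Theta$ ranges over the four allowed reflection angles $\{\D{-45},\D{0},\D{45},\D{90}\}$, so the whole computation is done four times, once per value of $\Theta$, which costs only a constant factor. Fix $\Theta$. The key observation is that a factor $F$ of length $2l$ ending at $V[k]$ (in the decomposition $W = UV$ with $|U| = |V|\pm 1$) is a reflect square $B\Refl{\Theta}{B}$ exactly when $F = F_1 F_2 F_1' F_2'$ with $|F_1| = 2l-k$, $|F_2| = k-l$, and $F_1' F_2' \equiv \Refl{\Theta}{F_1 F_2}$; since reflection reverses the order of a word in a certain sense, one must be careful: $\Refl{\Theta}{F_1 F_2}$ read left to right equals $\Refl{\Theta}{F_1}\Refl{\Theta}{F_2}$, so actually we need $F_1' \equiv \Refl{\Theta}{F_1}$ and $F_2' \equiv \Refl{\Theta}{F_2}$ with matching positions. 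So $F_1$ is a suffix of $U$ whose $\Theta$-reflection is a suffix of $U V[1..l]$, and $F_2$ is a prefix of $V$ whose $\Theta$-reflection is a prefix of $V[l+1..|V|]$. Thus I would define $s_i^\Theta$ as the length of the longest suffix of $U$ whose $\Theta$-reflection is a suffix of $V[1..i]$, and $p_i^\Theta$ as the length of the longest prefix of $V$ whose $\Theta$-reflection is a prefix of $V[i..|V|]$; then the reflect-square analogue of Lemma~\ref{lem:T1R-main-repetition} states that $F$ is a reflect square iff $2l - s_l^\Theta \leq k \leq l + p_{l+1}^\Theta$, by the identical argument.

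The remaining ingredient is to compute $s_i^\Theta$ and $p_i^\Theta$ for all $i$ in $O(|V|)$ time, i.e. a reflection-aware version of Lemma~\ref{lem:T1R-main-lcp}. This is immediate: the longest common prefix of a word $A$ and $\Refl{\Theta}{B[i..|B|]}$ is the same as the longest common prefix of $A$ and $\Refl{\Theta}{B}[i'..|B|]$ for the appropriate shifted index $i'$ when we reflect the whole word once; more directly, applying $\Refl{\Theta}$ to every letter of $B$ is a bijection on the alphabet, so ``longest common prefix of $A$ and $\Refl{\Theta}{B[i..|B|]}$'' equals ``longest common prefix of $A$ and $(\Refl{\Theta}{B})[i..|B|]$''. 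Hence I feed $A$ and $\Refl{\Theta}{B}$ into Lemma~\ref{lem:T1R-main-lcp} verbatim and get all the $p_i^\Theta$ in $O(|V|)$ time; $s_i^\Theta$ is handled symmetrically by reversing. With these in hand, I enumerate the new left and new right reflect squares exactly as in the proof of Lemma~\ref{lem:T1R-main-enumeration}: for each length $2l$, the new right reflect squares are those ending at $V[k]$ with $k$ in the interval $[2l - s_l^\Theta,\, l + p_{l+1}^\Theta]$, output as an interval of indices in $O(|V|)$ total time; new left reflect squares symmetrically in $O(|U|)$ time; then recurse on $U$ and $V$. The recurrence is again $T(|W|) = 2T(|W|/2) + O(|W|) = O(|W|\log|W|)$, and multiplying by the constant $4$ for the four values of $\Theta$ leaves the bound unchanged.

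I expect the main obstacle to be bookkeeping the direction in which reflection acts on a word — specifically nailing down that $\Refl{\Theta}{(F_1 F_2)} = \Refl{\Theta}{F_1}\,\Refl{\Theta}{F_2}$ as a word written left to right (reflection acts letterwise and preserves the order of letters, in contrast to $\Back{\cdot}$ which also reverses), and therefore that a reflect square really does split as $F_1 F_2 \Refl{\Theta}{F_1}\Refl{\Theta}{F_2}$ with the second half's two blocks aligned with the first half's two blocks rather than in swapped order. Once that is pinned down, the ``reflect'' case is a letter-for-letter translation of the equality case, since $\Refl{\Theta}$ is an alphabet bijection and every comparison ``$x = y$'' in Main and Lorentz's argument becomes ``$x = \Refl{\Theta}{y}$'', which is preserved under the relabeling. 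The output is a set of $O(|W|\log|W|)$ reflect square factors that, by construction, includes every factor of the form $B\Refl{\Theta}{B}$ — in particular every such factor arising in a type-1 reflection factorization — which is exactly what later sections need.
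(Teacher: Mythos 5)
Your proposal is correct and follows essentially the same route as the paper: redefine the Main--Lorentz quantities $s_i$ and $p_i$ as longest common suffixes/prefixes against $\Refl{\Theta}{U}$ and $\Refl{\Theta}{V}$ (valid because $\Refl{\Theta}$ is a letterwise, order-preserving alphabet bijection), keep the interval-based enumeration and the $T(n)=2T(n/2)+O(n)$ recursion, and repeat for each of the four values of $\Theta$. Your explicit remark that $\Refl{\Theta}{F_1F_2}=\Refl{\Theta}{F_1}\,\Refl{\Theta}{F_2}$ (no reversal, unlike $\Back{\cdot}$) is exactly the point the paper leaves implicit.
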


\begin{proof}
For now, consider enumerating all reflect square factors $X \Refl{\Theta}{X}$ for fixed $\Theta$.
Recall the proof of Lemma~\ref{lem:T1R-main-enumeration}.
Redefine $s_i$ in the proof to be the longest common suffix of $\Refl{\Theta}{U}$ and $P[1..i]$, and similarly $p_i$ to be the longest common prefix of $\Refl{\Theta}{P}$ and $P[i..|P|]$.
Lemma~\ref{lem:T1R-main-lcp} still implies $p_i$, $s_i$ for all values of $i$ can be computed in $O(|V|)$ total time, and the remainder of the proof implies all reflect square factors $X \Refl{\Theta}{X}$ of $W$ can be enumerated in $O(|W|\log{|W|})$ time.
Perform the algorithm once for each choice of $\Theta$ to enumerate all reflect square factors in $O(|W|\log{|W|})$ time.
\end{proof}

Notice that the number of reflect square factors output is not necessarily $O(|W|\log{|W|})$, and in fact may be $\Omega(|W|^2)$, e.g., $W = (\Up \Right \Down \Right)^i$.
The running time is achieved by outputting factors in sets, each consisting of factors with the same length and contiguous last letters.
Next, we prove that only factors of singleton set can be reflect square factors $B \Refl{\Theta}{B}$ or $C \Refl{\Phi}{C}$ in type-1 reflection factorization.

\begin{lemma}
\label{lem:T1R-factors-admissible}
Let $P$ be a polyomino and $\Bou{P} = A B \Refl{\Theta}{B} \Back{A} C \Refl{\Phi}{C}$.
Then the gapped mirror pair $A$, $\Back{A}$ and reflect squares $B \Refl{\Theta}{B}$, $C \Refl{\Phi}{C}$ are admissible.
\end{lemma}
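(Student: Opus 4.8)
The plan is to mimic the structure of the admissibility proof for half-turn factorizations (Lemma~\ref{lem:HLF-half-turn-factors-admissible}), treating the gapped mirror pair $A$, $\Back{A}$ and the two reflect squares separately. Recall the definitions: the gapped mirror pair $A$, $\Back{A}$ in $W = A B \Refl{\Theta}{B} \Back{A} C \Refl{\Phi}{C}$ is admissible provided that, writing $W = AUYV$ with $Y = \Back{A}$, $U = B \Refl{\Theta}{B}$, $V = C \Refl{\Phi}{C}$, we have $U[1] \neq \Comp{U[-1]}$ and $V[1] \neq \Comp{V[-1]}$; and the reflect square $B\Refl{\Theta}{B}$ (viewed as $W = XU'$ with $X = B\Refl{\Theta}{B}$) is admissible provided $B[-1] \neq \Refl{\Theta}{U'[-1]}$ and $U'[1] \neq \Refl{\Theta}{B}[1]$, and symmetrically for $C\Refl{\Phi}{C}$.

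For the gapped mirror pair, I would argue by contradiction from non-simplicity exactly as in the half-turn case. First, $B$ and $C$ cannot both be empty: if $|B|=|C|=0$ then $W = A\Back{A}$, and since $\Back{A} = \Comp{\Rev{A}}$ the last letter of $A$ is adjacent to its own complement $\Comp{A[-1]}$, forcing a $180^\circ$ turn-back and hence a non-simple boundary. Next, suppose $U[1] = \Comp{U[-1]}$, i.e. $(B\Refl{\Theta}{B})[1] = \Comp{(B\Refl{\Theta}{B})[-1]}$. If $|B| = 0$ this is vacuous in the sense that $U$ is empty; otherwise $U[1] = B[1]$ and $U[-1] = \Refl{\Theta}{B}[-1] = \Refl{\Theta}{B[1]}$, so the hypothesis reads $B[1] = \Comp{\Refl{\Theta}{B[1]}}$. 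This is an identity between a letter and a reflected-complemented version of itself; checking the four reflection angles $\Theta \in \{-45,0,45,90\}$ shows it forces the word $B\Refl{\Theta}{B}$ to fold back on itself at its center (the reflected copy immediately reverses direction), contradicting simplicity. The same argument applies to $V = C\Refl{\Phi}{C}$, so the pair $A$, $\Back{A}$ is admissible. One must also handle the degenerate cases where $B$ or $C$ is empty using the adjacent factors, just as the half-turn proof juggles $|B|=0$ versus $|C|=0$; here when $|B|=0$ the letter after $A$ is the first letter of $C$ or of $\Back{A}$, and the argument carries through.

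For admissibility of $B\Refl{\Theta}{B}$, the two conditions $B[-1] \neq \Refl{\Theta}{U'[-1]}$ and $U'[1] \neq \Refl{\Theta}{B}[1]$ (where $U' = \Back{A}C\Refl{\Phi}{C}$) again reduce, when the relevant factors are nonempty, to a claim that a letter cannot equal a fixed transformation of another specified letter, and each such claim is forced by simplicity: if it failed, two consecutive boundary letters would be equal or antiparallel in a way that creates a self-touching or a spike. The cleanest route is to reuse the Daurat et al. corner-counting identity (Proposition~6 of~\cite{Daurat-2005}) as in the half-turn proof: a reflect square $B\Refl{\Theta}{B}$ contains a balanced number of the two turn types, so the ``missing'' turns at the seam between $B\Refl{\Theta}{B}$ and its complement must be genuine turns, which is exactly admissibility. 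I expect the main obstacle to be the bookkeeping across reflection angles $\Theta$ and the several degenerate cases ($|A|=0$, $|B|=0$, $|C|=0$ and combinations), since unlike the palindrome/complement case the reflection $\Refl{\Theta}{}$ interacts differently with each of the four axes, so the letter-level contradictions must be verified per angle rather than by a single uniform computation — though in each case the verification is a short finite check.
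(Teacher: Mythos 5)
Your overall strategy (simplicity of the boundary plus the Daurat et al.\ turn-counting identity, in parallel with Lemma~\ref{lem:HLF-half-turn-factors-admissible}) matches the paper's, but the execution of the first part contains a genuine error. You write $U[-1] = \Refl{\Theta}{B}[-1] = \Refl{\Theta}{B[1]}$, which treats the second half of the reflect square $B\Refl{\Theta}{B}$ as if it were reversed. It is not: reflection acts letterwise and preserves order, so $\Refl{\Theta}{B}[-1] = \Refl{\Theta}{B[-1]}$ (you are conflating reflect squares with $\Theta$-dromes). Consequently the hypothesis $U[1] = \Comp{U[-1]}$ reads $B[1] = \Comp{\Refl{\Theta}{B[-1]}}$, a relation between two \emph{different} letters of $B$, not the single-letter identity $B[1] = \Comp{\Refl{\Theta}{B[1]}}$ that you propose to refute by a finite check over $\Theta$. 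Worse, that single-letter identity is satisfiable (for $\Theta = 0$ and $B[1] = \Up$ one has $\Comp{\Refl{0}{\Up}} = \Comp{\Down} = \Up$), so the finite check would not produce a contradiction. The correct route, and the one the paper takes, is to apply simplicity at the \emph{center} seam of the reflect square: the adjacent letters there are $B[-1]$ and $\Refl{\Theta}{B}[1] = \Refl{\Theta}{B[1]}$, so the no-backtracking condition gives $B[-1] \neq \Comp{\Refl{\Theta}{B[1]}}$; applying $\ORefl{\Theta}$ to both sides (it commutes with complementation and is an involution) yields $\Refl{\Theta}{B[-1]} \neq \Comp{B[1]}$, i.e.\ $U[-1] \neq \Comp{U[1]}$, which together with the analogous statement for $C\Refl{\Phi}{C}$ is exactly the admissibility of the pair $A$, $\Back{A}$.

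The second part is also incomplete as sketched. Asserting that the turn-count balance forces the seam pairs to be ``genuine turns'' only shows the two letters at each seam differ, which is strictly weaker than the admissibility inequalities for $B\Refl{\Theta}{B}$. The paper's argument splits on $|A|$. For $|A| > 0$ no counting is needed: a violation of admissibility propagates through $\Back{A}[1] = \Comp{A[-1]}$ to produce a literal backtrack $\Comp{x}x$ in $\Bou{P}$. For $|A| = 0$ the counting argument has a final step you are missing: since the interior pairs of the four factors contribute equally to $\mathcal{L}$ and $\mathcal{R}$, all four seam pairs must lie in $\mathcal{R}$; but if admissibility failed, the seam pair between $B$ and $\Refl{\Theta}{B}$ and the seam pair between $\Refl{\Theta}{B}$ and $C$ would be $\Theta$-reflections of one another, and a pair and its reflection cannot both lie in $\mathcal{R}$ because reflection swaps $\mathcal{R}$ and $\mathcal{L}$. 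Both of these concrete contradictions need to be supplied for the proof to close.
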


\begin{proof}
\textbf{$\bm{A, \Back{A}}$ is admissible.}
Consider $B[1]$ and $\Refl{\Theta}{B}[-1]$.
Since $B[-1]$ and $\Refl{\Theta}{B}[1]$ are adjacent, $B[-1] \neq \Refl{\Theta}{B}[1]$ and thus $\Refl{\Theta}{B}[-1] \neq B[1]$.
Similarly, $\Refl{\Phi}{C}[-1] \neq C[1]$, and thus $A, \Back{A}$ is admissible.

\textbf{$\bm{B \Refl{\Theta}{B}}$, $\bm{C \Refl{\Phi}{C}}$ are admissible.}
Consider the admissibility of $B \Refl{\Theta}{B}$; the other case follows by symmetry.
Either $|A| > 0$ or $|A| = 0$.
In the first case, suppose without loss of generality and for the sake contradiction, that $\Back{A}[1] = B[1] = x$.
Then $A[-1] = \Comp{\Back{A}[1]} = \Comp{x}$ and $\Bou{P}$ has a subword $x \Comp{x}$, a contradiction.

In the second card, consider the pairs of non-equal consecutive letters in $W$.
These pairs come from sets $\mathcal{R} = \{\Left\Up, \Up\Right, \Right\Down, \Down\Left\}$ and $\mathcal{L} = \{\Up\Left, \Left\Down, \Down\Right, \Right\Up\}$, and Proposition~6 of~\cite{Daurat-2005} states that the number of pairs from $\mathcal{R}$ is four more than the number from $\mathcal{L}$. 
Also, the total number of pairs from $\mathcal{L}$ and $\mathcal{R}$ in a pair of words $X$, $\Refl{\Theta}{X}$ is equal.
Thus every pair not in a factor of the factorization must be in $\mathcal{R}$.

Suppose, for the sake of contradiction, that $\Refl{\Theta}{B}[1] = x$ and $C[1] = \Refl{\Theta}{x}$.
Let $B[-1] = y$, so $\Refl{\Theta}{B}[-1] = \Refl{\Theta}{y}$.
Then $yx$ and $\Refl{\Theta}{yx}$ are consecutive pairs in $W$ not in any factor (they straddle factors $B, \Refl{\Theta}{B}$ and $\Refl{\Theta}{B}, C$, respectively).
Since $\Refl{\Theta}{y} \Refl{\Theta}{x}$ is the reflection of $yx$, these pairs cannot both in $\mathcal{R}$, a contradiction. 
\end{proof}

% NOTE: If you change this lemma, also update it in overview.tex!
\begin{lemma}
\label{lem:T1R-small-fast-superset}
Let $W$ be a boundary word.
A $O(|W|\log{|W|})$-sized superset of all admissible factors $X = B \Refl{\Theta}{B}$ of $W$ can be enumerated in $O(|W|\log{|W|})$ time.
\end{lemma}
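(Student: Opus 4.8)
The plan is to combine the enumeration algorithm of Lemma~\ref{lem:T1R-square-enum} with the admissibility restriction of Lemma~\ref{lem:T1R-factors-admissible}, using the observation made just before the statement: the enumeration outputs reflect square factors in \emph{groups}, each group consisting of factors that share a common length and whose last-letter positions form a contiguous interval, and the number of such groups (not factors) is $O(|W|\log|W|)$. So the first step is to run the algorithm of Lemma~\ref{lem:T1R-square-enum} on $W$ (really on $WW$, so that we catch factors wrapping around the circular boundary word), producing $O(|W|\log|W|)$ groups in $O(|W|\log|W|)$ time. Each group is described succinctly by a triple $(\Theta, 2l, [a,b])$ meaning ``the factors $X\Refl{\Theta}{X}$ of length $2l$ ending at positions $a, a+1, \dots, b$.''

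The second step is the structural heart: I claim that within any one group, at most one factor can be admissible, so that from the $O(|W|\log|W|)$ groups we extract a single candidate each and obtain an $O(|W|\log|W|)$-sized superset. To see this, suppose $X\Refl{\Theta}{X}$ and $X'\Refl{\Theta}{X'}$ are two distinct factors in the same group: they have the same length $2l$ and their end positions differ by some $d$ with $1 \leq d \leq b-a$. Because the last-letter indices of the whole group form a contiguous interval, the factor obtained by shifting one of them by one position is again a reflect square of the same length; iterating Lemma~\ref{lem:T1R-main-repetition}'s structure (as in the proof of Lemma~\ref{lem:HLF-in-between-palin}-style periodicity arguments) forces the underlying word to be periodic with a period dividing the shift, which in turn means that $X\Refl{\Theta}{X}$ extends one letter further on the left or right while remaining a reflect square with the same center --- contradicting admissibility (recall a reflect square $X\Refl{\Theta}{X}$ with $W = (X\Refl{\Theta}{X})U$ is admissible precisely when it cannot be extended, i.e.\ $X[-1]\neq\Refl{\Theta}{U[-1]}$ and $U[1]\neq\Refl{\Theta}{X}[1]$, together with the analogous condition on the other side). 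Thus each group contributes at most one admissible reflect square, and Lemma~\ref{lem:T1R-factors-admissible} guarantees that every reflect square $B\Refl{\Theta}{B}$ or $C\Refl{\Phi}{C}$ appearing in a type-1 reflection factorization of the boundary word is admissible and hence is among the extracted candidates.

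The third step is algorithmic bookkeeping: for each group we must actually identify its unique admissible representative (or determine it has none) in amortized $O(1)$ or $O(\log|W|)$ time, so the total stays $O(|W|\log|W|)$. This is done by, for each group $(\Theta,2l,[a,b])$, testing the admissibility condition at the two boundary positions of the interval --- admissibility of a reflect square with last letter at position $k$ is a constant-number-of-letter-comparisons check given random access to $W$ (compare $W[k+1]$ with $\Refl{\Theta}{W[k-2l+1]}$, etc.), and by the periodicity argument at most one of $a,\dots,b$ can pass, so it suffices to check the endpoints and, if the period structure is nontrivial, one interior position determined by the period length. Assemble all surviving candidates into the output superset.

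The main obstacle I expect is the structural claim in the second step --- proving cleanly that a contiguous run of equal-length reflect squares of the same reflection type forces the relevant substring to be periodic in a way that makes every member except possibly one extendable (hence inadmissible). The reflection $\Refl{\Theta}{}$ is an involution that reverses orientation, so the ``periodicity'' one extracts is of the palindromic/antisymmetric flavor rather than ordinary translational periodicity, and one must be careful that the correct analogue of Fine and Wilf (Lemma~\ref{lem:HLF-strong-periodicity}) or of the in-between-palindrome lemma (Lemma~\ref{lem:HLF-in-between-palin}) applies; this likely mirrors the palindrome-factorization arguments of Section~\ref{sec:HLF} but with $\Refl{\Theta}{}$ in place of reversal, and checking the edge cases where $|B|$ or $|C|$ is small or where the group interval has length one will require care.
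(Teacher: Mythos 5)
Your high-level plan matches the paper's: run the modified Main--Lorentz enumeration of Lemma~\ref{lem:T1R-square-enum}, observe that it emits $O(|W|\log{|W|})$ groups of equal-length reflect squares whose last letters form contiguous intervals, and argue that each group contributes $O(1)$ candidates. But the step you yourself flag as the ``main obstacle'' --- why a group with more than one member can be discarded --- is exactly the content of the lemma, and you have not supplied it. Moreover, the periodicity/Fine--Wilf route you sketch is unnecessary: the paper's argument is a constant-size letter comparison. If $W[l..r] = Y\Refl{\Theta}{Y}$ and $W[l+1..r+1] = Z\Refl{\Theta}{Z}$ are same-length reflect squares with consecutive last letters, then with $x = W[r+1]$ the junction letter satisfies $\Refl{\Theta}{Y}[1] = Z[-1] = \Refl{\Theta}{x}$, which directly violates the admissibility condition for $W[l..r]$, and symmetrically for $W[l+1..r+1]$. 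Hence a non-singleton group contains \emph{no} admissible factors at all, and by Lemma~\ref{lem:T1R-factors-admissible} none of its members can occur in a type-1 reflection factorization; the algorithm simply returns only the singleton sets, detected by the condition $2l - s_l = l + p_{l+1}$. No periodicity lemma, and no analogue of Fine and Wilf for reflections, is needed.

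The weaker claim you state --- that at most one factor per group is admissible --- also breaks your third step. Under that claim you must locate the single admissible member inside an interval of up to $|W|$ positions, and your proposal to test only the two endpoints plus ``one interior position determined by the period length'' is not justified by anything you have established: a priori the admissible member could sit anywhere in the interval, and testing every position would destroy the $O(|W|\log{|W|})$ bound. The stronger ``none are admissible'' statement is what makes the bookkeeping trivial. (Your remark about running the enumeration on the doubled word to handle circularity is a reasonable detail that the paper leaves implicit.)
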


\begin{proof}
%Idea: Repeat old algorithm, but only return interval if it's a single element.
The algorithm is identical to that of Lemma~\ref{lem:T1R-square-enum}, except only return a set if $2l-s_l = l+p_{l+1}$, (see the proof of Lemma~\ref{lem:T1R-square-enum}) i.e., there is only one factor in the set. 
All that remains is to prove that if there is more than factor in a set, then no factor in the set can be in a type-1 reflection factorization.

Consider two factors $W[l..r] = Y \Refl{\Theta}{Y}$, $W[l+1..r+1] = Z \Refl{\Theta}{Z}$.
Let $x = W[r+1]$ and thus $\Refl{\Theta}{Y}[1] = Z[-1] = \Refl{\Theta}{x}$.
So $W[l..r]$ is not admissible. 
By symmetry, $W[l+1..r+1]$ is also not admissible.
Then by Lemma~\ref{lem:T1R-factors-admissible}, neither factor can be in a type-1 reflection factorization. 
So since each set of reflect square factors enumerated by the algorithm of Lemma~\ref{lem:T1R-square-enum} share a common length and contiguous set of last letters, the elements of the set can be factors of a type-1 reflection factorization only if the set is singleton.
\end{proof}

\subsection{Algorithm}

\begin{theorem}
\label{thm:T1R-algorithm}
Let $P$ be a polyomino with $|\Bou{P}| = n$.
It can be decided in $O(n\log{n})$ time if $\Bou{P}$ has a type-1 reflection factorization.
\end{theorem}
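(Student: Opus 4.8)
The plan is to exploit the structure already established: a type-1 reflection factorization is $W = A B \Refl{\Theta}{B} \Back{A} C \Refl{\Phi}{C}$, which we view as a cyclic concatenation of three blocks — the gapped mirror pair $A, \Back{A}$ separated by the reflect square $B\Refl{\Theta}{B}$ on one side and, on the other side between $\Back{A}$ and $A$, the reflect square $C\Refl{\Phi}{C}$. By Lemma~\ref{lem:T1R-factors-admissible} we may restrict attention to \emph{admissible} reflect squares and an admissible gapped mirror pair. By Lemma~\ref{lem:T1R-small-fast-superset}, there is a $O(n\log n)$-sized, $O(n\log n)$-time-computable superset $\mathcal{S}$ containing every admissible reflect square $B\Refl{\Theta}{B}$ (and symmetrically $C\Refl{\Phi}{C}$) that could appear. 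The first step is to compute $\mathcal{S}$ and sort/index its elements by their starting position and by their ending position (for each of the four values of $\Theta$, if one wants to be careful, though the index need not distinguish $\Theta$).

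The core idea is that once we fix the reflect square $C\Refl{\Phi}{C}$ occupying, say, positions $W[j+1..k]$ cyclically, the remainder $W[k+1..j]$ must factor as $A\,(B\Refl{\Theta}{B})\,\Back{A}$ for some admissible reflect square $B\Refl{\Theta}{B}$ and some $A$ with $\Back{A}$ its reverse-complement suffix. Equivalently, writing the remaining arc as $Y$, we need $Y = A X \Back{A}$ with $X \in \mathcal{S}$. This reduces to: for each candidate $X = W[p..q]$ in $\mathcal{S}$ lying inside $Y$, check whether the prefix of $Y$ ending just before $p$ is the backtrack of the suffix of $Y$ starting just after $q$, and that these two pieces have equal length. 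The equal-length condition pins things down: if $C\Refl{\Phi}{C}$ has its left endpoint fixed and $X$ is chosen, then $|A| = |\Back{A}|$ forces $X$ to be \emph{centered} in $Y$, so there is at most one admissible $X \in \mathcal{S}$ of each length compatible with a given placement of $C\Refl{\Phi}{C}$; more usefully, fixing $X$'s center determines the split of $Y$ and hence $C$'s position. So iterate instead over \emph{centers}: for each of the $O(n)$ possible centers of $B\Refl{\Theta}{B}$ and each of the $O(n)$ centers of $C\Refl{\Phi}{C}$ one would get $O(n^2)$, which is too slow — the refinement is to iterate over the $O(n\log n)$ members of $\mathcal{S}$ directly (not all centers), since only those can serve as $B\Refl{\Theta}{B}$, and then use a longest-common-extension data structure (Lemma~\ref{lem:HLF-longest-common-extension}, applied to $W$ and $\Back{W}$) to test in $O(1)$ time whether the arc flanking a chosen $X$ on both sides, after removing a centered reflect square for $C$, satisfies the $A/\Back{A}$ condition. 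Concretely: preprocess $W$ against $\Back{W}$ for $O(1)$ longest-common-extension queries in $O(n)$ time; then for each $X \in \mathcal{S}$ used as $B\Refl{\Theta}{B}$, the complement arc has a known split point, and one computes the maximal $A$ matching $\Back{A}$ across that split by a single LCE query; this determines $\Back{A}$'s position and hence where $C\Refl{\Phi}{C}$ must sit; finally look up in the indexed $\mathcal{S}$ whether an admissible reflect square occupies exactly that remaining interval. Each $X$ costs $O(1)$ queries plus a $O(\log n)$ lookup in the sorted index, for $O(n\log n)$ total.

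The main obstacle I anticipate is the bookkeeping around the four reflection angles $\Theta$ and $\Phi$ and the cyclic wraparound: a reflect square for $B$ and one for $C$ may use different angles, and the two ``gaps'' $A$ and $\Back{A}$ are joined cyclically, so one must be careful that, after fixing $X$ as $B\Refl{\Theta}{B}$ and extending $A$ against $\Back{A}$, the leftover interval is contiguous on the cyclic word and is genuinely of the form $C\Refl{\Phi}{C}$ for \emph{some} valid $\Phi$ — not merely any reflect square, but an admissible one, which is exactly what membership in $\mathcal{S}$ guarantees. I would also need to handle the degenerate cases $|A| = 0$ (then $W = B\Refl{\Theta}{B}C\Refl{\Phi}{C}$, checkable directly by scanning $\mathcal{S}$ for two centered reflect squares tiling $W$) and the case where $X$'s maximal extension overshoots, which the admissibility definition of reflect squares is designed to rule out. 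Assembling these, the algorithm is: (i) build $\mathcal{S}$ in $O(n\log n)$ (Lemma~\ref{lem:T1R-small-fast-superset}); (ii) build the LCE structure on $W,\Back{W}$ in $O(n)$ (Lemma~\ref{lem:HLF-longest-common-extension}); (iii) sort $\mathcal{S}$ by endpoints in $O(n\log n)$; (iv) for each of the $O(n\log n)$ members $X$ of $\mathcal{S}$, spend $O(\log n)$ time to test for a completing $A$ and $C\Refl{\Phi}{C}$; (v) return affirmatively iff some iteration succeeds, yielding $O(n\log n)$ overall and a witness factorization when it exists.
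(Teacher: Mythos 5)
Your proposal follows essentially the same route as the paper: enumerate the $O(n\log n)$ candidate admissible reflect squares via Lemma~\ref{lem:T1R-small-fast-superset}, use admissibility of the gapped mirror pair (Lemma~\ref{lem:T1R-factors-admissible}) to justify taking the \emph{maximal} $A$, $\Back{A}$ extension computed by a single longest-common-extension query on $W$ and $\Back{W}$, and then test whether the uniquely determined remaining interval is a reflect square $C\Refl{\Phi}{C}$. The one substantive difference is your final check: you perform an $O(\log n)$ lookup in a sorted index of $\mathcal{S}$ for each of the $O(n\log n)$ candidates, which gives $O(n\log^2 n)$, not the claimed $O(n\log n)$; the paper instead tests the remaining interval directly with $O(1)$-time longest-common-extension queries against $\Refl{\Phi}{W}$ for each of the four values of $\Phi$, which keeps the per-candidate cost constant. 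Substituting that constant-time test (or a hash-based lookup) recovers the stated bound; otherwise your argument is sound.
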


\begin{proof}
First, compute a set of $O(|W|\log{|W|})$ candidates for the admissible factors $B \Refl{\Theta}{B}$ in $O(|W|\log{|W|})$ time using Lemma~\ref{lem:T1R-small-fast-superset}.
Next, use Lemma~\ref{lem:HLF-longest-common-extension} to preprocess $W$, $\Back{W}$, and $\Refl{\Theta}{W}$ for $\Theta \in \{-\D{45}, \D{0}, \D{45}, \D{90}\}$ to enable the following queries in $O(1)$ time: ``For $Y, Z \in W, \Back{W}, \Refl{\Theta}{W}$, what is longest common prefix of $Y[i..|Y|]$ and $Z[j..|Z|]$?''.
For each candidate factor $B \Refl{\Theta}{B}$, use a query on $W$ and $\Back{W}$ to compute the induced $A$ and $\Back{A}$ factors, and then queries on $W$ and $\Refl{\Phi}{W}$ for all $\Phi$ to check if the remainder of $W$ is a reflection square $C \Refl{\Phi}{C}$.

By Lemma~\ref{lem:T1R-small-fast-superset}, computing the candidates for $B \Refl{\Theta}{B}$ takes $O(|W|\log{|W|})$ time.
For each candidate, five $O(1)$-time queries are used to determine if the candidate can be completed into a factorization.
So the algorithm takes $O(|W|\log{|W|})$ total time.
\end{proof}

% CODE: T2R

\section{Type-2 Reflection Factorizations}
\label{sec:T2R}

% IH 3 
% Berglund (http://www.angelfire.com/mn3/anisohedral/isohedral.html) 5.
\begin{definition}
A \emph{type-2 reflection factorization} of a boundary word $W$ has the form $W = A B C \Back{A} \Refl{\Theta}{C} \Refl{\Theta}{B}$ for some $\Theta$.
\end{definition}

\subsection{Gapped Reflect Squares}

% This section is to establish that maximal gapped reflect squares with large copies (B, \Refl{\Theta}{B} with |B| \geq |W|/6)
% are few and can be enumerated efficiently.
% A few steps:
% 1. Prove maximal gapped reflect squares are equivalent to B, \Refl{\Theta}{B} factor pairs.
% 2. Prove that there are O(|W|) long gapped reflect squares and they can be enumerated in O(|W|) time.

% Step #1
\begin{lemma}
\label{lem:T2R-factors-admissible-1}
Let $P$ be a polyomino and $\Bou{P} = A B C \Back{A} \Refl{\Theta}{C} \Refl{\Theta}{B}$.
Then gapped reflect squares $B, \Refl{\Theta}{B}$ and $C, \Refl{\Theta}{C}$ are admissible. 
\end{lemma}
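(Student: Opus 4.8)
The plan is to verify, for each of the two gapped reflect squares, the two inequalities in the definition of admissibility, reading them as vacuously satisfied when a gap is empty (the pair then cannot be lengthened on that side without the two factors overlapping). Write $W = \Bou{P}$. Reading $W$ circularly, parse the pair $B, \Refl{\Theta}{B}$ as $W = X U Y V$ with $X = B$, $U = C\Back{A}\Refl{\Theta}{C}$, $Y = \Refl{\Theta}{B}$, $V = A$, and the pair $C, \Refl{\Theta}{C}$ as $X = C$, $U = \Back{A}$, $Y = \Refl{\Theta}{C}$, $V = \Refl{\Theta}{B} A B$; in each case $\Refl{\Theta}{Y} \equiv X$, so these are indeed gapped reflect squares, and it suffices to show $U[1] \neq \Refl{\Theta}{V[1]}$ and $U[-1] \neq \Refl{\Theta}{V[-1]}$. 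I will use throughout that $\Back{A}[1] = \Comp{A[-1]}$ and $\Back{A}[-1] = \Comp{A[1]}$, that $\Refl{\Theta}$ commutes with complementation and is an involution, and that consecutive letters of a boundary word are never complementary.

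First, if $|A| = 0$ then the gap $V = A$ of the first pair and the gap $U = \Back{A}$ of the second pair are empty, so both pairs are admissible with nothing to check. Next, consider the first pair with $|A|, |C| > 0$: here $U[1] = C[1]$, $U[-1] = \Refl{\Theta}{C[-1]}$, $V[1] = A[1]$, $V[-1] = A[-1]$, so the two inequalities become $C[1] \neq \Refl{\Theta}{A[1]}$ and $C[-1] \neq A[-1]$. The first holds because $\Back{A}[-1] = \Comp{A[1]}$ is immediately followed in $W$ by $\Refl{\Theta}{C}[1] = \Refl{\Theta}{C[1]}$, so $\Comp{A[1]} \neq \Comp{\Refl{\Theta}{C[1]}}$; the second holds because $C[-1]$ is immediately followed by $\Back{A}[1] = \Comp{A[-1]}$. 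The second pair with $|A|, |B| > 0$ is entirely analogous, using instead the junctions $AB$ and (wrapping around) $\Refl{\Theta}{B}A$.

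The one remaining case is $|A| > 0$ with exactly one of $B$, $C$ empty — at most one can be, since $W = A\Back{A}$ is non-simple — and this is the step I expect to be the main obstacle, handled as in the $|A| = 0$ analyses of Lemmas~\ref{lem:HLF-half-turn-factors-admissible} and~\ref{lem:T1R-factors-admissible} via Proposition~6 of~\cite{Daurat-2005}. By the $B \leftrightarrow C$ symmetry of these two sub-cases it suffices to treat $|B| = 0$, so $W = A C \Back{A} \Refl{\Theta}{C}$ with $|C| > 0$; the pair to check is $C, \Refl{\Theta}{C}$ with $U = \Back{A}$ and $V = A$, and the two inequalities become $\Comp{A[-1]} \neq \Refl{\Theta}{A[1]}$ and $\Comp{A[1]} \neq \Refl{\Theta}{A[-1]}$. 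I would apply Proposition~6 to $W = A C \Back{A} \Refl{\Theta}{C}$: since reversal and reflection each interchange the counts of consecutive-letter pairs from $\mathcal{R}$ and from $\mathcal{L}$ while complementation preserves them, the internal pair-counts of the four blocks $A, C, \Back{A}, \Refl{\Theta}{C}$ cancel, so the excess of four $\mathcal{R}$-pairs must come entirely from the four block junctions, forcing all four junctions to be pairs in $\mathcal{R}$. Writing $xy \in \mathcal{R}$ as $y = \Rot{270}{x}$, the junction $AC$ gives $C[1] = \Rot{270}{A[-1]}$ and the junction $\Back{A}\,\Refl{\Theta}{C}$ gives $\Refl{\Theta}{C[1]} = \Rot{270}{\Comp{A[1]}} = \Rot{90}{A[1]}$; combining these with $\Refl{\Theta}{\Rot{270}{x}} = \Rot{90}{\Refl{\Theta}{x}}$ yields $\Refl{\Theta}{A[-1]} = A[1]$, hence also $\Refl{\Theta}{A[1]} = A[-1]$. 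The two inequalities then read $\Comp{A[-1]} \neq A[-1]$ and $\Comp{A[1]} \neq A[1]$, both of which hold since the complement of a letter is its opposite, never itself. The only delicate points are getting the $\mathcal{R}/\mathcal{L}$ parity bookkeeping right under these three transformations and carefully tracking the empty-factor degeneracies.
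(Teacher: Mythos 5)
Your junction (non-backtracking) arguments for the cases $|A|,|C|>0$ and $|A|,|B|>0$, and your Daurat--Nivat counting argument for the sub-case where exactly one of $B$, $C$ is empty, are all correct and match the substance of the paper's own Cases 1 and 2. The gap is your opening move: declaring both pairs vacuously admissible when $|A|=0$. The paper does not treat that case as vacuous --- its Case 3 proves, for $W = BC\Refl{\Theta}{C}\Refl{\Theta}{B}$, the nontrivial condition $B[1]\neq\Refl{\Theta}{C}[1]$ (and, implicitly by symmetry, the matching condition at the other end). The reason this is needed is that ``admissible'' must here mean maximal in the sense exploited by Lemma~\ref{lem:T2R-reflection-big-gapped}: for a fixed match distance $d$ that lemma outputs only the \emph{largest} interval $W[l..r]$ with $W[j]=\Refl{\Theta}{W[j+d]}$ for all $l\le j\le r$. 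When the gap $V=A$ is empty, the position following $\Refl{\Theta}{B}$ wraps around to $B[1]$, so if $C[1]=\Refl{\Theta}{B[1]}$ the matching interval extends strictly past $B$ and the pair $B,\Refl{\Theta}{B}$ occurring in the factorization would be properly contained in --- and hence missed by --- the enumerated maximal pair. Your parenthetical justification (the two factors cannot both be lengthened without overlapping) is true of the pair as disjoint factors, but it is not the maximality property the enumeration relies on; the same remark applies to the pair $C,\Refl{\Theta}{C}$, whose gap $\Back{A}$ is empty in this case.

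The repair is small and uses exactly the tools you already deploy: when $|A|=0$ the counting argument forces all four junctions of $W=BC\Refl{\Theta}{C}\Refl{\Theta}{B}$ into $\mathcal{R}$, and if $C[1]=\Refl{\Theta}{B[1]}$ then the junction $B[-1]C[1]$ is the image under $\ORefl{\Theta}$ of the wrap-around junction $\Refl{\Theta}{B[-1]}\,B[1]$, so the two cannot both lie in $\mathcal{R}$, a contradiction; the condition at the opposite end is symmetric. With that case restored, the rest of your write-up is correct and follows essentially the same route as the paper.
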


\begin{proof}
Consider the admissibility of $B, \Refl{\Theta}{B}$; the other case follows by symmetry.
Consider the pairs of non-equal consecutive letters in $W$.
These pairs come from sets $\mathcal{R} = \{\Left\Up, \Up\Right, \Right\Down, \Down\Left\}$ and $\mathcal{L} = \{\Up\Left, \Left\Down, \Down\Right, \Right\Up\}$, and Proposition~6 of~\cite{Daurat-2005} states that the number of pairs from $\mathcal{R}$ is four more than the number from $\mathcal{L}$.
Also, observe that the consecutive letter pairs in factors $A$, $\Back{A}$, $B$, and $\Refl{\Theta}{B}$ are divided evenly between $\mathcal{L}$ and $\mathcal{R}$.
Thus it cannot be that $|A|, |C| = 0$ and three cases remain.

\textbf{Case 1: $\bm{|A|, |C| > 0}$.}
Suppose, for the sake of contradiction and without loss of generality, that $A[1] = C[1] = x$.
So $\Refl{\Theta}{C}[1] = \Refl{\Theta}{x}$ and $\Back{A}[-1] = \Comp{A[1]} = \Comp{x}$.
Thus $W$ has a subword $\Comp{x} x$, a contradiction.

\textbf{Case 2: $\bm{|A| > 0, |C| = 0}$.}
Suppose, for the sake of contradiction and without loss of generality, that $\Back{A}[1] = \Refl{\Theta}{A[1]}$.
Let $x = B[-1]$ and $y = \Back{A}[1]$.
Thus $\Refl{\Theta}{B}[-1] = \Refl{\Theta}{x}$ and $A[1] = \Refl{\Theta}{y}$.
So $xy$ and $\Refl{\Theta}{xy}$ are consecutive letter pairs not contained in any factor of the factorization.
Also, $xy$ and $\Refl{\Theta}{xy}$ are not both in $\mathcal{R}$, a contradiction.

\textbf{Case 3: $\bm{|A| = 0, |C| > 0}$.}
Suppose, for the sake of contradiction and without loss of generality, that $B[1] = \Refl{\Theta}{C}[1]$.
Let $x = \Refl{\Theta}{B}[-1]$ and $y = B[1]$ and thus $B[-1] = \Refl{\Theta}{x}$ and $C[1] = \Refl{\Theta}{y}$.
So $xy$ and $\Refl{\Theta}{xy}$ are consecutive letter pairs not contained in any factor of the factorization and are not both in $\mathcal{R}$, a contradiction. 
\end{proof}

% Step #2
\begin{lemma}
\label{lem:T2R-reflection-big-gapped}
Let $W$ be a word.
There are $O(|W|)$ admissible factor pairs $X, \Refl{\Theta}{X}$ of $W$ such that $|X| \geq |W|/6$.
They can be enumerated in $O(|W|)$ time. 
\end{lemma}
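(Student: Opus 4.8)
The plan is to fix the reflection angle $\Theta$ (there are only four choices, so this costs a constant factor) and to classify the admissible pairs $X, \Refl{\Theta}{X}$ by the cyclic offset $d$ between the starting position of $X$ and that of $Y := \Refl{\Theta}{X}$. Writing $W = XUYV$ cyclically, the defining relation $X \equiv \Refl{\Theta}{Y}$ is exactly the statement that $W[k] = \Refl{\Theta}{W[k+d]}$ for every position $k$ occupied by $X$ (indices taken cyclically); that is, $X$ occupies a block of consecutive positions in the set $S_d = \{k : W[k] = \Refl{\Theta}{W[k+d]}\}$. Since $|X| = |Y| \geq |W|/6$ and the arcs of $X$, $U$, $Y$, $V$ partition the cycle with $|U|,|V| \geq 1$ (this is where admissibility enters — see below), we get $2|X| + 2 \leq |W|$ and hence $|X|+1 \leq d \leq |W|-|X|-1$, so $d$ takes only $O(|W|)$ values, roughly the range $[|W|/6,\, 5|W|/6]$.

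The structural heart of the proof is the claim that for each fixed $d$ there are only $O(1)$ admissible pairs with that offset and $|X| \geq |W|/6$. To see this, unwind the two admissibility conditions $U[1] \neq \Refl{\Theta}{V[1]}$ and $U[-1] \neq \Refl{\Theta}{V[-1]}$: letting $s_X,e_X$ and $s_Y,e_Y$ be the first and last positions of $X$ and $Y$, the first condition says $W[e_X+1] \neq \Refl{\Theta}{W[e_Y+1]}$ and the second says $W[s_X-1] \neq \Refl{\Theta}{W[s_Y-1]}$. Since $\Refl{\Theta}$ is an involution on letters, these say precisely that the block of $S_d$ occupied by $X$ can be extended neither to the right nor to the left, i.e., it is a \emph{maximal} run of $S_d$. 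But $S_d$, restricted to positions where $Y$ still fits disjointly, sits in the cycle of length $|W|$, which contains at most six pairwise-disjoint arcs of length $\geq |W|/6$; hence $S_d$ has at most six maximal runs of length $\geq |W|/6$, each determining at most one admissible pair (plus $O(1)$ degenerate pairs forced by the cyclic boundary). Summing $O(1)$ over the $O(|W|)$ admissible offsets $d$ and the four values of $\Theta$ gives the bound $O(|W|)$.

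For the enumeration, preprocess $W$ and $\Refl{\Theta}{W}$, and also $\Rev{W}$ and $\Refl{\Theta}{\Rev{W}}$, in $O(|W|)$ time using Lemma~\ref{lem:HLF-longest-common-extension}, so that the maximal run of $S_d$ through any prescribed position can be recovered by two $O(1)$-time queries (one extending rightward, one leftward). Fix $O(1)$ ``sample'' positions spaced $\lceil |W|/6\rceil$ apart around the cycle; every run of length $\geq |W|/6$ contains a sample. Then, for each of the $O(|W|)$ offsets $d$ and each sample $s$ with $s \in S_d$, recover the maximal run of $S_d$ through $s$; if its length $\ell$ satisfies $\ell \geq |W|/6$ together with the gap constraints $d-\ell \geq 1$ and $\ell + d \leq |W|-1$, output the induced pair. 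After discarding duplicates (a run may contain two samples) this reports exactly the admissible pairs with $|X| \geq |W|/6$, in $O(|W|)$ total time.

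The main obstacle is the structural claim, and in particular the observation that admissibility forces the gap $U$ between $X$ and $Y$ to be nonempty; without it $X$ and $Y$ could be adjacent, so $XY$ would be a reflect square, and a word such as $(\Up\Right\Down\Right)^i$ has $\Omega(|W|^2)$ reflect squares of length $\geq |W|/3$, which would break the bound. The remaining delicate points are the cyclic edge cases — when $U$, $V$, or a maximal run abuts the chosen linearization of $W$, or when one of the four boundary letters in the admissibility conditions lies at the end of the cycle — but each of these contributes only $O(1)$ extra pairs per offset and is handled by direct inspection.
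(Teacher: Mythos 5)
Your proposal is correct and follows essentially the same route as the paper's proof: classify pairs by the offset $d$ between the starting positions of $X$ and $\Refl{\Theta}{X}$ (the paper's ``match distance''), observe that any pair with $|X| \geq |W|/6$ must contain one of $O(1)$ evenly spaced positions (the paper's ``outpost indices''), use admissibility to argue the pair is the maximal matching run through that position so that each $(d,t)$ combination yields at most one pair, and enumerate via $O(1)$-time longest-common-extension queries on $W$, $\Refl{\Theta}{W}$, $\Rev{W}$, and $\Refl{\Theta}{\Rev{W}}$. Your explicit unwinding of the admissibility conditions into ``the run extends neither left nor right'' and your attention to the cyclic and empty-gap edge cases make the maximality step slightly more explicit than the paper's, but the argument is the same.
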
 

\begin{proof}
The idea is to search for admissible factor pairs according to the distance $d$ between $W[a] = X[1]$ and $W[a+d] = \Refl{\Theta}{X}[1]$, called the \emph{match distance}, for each $d$ with $|W|/6 \leq d \leq 5|W|/6$.
Because the factor pairs have $|X| \geq |W|/6$, any factor pair has $X$ containing $W[t]$ for some $t = i|W|/6$ with $i \in \{1, 2, \dots, 6\}$, called an \emph{outpost index}.
Fixing a match distance $d$ and outpost index $t$ uniquely determines an admissible factor pair defined by the largest interval $W[l..r]$ with $l \leq t \leq r$ such that $W[j] = \Refl{\Theta}{W[j+d]}$ for all $l \leq j \leq r$.
Thus there are $O(|W|)$ such factor pairs.

Enumerating pairs is done in $O(|W|)$ time using two steps.
First, preprocess $W$, $\Refl{\Theta}{W}$, $\Rev{W}$, and $\Refl{\Theta}{\Rev{W}}$ using Lemma~\ref{lem:HLF-longest-common-extension} to answer queries of the form:
\begin{itemize}
\item ``What is the longest common prefix of $W[t..|W|]$ and $\Refl{\Theta}{W}[t+d..|W|]$?''
\item ``What is the longest common suffix of $W[1..t]$ and $\Refl{\Theta}{W}[1..t+d]$?''
\end{itemize} 
Preprocessing takes $O(|W|)$ total time and enables $O(1)$-time queries. 

Next, iterate through all combinations of match distance $d$ and outpost index $t$, and for each compute $l, r$ using the data structure from the previous step.
Output the resulting admissible factor pair $X = W[l..r]$, $\Refl{\Theta}{X} = W[l+d..r+d]$ unless no interval exists, i.e., $W[t] \neq \Refl{\Theta}{W[t+d]}$.
The $O(|W|)$ combinations of $d$ and $t$ take $O(1)$ time each and $O(|W|)$ total time to compute.
\end{proof}

\subsection{Long admissible factors}

\begin{lemma}
\label{lem:T2R-presuf-period}
Let $W$ be a boundary word with a factor $X$.
Let $P,S \Mirror W$ with such that $P \Prefix X$, $S \Suffix X$, and $P \neq S$.
Then $X$ has a period of length $2|X|-(|P|+|S|)$.
\end{lemma}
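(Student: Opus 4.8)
The statement to prove is that if $P, S \Mirror W$ with $P \Prefix X$, $S \Suffix X$, $P \neq S$, and both $P$ and $S$ are mirror factors of $W$, then $X$ has a period of length $2|X| - (|P| + |S|)$. The plan is to mimic the structure of the proofs of Lemma~\ref{lem:QRT-third-ddrome-bound} and Lemma~\ref{lem:QRT-presuf-palin-period}, since a mirror factor behaves, for the purpose of internal equalities, much like a palindrome: if $Y \Mirror W$ then $W = Y U Z V$ with $Z \equiv \Back{Y}$ and $|U| = |V|$, which gives a relation between the letters of $Y$ and the letters of $Z$ (namely $Z[i] = \Comp{Y[|Y|-i+1]}$). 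The key is to recover, from the hypothesis that $P$ and $S$ are mirrors \emph{of the whole boundary word $W$}, enough letter-equalities strictly inside $X$ to force the claimed period.

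First I would set up notation: write $P = P_L$ as a prefix of $X$ of length $|P|$ and $S = S_R$ as a suffix of $X$ of length $|S|$, and let $A$ be the overlap, i.e. the factor with $X = P_L' A S_R'$ decomposed so that $|A| = |P| + |S| - |X| = |X| - (2|X| - (|P|+|S|))$; equivalently the claimed period length is $p := 2|X| - (|P| + |S|)$, and $A$ is the region where $P$ and $S$ overlap inside $X$. (If $P$ and $S$ do not overlap the statement degenerates and must be handled separately, or the claim is vacuous/trivial when $p \geq |X|$.) Then, exactly as in Lemma~\ref{lem:QRT-third-ddrome-bound}, I would split the index range $1 \le i \le |X| - p$ into the two cases according to whether $W[i]$ lies in the non-overlap part of $P$ or inside $A$, and in each case chase the equalities: use the mirror relation for $P$ to relate $W[i]$ to a letter in the second half of $P$ (up to complement), then — since that letter lies inside $S$ by the length bounds — use the mirror relation for $S$ to relate it to $W[i+p]$. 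The complement operations from the two mirror applications should cancel, leaving $W[i] = W[i+p]$, which is exactly periodicity of $X$ with period $p$. Combining the two cases and using symmetry (prefix/suffix) covers all $i$.

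The main obstacle I anticipate is bookkeeping the complement and index-reversal arithmetic in the mirror relations so that the two applications genuinely cancel: a mirror is $\Back{X} = \Comp{\Rev{X}}$, so each application contributes both a reversal of indices and a complement of the letter, and I need the composition of the two to be the identity (a pure translation by $p$) rather than, say, a reversal or a nontrivial rotation. This is the analogue of the step in Lemma~\ref{lem:QRT-third-ddrome-bound} where two $\D{90}$-rotations compose to a $\D{180}$-rotation and yield a contradiction for $90$-dromes, but here two complements compose to the identity and yield a clean periodicity statement — so the sign bookkeeping should work out, but it is the place where an error is most likely. A secondary subtlety is ensuring that the intermediate letter really falls inside the factor $S$ (and, in the symmetric direction, inside $P$); this requires the inequalities $|P|, |S|$ large enough relative to $|X|$, which is why the hypotheses of the lemma are phrased as they are, and I would verify these length inequalities explicitly before invoking the second mirror relation. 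Once the arithmetic is pinned down, the proof is a direct transcription of the palindrome case with "equal" replaced by "complement of the mirror image," and closing it amounts to stating that $W[i] = W[i+p]$ for all valid $i$ implies $X$ has period $p$.
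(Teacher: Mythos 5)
Your high-level strategy is the same as the paper's: apply the mirror relation for $P$, then the mirror relation for $S$, let the two complements cancel, and conclude $X[i] = X[i+p]$ for $p = 2|X|-(|P|+|S|)$. But the middle of your chase, as written, is the palindrome argument and does not apply to mirrors. You say the first application relates $W[i]$ to ``a letter in the second half of $P$'' and that this letter ``lies inside $S$ by the length bounds.'' That is false for mirrors: the relation $P \Mirror W$ relates $P[i]$ to $\Comp{\Back{P}[|P|+1-i]}$, and $\Back{P}$ is not inside $X$ at all --- it is a factor of $W$ whose center is exactly $|W|/2$ letters away from the center of $P$. The intermediate letter therefore lies far from $X$, and in particular not inside $S$, so the second mirror relation cannot be applied to it as you describe. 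If you execute your plan literally you will be looking for the intermediate letter inside $X$ and the chain will break.

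The missing idea is the alignment of the two partners. Because \emph{both} $P$ and $S$ are mirrors of the same circular word $W$, each partner is displaced by exactly $|W|/2$, so $\Back{P}$ and $\Back{S}$ occur as the prefix and the suffix, respectively, of a single common factor $X' \Factor W$ with $|X'| = |X|$ (the translate of $X$ by $|W|/2$). This is the first line of the paper's proof and it is what makes the chaining legal: for $1 \leq i \leq |P|+|S|-|X|$ one gets $X[i] = \Comp{X'[|P|+1-i]}$ via $\Back{P} \Prefix X'$, the index $|P|+1-i$ falls inside the suffix $\Back{S}$ of $X'$ (this is where your ``length bounds'' actually get used), and applying the mirror relation for $S$ sends that letter back into $S \Suffix X$ at position $i + 2|X|-(|P|+|S|)$. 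You also need $P \neq S$ only to guarantee $p \geq 1$. With the common factor $X'$ inserted, the rest of your index bookkeeping goes through exactly as you anticipate; without it, the proof has a genuine gap rather than a notational one.
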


\begin{proof}
Since $P$ and $S$ are mirror, there exists $X' \Factor W$ with $|X'| = |X|$, $\Back{P} \Prefix X'$, and $\Back{S} \Suffix X'$.
Observe that $X$ has a period of length $r \geq 1$ if and only if $X[i] = X[i+r]$ for all $1 \leq i \leq |X|-r$.
Let $1 \leq i \leq |P|+|S|-|X|$.
Then $1 \leq |P|+1-i \leq |X|$ and $1 \leq |P|+1+|\Back{S}|-|X'|-i \leq |\Back{S}|$.
So:
\begin{equation*}
\begin{split}
X[i] &= P[i] \\
&= \Comp{\Back{P}}[|P|+1-i] \\
&= \Comp{X'}[|P|+1-i] \\
&= \Comp{\Back{S}}[|P|+1+|\Back{S}|-|X'|-i] \\
&= \Comp{\Back{S}}[|\Back{S}|+1-(i+|X'|-|P|)] \\
&= S[i+|X'|-|P|] \\
&= X[i+|X'|-|P|+(|X|-|S|)] \\
&= X[i+2|X|-(|P|+|S|)]
\end{split}
\end{equation*}
Since $P \neq S$, $2|X|-(|P|+|S|) \geq 2|X|-(2|X|-1) = 1$.
So $X$ has a period of length $2|X|-(|P|+|S|)$.
\end{proof}

\begin{lemma}
\label{lem:T2R-long-middle-nonadmissible}
Let $W$ be a boundary word with $X \Factor W$.
Let $P, S \Mirror W$ such that $P \Prefix X$, $S \Suffix X$, and $P \neq S$.
Any factor $Y \Middle X$ with $|Y| > 2|X|-(|P|+|S|)$ is not an admissible factor of $W$.
\end{lemma}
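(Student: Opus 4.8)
The plan is to leverage Lemma~\ref{lem:T2R-presuf-period}, which already gives us that $X$ has a period of length $p = 2|X|-(|P|+|S|)$. Let $Y \Middle X$ with $|Y| > p$, so that $W = XU = \ldots Y \ldots U$ with $Y$ strictly interior to $X$; write $Y = W[l+1..r-1]$ where $X = W[a..b]$ with $a < l+1$ and $r-1 < b$. The goal is to show $Y$ is \emph{not} admissible, i.e.\ that the ``maximality'' condition defining an admissible mirror (or whatever admissible-factor notion is meant here; based on Lemma~\ref{lem:QRT-middle-palins-nonadmissible} the target is $W[l] = \Comp{W[r]}$ forcing $Y$ to extend) fails. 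The natural approach is to mimic exactly the argument of Lemma~\ref{lem:QRT-middle-palins-nonadmissible}, replacing the palindrome relation ``equidistant-from-center letters are equal'' with the mirror relation ``corresponding letters are complements.''

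The key steps, in order: First I would use the period of length $p$ inside $X$ to ``walk'' the endpoints $W[l]$ and $W[r]$ inward by multiples of $p$: define $f(i) = (r - pi) - (l + pi) = (r-l) - 2pi$, note $f(0) = r - l \geq p+2$ (since $|Y| = r-l-1 > p$), and $f(i+1) = f(i) - 2p$, so there is an integer $x \geq 1$ with $|f(x)| \leq p$ and with both $l + px$ and $r - px$ lying in the index range of $X$ — so that all the shifts are justified by the period. Because $X$ has period $p$, $W[l] = W[l+px]$ and $W[r] = W[r-px]$. Second, I would observe that $W[l+px]$ and $W[r-px]$ are positioned symmetrically about the center of $Y$ (since the two endpoints were moved inward by the same amount $px$), and they are at distance $|f(x)| \leq p < |Y|$, hence both lie inside $Y$; then using that $Y$ is a middle factor of the \emph{boundary word} $W$, I would invoke the appropriate structural fact about how corresponding letters in an admissible middle factor relate — concretely, if $Y$ were admissible then the two letters immediately outside it, $W[l]$ and $W[r]$, would have to be non-complementary (this is the admissibility/maximality condition), but chaining $W[l] = W[l+px]$, the mirror-type relation across $Y$, and $W[r-px] = W[r]$ forces $W[l] = \Comp{W[r]}$ or a similar forbidden equality, contradicting admissibility.

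The main obstacle I anticipate is getting the mirror relation to transfer cleanly through the periodic shifts. Unlike the palindrome case in Lemma~\ref{lem:QRT-middle-palins-nonadmissible}, a ``middle'' factor of a boundary word here is presumably a factor whose \emph{admissibility} is defined via the mirror structure (a factor $Y$ together with its backtrack $\Back{Y}$ appearing later, with the gap-balancing condition), so the relation I need between $W[l+px]$ and $W[r-px]$ is not simply ``equal'' — I must track whether the center of $Y$ corresponds to a genuine mirror-fold and therefore that letters symmetric about it are complements, and then verify that a complement-relation at distance $\leq p$ combined with the period-$p$ equalities at the ends produces the forbidden $W[l] = \Comp{W[r]}$. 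I would handle this by carefully choosing the parity/offset so that $l + px$ and $r - px$ straddle the center of $Y$ symmetrically and land at positions where the mirror relation is literally $W[l+px] = \Comp{W[r-px]}$ (up to the at-most-$p$-length middle piece, which I can absorb since any genuine $p$-length interior block of $X$ is a palindrome-or-complement-palindrome by the period structure, analogous to the ``$|f(x)| \le p$ so it sits inside a palindrome'' step in the prior lemma). Once that single identity is in hand, the chain $W[l] = W[l+px] = \Comp{W[r-px]} = \Comp{W[r]}$ contradicts the admissibility of $Y$, completing the proof.
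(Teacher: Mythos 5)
There is a genuine gap here, and it stems from working with the wrong notion of admissibility. In this section ``admissible'' refers to the \emph{mirror} definition: writing $W = Y U \Back{Y} V$ with $|U| = |V|$, the factor $Y$ is admissible provided $U[1] \neq \Comp{U[-1]}$ and $V[1] \neq \Comp{V[-1]}$. Note that $U[-1]$ is the letter immediately \emph{before the antipodal copy} $\Back{Y}$, roughly half the boundary word away from $Y$ — it is not the letter immediately preceding $Y$. Your target identity $W[l] = \Comp{W[r]}$, relating the two letters flanking $Y$ itself, is therefore not the negation of admissibility, so even if you established it you would not have contradicted anything. Worse, the mechanism you propose for establishing it — walking the endpoints inward by multiples of the period and then invoking a fold relation about the center of $Y$ — has no analogue here: a mirror factor is not a $\Theta$-drome, and there is no relation whatsoever between letters placed symmetrically about the center of $Y$. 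The ``parity/offset'' fix you sketch in the last paragraph cannot be made to work because the relation $W[l+px] = \Comp{W[r-px]}$ you hope to land on simply does not hold for mirror factors.

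The paper's proof uses the second copy explicitly, which is the idea your proposal is missing. Let $X'$ be the factor of the same length as $X$ whose center is exactly $|W|/2$ letters from the center of $X$; then $\Back{P} \Prefix X'$, $\Back{S} \Suffix X'$, and $\Back{Y} \Middle X'$, so Lemma~\ref{lem:T2R-presuf-period} applies a second time to give $X'$ the same period $r = 2|X|-(|P|+|S|)$. Writing $W = Y U \Back{Y} V$: the letter $U[1]$ lies just past $Y$ but still inside $X$, so the period of $X$ together with $|Y| > r$ gives $U[1] = Y[|Y|+1-r] = \Comp{\Back{Y}[r]}$ (the last equality is just the definition $\Back{Y} = \Comp{\Rev{Y}}$); symmetrically, $U[-1]$ lies just before $\Back{Y}$ inside $X'$, so the period of $X'$ gives $U[-1] = \Back{Y}[r]$. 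Chaining these yields $U[1] = \Comp{U[-1]}$, which is exactly the failure of admissibility. So the two periodic structures you need are on $X$ and on its antipodal partner $X'$, not on the two halves of $Y$.
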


\begin{proof}
By Lemma~\ref{lem:T2R-presuf-period}, $X$ has a period of length $r = 2|W|-(|P|+|S|)$.
Let $Y \Middle X$ and $|Y| > r$.

Let $X' \Factor W$ with $|X'| = |X|$ and the center of $X'$ exactly $|W|/2$ letters from the center of $X$.
Then $\Back{P} \Prefix X'$, $\Back{S} \Suffix X'$, and $\Back{Y} \Middle X'$.
Again by Lemma~\ref{lem:T2R-presuf-period}, $X'$ has a period of length $r$.

Let $U, V \Factor W$ such that $W = YU\Back{Y}V$.
Since $Y$ is a middle factor of $X$, $U[1..1]$ is in $X$.
Since $X$ has a period of length $r$ and $|Y| > r$, $U[1] = Y[|Y|+1-r] = \Comp{\Back{Y}[r]}$.
Since $\Back{Y}$ is a middle factor of $X'$ and $X'$ has a period of length $r$, $U[-1] = \Back{Y}[r]$.
So $U[1] = \Comp{U[-1]}$ and $Y$ is not admissible.
\end{proof}

The proof of the following result is nearly identical to that of Lemma~\ref{lem:QRT-long-palindrome-ends-constant}, due to similar prerequisite structural results of Lemma~\ref{lem:T2R-long-middle-nonadmissible} and~\ref{lem:QRT-middle-palins-nonadmissible}.

% NOTE: If you change this lemma, also update it in overview.tex!
\begin{lemma}
\label{lem:T2R-constant-long-admissible-ends}
Let $W$ be a boundary word.
There exists a set $\mathscr{F}$ of $O(1)$ factors of $W$ such that every admissible $F \Mirror W$ with $|F| \geq |W|/6$ is an affix factor of an element of $\mathscr{F}$.
The set $\mathscr{F}$ can be enumerated in $O(|W|)$ time. 
\end{lemma}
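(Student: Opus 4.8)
The plan is to mimic, essentially verbatim, the three-part structure of the proof of Lemma~\ref{lem:QRT-long-palindrome-ends-constant}, replacing the palindrome machinery with the mirror machinery provided by Lemma~\ref{lem:T2R-long-middle-nonadmissible}. The threshold $|W|/6$ plays the role that $|W|/3$ played there, so the partitioning constants change but the argument is structurally the same.

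First I would prove a ``few factors'' claim: suppose $P, Q, S$ are distinct admissible mirror factors of $W$ with $|P|, |Q|, |S| \geq |W|/6$ whose centers all lie in a common factor $I$ with $|I|$ small (taking $|I| \leq |W|/c$ for a suitable constant $c$, e.g.\ $|W|/16$ or smaller as the arithmetic dictates). Let $G \Factor W$ be the shortest factor containing $P$, $Q$, and $S$. If $G \neq P, Q, S$, then (up to relabeling) $P \Prefix G$ and $S \Suffix G$; since the centers of $P$ and $S$ lie within distance $|I|$ of each other, $|P|$ and $|S|$ differ by at most $2|I|$, which forces $2|G|-(|P|+|S|)$ to be small, and then Lemma~\ref{lem:T2R-long-middle-nonadmissible} forces $|Q| \leq 2|G|-(|P|+|S|)$ to be below the $|W|/6$ threshold unless $Q$ is an affix of $G$. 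I would do the constant bookkeeping so that the ``small'' quantity $2|G|-(|P|+|S|)$ ends up $< |W|/6$; this is exactly the computation carried out in Lemma~\ref{lem:QRT-long-palindrome-ends-constant} with $|W|/3$, and the same inequalities go through with $|W|/6$ provided the grouping factor length is chosen small enough.

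Next I would promote this to an arbitrary set $\mathscr{I}$ of admissible mirror factors of length $\geq |W|/6$ whose centers lie in a common small factor. Taking $G$ to be the shortest factor containing all of $\mathscr{I}$, there are two cases. Either there are distinct $P, S \in \mathscr{I}$ with $P \Prefix G$, $S \Suffix G$, in which case the three-factor claim applied to $P$, $S$, and any other $F \in \mathscr{I}$ shows $F \Affix G$, so $\mathscr{F} = \{G\}$ works for this group; or $G \in \mathscr{I}$ and every other element is a middle factor of $G$, in which case I pass to the shortest factor $G'$ with the same center as $G$ containing $\mathscr{I} \setminus \{G\}$, note $G'$ is not admissible hence $G' \notin \mathscr{I}$, find a proper affix $P \in \mathscr{I}$ of $G'$, deduce $2/3|G'| \leq |P|$ as before, and apply Lemma~\ref{lem:T2R-long-middle-nonadmissible} again to conclude every element of $\mathscr{I}\setminus\{G\}$ is an affix of $G'$; then $\mathscr{F} = \{G, G'\}$ works. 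Finally I partition $W$ into $O(1)$ factors each shorter than the grouping threshold, apply the group argument to each, and take $\mathscr{F}$ to be the union of the (at most two) factors produced per group, giving $|\mathscr{F}| = O(1)$.

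For the algorithmic claim: I would enumerate all admissible mirror factors. The admissible mirror factors correspond to factors $X$ that equal $\Back{Y}$ for a factor $Y$ at the appropriate offset; using the longest-common-extension data structure of Lemma~\ref{lem:HLF-longest-common-extension} on $W$ and $\Back{W}$ one can compute, for each center position, the longest such factor, in $O(|W|)$ total time (analogous to using Manacher for palindromes). Discard those shorter than $|W|/6$, group the $O(1)$ survivors' center positions into the $O(1)$ buckets of the partition, and within each bucket sort the first-letter and last-letter indices and read off $G$ (and, if needed, $G'$) in constant time per bucket. This is $O(|W|)$ overall. The main obstacle I anticipate is purely bookkeeping: choosing the grouping-factor length so that all the constant-coefficient inequalities in the ``few factors'' step simultaneously close at the $|W|/6$ threshold (the mirror version of Lemma~\ref{lem:T2R-long-middle-nonadmissible} bounds a middle factor by $2|X|-(|P|+|S|)$, the same bound as in the palindrome case, so no new difficulty arises, but the numerology must be redone carefully with $|W|/6$ in place of $|W|/3$). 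There is also the minor subtlety, already present in the palindrome proof, that mirror factors come with an associated ``twin'' factor $X'$ at offset $|W|/2$, and one must make sure the period argument and the affix conclusions are phrased in terms of the factor $X$ containing the relevant mirror factors rather than all of $W$ — but Lemma~\ref{lem:T2R-long-middle-nonadmissible} is already stated in exactly that form, so this is handled.
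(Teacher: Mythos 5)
Your proposal follows essentially the same route as the paper's proof: the same three-factor claim via Lemma~\ref{lem:T2R-long-middle-nonadmissible}, the same promotion to arbitrary groups with the $G$/$G'$ case split, the same $O(1)$ partition of centers (the paper settles the bookkeeping with buckets of length $|W|/14$, giving $|\mathscr{F}| \leq 30$), and the same LCE-based enumeration of admissible mirror factors. The constant-chasing you flag as the main obstacle does close exactly as you predict, so the argument is sound.
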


\begin{proof}
\textbf{Three factors.}
Let $P_1, P_2, P_3 \Mirror W$ be admissible with $|P_1|, |P_2|, |P_3| \geq |W|/6$ and centers contained in a factor of $W$ with length at most $|W|/14$.

Let $X \Factor W$ be the shortest factor such that $P_1, P_2, P_3 \Factor X$.
For some $i, j \in \{1, 2, 3\}$, $P_i \Prefix X$ and $P_j \Suffix X$.
We prove that if $i \neq j$, then $P_1, P_2, P_3 \Affix X$.

Without loss of generality, suppose $i = 1$, $j = 2$ and so $P_3 \Middle X$.
By Lemma~\ref{lem:T2R-long-middle-nonadmissible}, since $P_3$ is admissible, $|P_3| \leq 2|X| -(|P_1|+|P_2|) \leq |P_1| + |W|/7 + |P_2| - (|P_1|+|P_2|) = |W|/7 < |W|/6$, a contradiction.
So $P_3 \Affix X$.

\textbf{More than three factors.}
Consider a set $\mathscr{I} = \{F_1, F_2, \dots, F_m\}$ of at least three admissible factors of $W$ of length at least $|W|/6$ such that the centers of the factors are contained in a common factor of $W$ of length $|W|/14$.
We will prove that every element of $\mathscr{I}$ is an affix factor of one of two factors of $W$.

Let $G \Factor W$ be the shortest factor such that $F_i \Factor G$ for every $F_i \in \mathscr{I}$.
It is either the case that there exist distinct $F_l, F_r \in \mathscr{I}$ with $F_l \Prefix G$, $F_r \Suffix G$, or that $G \in \mathscr{I}$ and every $F_i \in \mathscr{I}$ besides $G$ has $F_i \Middle G$.

In the first case, $F_i \Affix G$ for any $i \neq l, r$ by the previous claim regarding three factors.
Also $F_l, F_r \Affix G$.
So every factor in $\mathscr{I}$ is an affix factor of $G$.

In the second case, let $G' \Factor G$ be the shortest factor with the same center as $G$ such that every factor in $\mathscr{I}$ excluding $G$ is a factor of $G'$.
Clearly $G' \Mirror W$ and $G'$ is not admissible.
Without loss of generality, there exists $F_p \in \mathscr{I}$ such that $F_p \Prefix G'$.
Since $F_p$ is admissible and $G'$ is not, $F_p \neq G'$.

Applying Lemma~\ref{lem:T2R-long-middle-nonadmissible} with $X = G'$, $P = F_p$, $S = G'$, every middle factor of $G'$ in $\mathscr{I}$ has length at most $2|G'|-(|G'|+|F_p|) \leq |G'| - |F_p| \leq |W|/7 < |W|/6$.
So every factor of $G'$ in $\mathscr{I}$ is an affix factor of $G'$.
Thus every factor in $\mathscr{I}$ is either $G$ or an affix factor of $G'$.

\textbf{All factors.}
Partition $W$ into 15 factors $I_1, I_2, \dots, I_{15}$ each of length at most $|W|/14$.
Let $\mathscr{I}_i$ be the set of factors with centers containing letters in $I_i$.
Then by the previous claim regarding more than three factors, there exists a set $\mathscr{F}_i$ ($G$ and possibly $G'$) such that every element of $\mathscr{I}_i$ is an affix factor of an element of $\mathscr{F}_i$ and $|\mathscr{F}_i| \leq 2$.
So every admissible $F \Mirror W$ with $|F| \geq |W|/6$ is an affix factor of an element of $\mathscr{F} = \bigcup_{i=1}^{15}{\mathscr{F}_i}$ and $|\mathscr{F}| \leq 2\cdot15$.

\textbf{Efficient enumeration.}
Use Manacher's algorithm~\cite{Manacher-1975} to compute all admissible (also called \emph{maximal}) palindrome factors, eliminating those with length less than $|W|/6$.
Partition these factors into the~15 sets $I_1, \dots, I_{15}$.
For each set, sort the factors by both first and last letter, compute $G$ and (if defined) $G'$, and output them.
\end{proof}

\subsection{Algorithm}

\begin{lemma}
\label{lem:T2R-factors-admissible-2}
Let $P$ be a polyomino and $\Bou{P} = A B C \Back{A} \Refl{\Theta}{C} \Refl{\Theta}{B}$.
Then $A$, $\Back{A}$ are admissible.
\end{lemma}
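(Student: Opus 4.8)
The plan is the following. Write $W := \Bou{P} = A B C \Back{A} \Refl{\Theta}{C} \Refl{\Theta}{B}$, so that the gapped mirror pair $A, \Back{A}$ sits in $W = A\,U\,\Back{A}\,V$ with $U = BC$ and $V = \Refl{\Theta}{C}\Refl{\Theta}{B}$; by definition admissibility of the pair means exactly $U[1] \neq \Comp{U[-1]}$ and $V[1] \neq \Comp{V[-1]}$. As in the earlier admissibility lemmas, the only inputs are that $W$ is simple -- so $W$ contains no factor $x\Comp{x}$, and hence every consecutive pair of letters of $W$ is either of the form $xx$ or a turn lying in $\mathcal{R} = \{\Left\Up, \Up\Right, \Right\Down, \Down\Left\}$ or $\mathcal{L} = \{\Up\Left, \Left\Down, \Down\Right, \Right\Up\}$ -- and Proposition~6 of~\cite{Daurat-2005}, which states that the number of consecutive pairs of $W$ in $\mathcal{R}$ exceeds the number in $\mathcal{L}$ by exactly four.

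First I would record two bookkeeping facts. (i) For any word $X$, the consecutive pairs of $X$ and those of $\Back{X}$ together split evenly between $\mathcal{R}$ and $\mathcal{L}$: a pair $xy$ of $X$ matches a pair $\Comp{y}\,\Comp{x}$ of $\Back{X}$, and $xy \in \mathcal{R}$ iff $\Comp{y}\,\Comp{x} \in \mathcal{L}$. (ii) For any word $X$ and any reflection $\Refl{\Theta}$, the consecutive pairs of $X$ and those of $\Refl{\Theta}{X}$ together split evenly, because a reflection reverses orientation and hence maps $\mathcal{R}$ onto $\mathcal{L}$. Applying (i) to $A, \Back{A}$ and (ii) to $B, \Refl{\Theta}{B}$ and to $C, \Refl{\Theta}{C}$ shows that the internal consecutive pairs of all six factors contribute net zero to the surplus $4$ of Proposition~6; hence that surplus is contributed entirely by the (at most six) pairs straddling adjacent factors of $W$.

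The main case, $|B|, |C| > 0$, is then immediate and uses only simplicity: the pair straddling $\Refl{\Theta}{C}$ and $\Refl{\Theta}{B}$ is $\Refl{\Theta}{C[-1]}\,\Refl{\Theta}{B[1]}$, so $\Refl{\Theta}{B[1]} \neq \Comp{\Refl{\Theta}{C[-1]}} = \Refl{\Theta}{\Comp{C[-1]}}$ and, $\Refl{\Theta}$ being injective, $U[1] = B[1] \neq \Comp{C[-1]} = \Comp{U[-1]}$; reflecting the pair straddling $B$ and $C$ gives $V[1] \neq \Comp{V[-1]}$ in the same way. The step I expect to be the main obstacle is the degenerate situation where one of $B, C$ is empty, since then some straddling pairs disappear and this direct argument breaks. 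Say $|B| = 0$, so $W = A C \Back{A} \Refl{\Theta}{C}$. If also $|A| = 0$ then $W = C\Refl{\Theta}{C}$ has only two straddling pairs, which cannot supply a surplus of $4$ -- a contradiction; so $|A| > 0$ and $W$ has exactly four straddling pairs, which therefore must all lie in $\mathcal{R}$. In particular $A[-1]\,C[1] \in \mathcal{R}$ and $C[-1]\,\Comp{A[-1]} \in \mathcal{R}$ (using $\Back{A}[1] = \Comp{A[-1]}$); since the two letters of a pair in $\mathcal{R}$ determine each other by a fixed rotation, and complementation is rotation by $180^\circ$, both memberships force $C[1]$ and $C[-1]$ to equal the same rotation of $A[-1]$, hence $C[1] = C[-1]$. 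As $\Comp{x} \neq x$ always, this gives $U[1] = C[1] = C[-1] = U[-1] \neq \Comp{U[-1]}$ and, applying $\Refl{\Theta}$, $V[1] = V[-1] \neq \Comp{V[-1]}$. The case $|C| = 0$ is symmetric, and $|B| = |C| = 0$ cannot occur, since then $W = A\Back{A}$ either is empty or contains the backtrack $A[-1]\,\Comp{A[-1]}$ and so is not the boundary word of a polyomino. Combining the cases proves the lemma.
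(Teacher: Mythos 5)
Your proposal is correct, and its core coincides with the paper's own proof: for $|B|,|C|>0$ the paper runs exactly your main case in contrapositive form, supposing $B[1]=\Comp{C[-1]}$ and deriving the backtrack $\Comp{x}\,x$ at the junction of $\Refl{\Theta}{C}$ and $\Refl{\Theta}{B}$. Where you go beyond the paper is the degenerate cases $|B|=0$ or $|C|=0$, which the paper's proof does not treat explicitly even though the admissibility condition on $U=BC$ and $V=\Refl{\Theta}{C}\Refl{\Theta}{B}$ still has content there and the backtrack argument no longer applies verbatim; your resolution via Proposition~6 of~\cite{Daurat-2005} (all four straddling pairs forced into $\mathcal{R}$, hence $C[1]=C[-1]$, and the $|A|=0$ subcase killed by a turning-number count) checks out and is the same technique the paper itself uses in the neighboring admissibility lemmas such as Lemmas~\ref{lem:T2R-factors-admissible-1} and~\ref{lem:T1R-factors-admissible}.
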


\begin{proof}
Because of the factorization's structure, $A$ and $\Back{A}$ are mirror.
Suppose, for the sake of contradiction and without loss of generality, that $B[1] = \Comp{C[-1]}$.
Let $x = \Refl{\Theta}{B}[1]$.
So $\Refl{\Theta}{C}[-1] = \Refl{\Theta}{C[-1]} = \Refl{\Theta}{\Comp{B[1]}} = \Refl{\Theta}{\Comp{\Refl{\Theta}{x}}} = \Comp{x}$.
Thus $\Comp{x} x$ is a subword of $\Bou{P}$, a contradiction.
\end{proof}

\begin{theorem}
\label{thm:T2R-algorithm}
Let $P$ be a polyomino with $|\Bou{P}| = n$.
It can be decided in $O(n)$ time if $\Bou{P}$ has a type-2 reflection factorization.
\end{theorem}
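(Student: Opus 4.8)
The plan is to split into three cases according to which of the three ``paired'' blocks $A$, $B$, $C$ of a hypothetical factorization $\Bou{P}=ABC\Back{A}\Refl{\Theta}{C}\Refl{\Theta}{B}$ is longest. Writing $n=|\Bou{P}|$ and using $|A|=|\Back{A}|$, $|B|=|\Refl{\Theta}{B}|$, $|C|=|\Refl{\Theta}{C}|$ together with the fact that the six blocks partition the boundary, we get $|A|+|B|+|C|=n/2$, so in any type-2 reflection factorization at least one of $|A|,|B|,|C|$ is at least $n/6$. First I would spend $O(n)$ time preprocessing $W=\Bou{P}$, $\Back{W}$, and $\Refl{\Theta}{W}$ for each of the four reflection angles $\Theta$, together with their reverses, via Lemma~\ref{lem:HLF-longest-common-extension}, so that the longest common factor of any two of these starting at prescribed positions is available in $O(1)$ time; this suffices to verify a claimed block equality (``$\Back{A}$ equals $\Comp{\Rev{A}}$ at the antipodal arc'', or ``$\Refl{\Theta}{C}$ is the $\Theta$-reflection of $C$'') in $O(1)$ time.

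For the case $|B|\geq n/6$ (the case $|C|\geq n/6$ being symmetric, and in fact handled by the same enumeration read two ways) I would, for each of the four angles $\Theta$, enumerate in $O(n)$ total time the $O(n)$ admissible gapped reflect-square pairs $X,\Refl{\Theta}{X}$ with $|X|\geq n/6$ using Lemma~\ref{lem:T2R-reflection-big-gapped}. By Lemma~\ref{lem:T2R-factors-admissible-1} the pair $B,\Refl{\Theta}{B}$ occurring in any factorization with $|B|\geq n/6$ is admissible, hence appears in this list. Fixing the positions of $B$ and $\Refl{\Theta}{B}$ pins down the rest: the cyclic order $A,B,C,\Back{A},\Refl{\Theta}{C},\Refl{\Theta}{B}$ forces $A$ to be the arc from the end of $\Refl{\Theta}{B}$ to the start of $B$, so $|A|$ is determined, then $|C|=n/2-|A|-|B|$ is determined, and the placements of $C,\Back{A},\Refl{\Theta}{C}$ follow; two $O(1)$-time queries then check $\Back{A}=\Comp{\Rev{A}}$ and that $\Refl{\Theta}{C}$ is the $\Theta$-reflection of $C$ at the forced positions (the candidate is discarded if $|C|<0$ or the arcs overlap). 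This is $O(1)$ work per enumerated pair, hence $O(n)$ overall.

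For the remaining case $|A|\geq n/6$, note that $A$ together with its forced partner $\Back{A}$ one half-period away is exactly an admissible mirror factor of $W$ of length at least $n/6$ (admissibility by Lemma~\ref{lem:T2R-factors-admissible-2}). By Lemma~\ref{lem:T2R-constant-long-admissible-ends} every such $A$ is an affix factor of one of $O(1)$ precomputable factors $\mathscr{F}$, so for each $G\in\mathscr{F}$ and each of the two ways $A$ can be an affix of $G$, the candidates for $A$ are the prefixes (resp.\ suffixes) of $G$ of length at least $n/6$ that are admissible mirror factors. Once $A$ is fixed, $\Back{A}$ sits on the antipodal arc and the two remaining arcs are determined: $M_1$ (between $A$ and $\Back{A}$) must equal $BC$, and $M_2$ (between $\Back{A}$ and $A$) must equal $\Refl{\Theta}{C}\,\Refl{\Theta}{B}$; completing the factorization therefore amounts to choosing the split point $|B|$ so that $\Refl{\Theta}{M_2}$ is a cyclic rotation of $M_1$.

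The main obstacle is that a single $G\in\mathscr{F}$ can induce $\Omega(n)$ candidate values of $A$, so testing cyclic-rotation-equivalence of the two arcs naively, in $\Omega(|M_1|)$ time per candidate, is too slow. I would overcome this using Fine--Wilf (Lemma~\ref{lem:HLF-strong-periodicity}): two admissible-mirror affix lengths of $G$ differing by $p$ force $G$ to have period $p$ on the shared prefix and force the antipodal arc to have period $p$ as well, and a counting argument confined to the window $[n/6,n/2]$ shows that the valid lengths split into only $O(1)$ arithmetic progressions. Along a progression of common difference $p$, the arcs $M_1$ and $M_2$ each acquire a fixed length-$p$ periodic block at the front with every step, so each is a fixed ``defect'' word preceded by a growing power of a fixed block; deciding cyclic-rotation-equivalence for all members of the progression then reduces to testing conjugacy of the two length-$p$ blocks and comparing the $O(1)$-many defect letters, which can be done in $O(n)$ time for the whole progression and hence $O(n)$ time for the case. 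The admissibility checks along the way are routine given Lemmas~\ref{lem:T2R-factors-admissible-1} and~\ref{lem:T2R-factors-admissible-2}, and since the procedure answers affirmatively only upon completing a factorization, it also yields a witness isohedral tiling.
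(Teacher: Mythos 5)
Your reduction to three cases via $|A|+|B|+|C|=n/2$, and your treatment of the cases $|B|\ge n/6$ and $|C|\ge n/6$ --- enumerate the $O(n)$ long admissible pairs via Lemma~\ref{lem:T2R-reflection-big-gapped}, observe that fixing the positions of $B$ and $\Refl{\Theta}{B}$ determines all remaining blocks, and verify with $O(1)$-time longest-common-extension queries --- coincide with the paper's Case~2. The divergence, and the problem, is the case $|A|\ge n/6$.

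There you enumerate candidate factors $A$ as affixes of the $O(1)$ elements of $\mathscr{F}$, accept that a single $G\in\mathscr{F}$ can contribute $\Omega(n)$ candidates, and try to batch them. The first half of the batching is sound: two admissible mirror prefixes of $G$ of lengths $a_1<a_2$ do force a border, hence a period of length $a_2-a_1$, on the longer one, and since all candidate lengths lie in $[n/6,n/2]$ they are borders of the longest candidate of relative length at least one third, so Fine--Wilf confines them to $O(1)$ arithmetic progressions. The gap is the last step. Along a progression with difference $p$ you must decide, for each $j$, whether $R^jX$ is a rotation of $T^jY$, where $R,T$ are fixed length-$p$ blocks and $X,Y$ are the arcs left over by the longest candidate. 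You assert this ``reduces to testing conjugacy of the two length-$p$ blocks and comparing the $O(1)$-many defect letters,'' but $X$ and $Y$ are not $O(1)$ letters: they have length $n/2-a_k$, which can be $\Theta(n)$ (e.g.\ when $a_k\approx n/4$). Whether $R^jX$ is a rotation of $T^jY$ depends on how these long aperiodic tails interact with the rotation offset, and neither the claimed reduction nor an $O(n)$-total-time test over all $j$ in a progression is established; as written this step does not go through. The paper avoids the issue entirely by never enumerating candidate $A$'s in this case: for each $F\in\mathscr{F}$ it writes $W=XY$ with $X$ the half-word beginning where $F$ begins, iterates over the end position $i$ of $B$ inside $X$, and uses admissibility of the pair $B,\Refl{\Theta}{B}$ (Lemma~\ref{lem:T2R-factors-admissible-1}) to force $B$ to be the longest common suffix of $X[1..i]$ and $\Refl{\Theta}{Y}$; each $i$ then yields a single fully determined candidate factorization, checked in $O(1)$ time against a precomputed table of the admissible prefix mirror factors of $F$. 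You would need either to prove your batched conjugacy test or to switch to an enumeration of that kind.
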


\begin{proof}
Recall that a type-2 reflection factorization of $W$ has the form $W = A B C \Back{A} \Refl{\Theta}{C} \Refl{\Theta}{B}$.
Without loss of generality, either $|A| \geq |W|/6$ or $|B| \geq |W|/6$.

\textbf{Case 1: $\bm{|A| \geq |W|/6}$.}
The factors $A$ and $\Back{A}$ are admissible by Lemma~\ref{lem:T2R-factors-admissible-2}. 
Observe that each admissible factor has a distinct center $W[i..j]$ with $j \in \{i, i+1\}$.
Moreover, it is the longest factor $X = LR$ with this center such that the factor $\Back{X} = \Back{R} \Back{L}$ has center $W[i+|W|/2..j+|W|/2]$.
Use Lemma~\ref{lem:HLF-longest-common-extension} to preprocess $W$, $\Comp{W}$, $\Rev{W}$, and $\Back{W}$ in $O(|W|)$ time to then compute the admissible factor with each center in $O(|W|)$ total time.

Next, use Lemma~\ref{lem:T2R-constant-long-admissible-ends} to compute $\mathscr{F}$, a set of factors such that every factor $A$ with $|A| \geq |W|/6$ is an affix factor of element of $\mathscr{F}$.
This takes $O(|W|)$ time.
For each element $F \in \mathscr{F}$, the prefix admissible factors of $F$ are handled together in $O(|W|)$ time.
The suffix factors of each $F$ are handled similarly and symmetrically.
Since $|\mathscr{F}| = O(1)$, $O(|W|)$ total time is spent. 
The remainder of this case describes how the prefix admissible factors of a single factor $F$ are handled. 
The key is to enumerate and process each candidate factor pair $B, \Refl{\Theta}{B}$, rather than candidate factor pairs $A, \Back{A}$.

\begin{figure}[ht]
\centering
\includegraphics[scale=1.0]{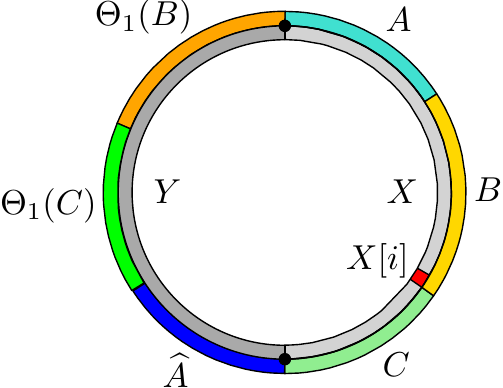}
\caption{The words used in the proof of Theorem~\ref{thm:T2R-algorithm}.}
\label{fig:T2R-algorithm}
\end{figure}

Let $W = XY$ such that $|Y| = \pm |X|$ and $F, X$ start at the same letter of $W$ (see Figure~\ref{fig:T2R-algorithm}).  
Reuse the data structure previously constructed by Lemma~\ref{lem:HLF-longest-common-extension} to support the following queries in $O(1)$ time: ``What is the length of the longest common suffix of $\Refl{\Theta}{Y}$ and $X[1..i]$?''.
For each $i$ with $1 \leq i \leq |X|$, denote the result of this query $s_i$.

Each pair of longest common suffixes $X[i+1-s_i..i], Y[|Y|+1-s_i..|Y|]$ is a candidate factor pair $B, \Refl{\Theta}{B}$.
Since $B, \Refl{\Theta}{B}$ are admissible by Lemma~\ref{lem:T2R-factors-admissible-1}, no non-longest suffix can be candidates. 
For each pair, verify that $A = X[1..i-s_i]$, the prefix of $X$ preceeding $B$, is an admissible prefix factor of $F$.
This takes $O(1)$ time after precomputing a look-up table of the admissible prefix factors of $F$.
Since $A$ is a mirror, the prefix of $Y$ with length $|A|$ is $\Back{A}$.
Finally, verify that $X[i+1..|X|] = \Refl{\Theta}{Y[|A|+1..|Y|-s_i]}$, i.e., that the remaining words before and after $\Back{A}$ are factors $C$ and $\Refl{\Theta}{C}$, respectively. 
This is done in $O(1)$ time by using the data structure constructed by Lemma~\ref{lem:HLF-longest-common-extension}.

If both verifications succeed, then the named factors form a type-2 reflection factorization of $W$. 
In total, $O(1)$ time is spent per choice of $i$, the index of $W$ where $B$ ends, and thus $O(|W|)$ time total.

\textbf{Case 2: $\bm{|B| \geq |W|/6}$.}
Compute a $O(|W|)$-sized superset of all factor pairs $B, \Refl{\Theta}{B}$ using Lemma~\ref{lem:T2R-reflection-big-gapped}.
For each pair, repeat the verifications done in case~1.
Using the same analysis as case~1, this takes $O(|W|)$ total time.
\end{proof}

% CODE: T1H

\section{Type-1 Half-Turn-Reflection Factorizations}
\label{sec:T1H}

% IH 5
% Berglund (http://www.angelfire.com/mn3/anisohedral/isohedral.html) 2.
\begin{definition}
Let $W$ be a boundary word.
A \emph{type-1 half-turn-reflection factorization} of a boundary word $W$ has the form $W = A B C \Back{A} D \Refl{\Theta}{D}$ with $B$, $C$ palindromes.
\end{definition}

\begin{lemma}
\label{lem:T1H-factors-admissible}
Let $P$ be a polyomino and $\Bou{P} = A B C \Back{A} D \Refl{\Theta}{D}$ with $B$, $C$ palindromes.
Then reflect square $D \Refl{\Theta}{D}$, and palindromes $B$, $C$ are admissible.
\end{lemma}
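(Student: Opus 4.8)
The plan is to follow the pattern of Lemma~\ref{lem:HLF-half-turn-factors-admissible} and Lemma~\ref{lem:T1R-factors-admissible}, using two properties of $W=\Bou{P}$: \emph{simplicity}, which forbids any cyclically consecutive pair $x\Comp{x}$; and Proposition~6 of~\cite{Daurat-2005}, that the number of consecutive pairs of $W$ lying in $\mathcal{R}=\{\Left\Up,\Up\Right,\Right\Down,\Down\Left\}$ exceeds the number lying in $\mathcal{L}=\{\Up\Left,\Left\Down,\Down\Right,\Right\Up\}$ by exactly four. I will also use three elementary facts: (a) for a palindrome, and for a pair of words $Z,\Refl{\Theta}{Z}$, there are equally many consecutive pairs from $\mathcal{L}$ as from $\mathcal{R}$, because reversal and reflection each swap $\mathcal{L}$ and $\mathcal{R}$ while complementation preserves both; (b) as a ``next-letter'' relation, $\mathcal{R}$ is the $4$-cycle $\Left\to\Up\to\Right\to\Down\to\Left$, so for every letter its $\mathcal{R}$-successor, its $\mathcal{R}$-predecessor, and itself are pairwise distinct, and applying the successor twice gives the complement; and (c) every reflection $\ORefl{\Theta}$ reverses this cycle, interchanging the successor and predecessor maps, and $\ORefl{\Theta}\circ\ORot{180}$ is again a reflection.

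For the reflect square $D\Refl{\Theta}{D}$ I case on whether $|A|>0$. If $|A|>0$, this factor is immediately preceded by $\Back{A}$ and cyclically immediately followed by $A$; applying complementation and, where needed, $\ORefl{\Theta}$ to the two defining inequalities of an admissible reflect square rewrites each of them as the assertion that a specific cyclically consecutive pair of $W$ is not a backtrack --- the pair $\Back{A}[-1]\,D[1]$ for one, and the last letter of $\Refl{\Theta}{D}$ paired with the following letter $A[1]$ for the other --- and simplicity closes this case. If $|A|=0$, then $W=BCD\Refl{\Theta}{D}$, and I first argue $|B|,|C|,|D|>0$: otherwise $W$ has at most three inter-factor consecutive pairs, yet by (a) the surplus of four $\mathcal{R}$-pairs from Proposition~6 must come entirely from inter-factor pairs, impossible with fewer than four of them. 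Hence all four inter-factor pairs $B[-1]C[1]$, $C[-1]D[1]$, $D[-1]\Refl{\Theta}{D}[1]$ and $\Refl{\Theta}{D}[-1]B[1]$ lie in $\mathcal{R}$; using $B[1]=B[-1]$, $C[1]=C[-1]$ and (b)--(c) these memberships determine $C[1]$, $D[1]$ and $D[-1]$ from $B[1]$ (in particular $D[1]=\Comp{B[1]}$), and the two admissibility inequalities for $D\Refl{\Theta}{D}$ then reduce to the distinctness facts in (b) and to the fact that no letter equals its complement.

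For the palindromes $B$ and $C$, admissibility of a palindrome factor is exactly the condition that the letter cyclically before it differs from the letter cyclically after it, so I case on $|A|$ and on which of $B,C$ is empty (a zero-length factor being vacuously admissible). When $|A|>0$ and $|B|,|C|>0$, the letter after $B$ is $C[1]=C[-1]$ and simplicity at the junction following $C$ gives $C[-1]\neq\Comp{\Back{A}[1]}=A[-1]$, the letter before $B$; when $|C|=0$, the letter before $B$ is $A[-1]$ and the letter after is $\Back{A}[1]=\Comp{A[-1]}$, which differ; and when $|B|=0$, the factor $C$ sits between $A$ and $\Back{A}$ and the same complement argument applies. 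The case of $C$ is symmetric, invoking simplicity at the junction preceding $B$. When $|A|=0$, I reuse the structure above --- $W=BCD\Refl{\Theta}{D}$ with $|B|,|C|,|D|>0$ and all four inter-factor pairs in $\mathcal{R}$ --- so the letters flanking $B$ are the $\mathcal{R}$-successor and the $\mathcal{R}$-predecessor of $B[1]$ (the latter equal to $\Refl{\Theta}{D}[-1]$), which differ by (b), and the letters flanking $C$ are $B[1]$ and $D[1]=\Comp{B[1]}$, which differ.

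The main obstacle is the $|A|=0$ case: simplicity alone is not enough, and one must extract from the single Proposition~6 count enough information about all four inter-factor pairs to resolve every inequality. This forces the use of the $4$-cycle description of $\mathcal{R}$ and its reversal under reflections, together with careful bookkeeping of how reflections, complementation, and reversal compose (notably that $\ORefl{\Theta}\circ\ORot{180}$ is a reflection), so that each admissibility condition is translated correctly into a non-backtrack statement or a cycle-distinctness statement.
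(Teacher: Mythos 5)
Your proof is correct and follows essentially the same approach as the paper: each admissibility inequality is translated into a non-backtracking condition at a junction of the factorization and discharged by simplicity, with the counting argument from Proposition~6 of~\cite{Daurat-2005} covering the degenerate case. You are in fact more thorough than the paper's own proof of this lemma, which argues only via the letters $A[1]$, $A[-1]$, $\Back{A}[1]$, $\Back{A}[-1]$ and so implicitly assumes $|A|>0$; your explicit $|A|=0$ analysis using the $\mathcal{R}$-surplus and the four-cycle structure of $\mathcal{R}$ is precisely the device the paper deploys in the analogous admissibility lemmas for other factorization forms (e.g.\ Lemmas~\ref{lem:HLF-half-turn-factors-admissible} and~\ref{lem:T1R-factors-admissible}), and it cleanly closes that remaining case.
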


\begin{proof}
\textbf{$\bm{D \Refl{\Theta}{D}}$ is admissible.}
Suppose, for the sake of contradiction and without loss of generality, that $\Refl{\Theta}{D}[1] = \Refl{\Theta}{A[1]}$.
Let $x = A[1]$.
Then $\Back{A}[-1] = \Comp{x}$ and $D[1] = \Refl{\Theta}{(\Refl{\Theta}{D}[1])} = \Refl{\Theta}{(\Refl{\Theta}{A[1]})} = A[1] = x$.
So $\Comp{x} x$ is a subword of $\Bou{P}$, a contradiction.

\textbf{$\bm{B}$, $\bm{C}$ are admissible.}
Consider the admissibility of $B$; the other case follows by symmetry.
Suppose, for the sake of contradiction, that $A[-1] = C[1]$.
Let $x = A[-1]$.
Then $\Back{A}[1] = \Comp{x}$ and $C[-1] = C[1] = x$.  
So $x \Comp{x}$ is a subword of $\Bou{P}$, a contradiction.
\end{proof}

\begin{theorem}
Let $P$ be a polyomino with $|\Bou{P}| = n$.
It can be decided in $O(n\log{n})$ time if $\Bou{P}$ has a type-1 half-turn-reflection factorization.
\end{theorem}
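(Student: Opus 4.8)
The plan is to iterate over all candidate reflect squares $D\Refl{\Theta}{D}$ occurring as a block of $\Bou{P}$ and, for each one, reduce the remaining problem to deciding whether the complementary arc of $\Bou{P}$ can be written as $ABC\Back{A}$ with $B$, $C$ palindromes --- i.e.\ to a ``half-turn-like'' factorization problem on a now-shorter word. By Lemma~\ref{lem:T1H-factors-admissible} every reflect square occurring in a type-1 half-turn-reflection factorization is admissible, so it suffices to let $D\Refl{\Theta}{D}$ range over the $O(n\log n)$-sized superset of admissible reflect square factors that is enumerated, in $O(n\log n)$ time, by Lemma~\ref{lem:T1R-small-fast-superset}. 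Before the loop I would compute, in $O(n\log n)$ total time: the prefix and suffix palindrome factorizations of every shift of $W=\Bou{P}$ via Lemma~\ref{lem:HLF-all-prefix-facts-fast}; Manacher's data on $WW$, so that ``is $W[a..b]$ a palindrome?'' is answered in $O(1)$ time; and the longest-common-extension structures of Lemma~\ref{lem:HLF-longest-common-extension} on $W$, $\Back{W}$ and $\Rev{W}$, so that in particular the length of the longest prefix $A$ of a given arc whose backtrack $\Back{A}$ is the matching suffix is available in $O(1)$ time.

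Fix a candidate $D\Refl{\Theta}{D}$, let $W'$ (of length $N$) be the complementary arc, and let $L$ be the length of the longest common prefix of $W'$ and $\Back{W'}$; every admissible choice of $A$ is a prefix of $W'$ of some length $\alpha\in\{0,1,\dots,L\}$, and $\Back{A}$ is then forced to be the length-$\alpha$ suffix, so the problem becomes: does there exist $\alpha\in[0,L]$ for which the central factor $M_\alpha:=W'[\alpha+1..N-\alpha]$ is a concatenation of two palindromes? The degenerate subcases $|B|=0$ and $|C|=0$ ask only whether some $M_\alpha$ is itself a palindrome; since every $M_\alpha$ is the window of $W'$ of radius $N/2-\alpha$ about the center of $W'$, this holds for an admissible $\alpha$ iff $\max(0,\lceil N/2-\rho\rceil)\le L$, where $\rho$ is the radius of the maximal palindrome centered at the center of $W'$ --- an $O(1)$-time test using the Manacher data. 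For the remaining case ($|B|,|C|\ge 1$) I would use the extremal double-palindrome lemma (Lemma~\ref{lem:HLF-extremal-double-palindromes}) to replace ``$M_\alpha$ is a concatenation of two palindromes'' by the single condition ``$M_\alpha$ with its longest palindromic prefix removed is a nonempty palindrome''.

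The technical heart --- and the step I expect to be the main obstacle --- is to decide the existence of such an $\alpha$ over the whole range $[0,L]$ without scanning it, since $L$ may be $\Theta(n)$ and a scan would cost $\Theta(n\log n)$ per candidate. Here I would prove a structural lemma in the spirit of Lemmas~\ref{lem:HLF-palin-size-steps}, \ref{lem:HLF-prefix-palin-factor} and~\ref{lem:HLF-small-candidate-rep-set}: as $\alpha$ decreases, the longest palindromic prefix of $M_\alpha$ and the longest palindromic suffix of $M_\alpha$ each ``jump'' only $O(\log n)$ times and are otherwise controlled by a period forced by Fine--Wilf (Lemma~\ref{lem:HLF-strong-periodicity}), so the set of $\alpha\in[0,L]$ for which the condition holds is a union of $O(\log n)$ arithmetic progressions whose parameters can be read off in $O(\log n)$ time from the precomputed prefix/suffix palindrome factorizations of the two shifts of $W$ that begin and end $W'$. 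Testing whether such a union meets $[0,L]$, and verifying the ``removed-prefix-is-a-palindrome'' condition at the resulting $O(\log n)$ candidate values via the Manacher data, then settles one candidate reflect square quickly. The remaining accounting issue is to bring the per-candidate cost down to (amortized) $O(1)$ --- since there are $\Theta(n\log n)$ candidate reflect squares and the target is $O(n\log n)$ --- by sharing the $O(\log n)$ structural computation across reflect squares with a common starting position and reflection (there are only $O(n)$ such position--reflection pairs), exactly where the $\log$-bounded growth of the prefix/suffix palindrome factorizations does the work. Correctness throughout rests on Lemma~\ref{lem:T1H-factors-admissible}: restricting $B$, $C$ to admissible palindromes and $D\Refl{\Theta}{D}$ to the enumerated superset loses no factorization, so an affirmative answer is returned exactly when a witness factorization exists, which also yields a witness tiling. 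Combining the $O(n\log n)$ preprocessing with the $O(n\log n)$ candidate enumeration and $O(n\log n)$ of candidate testing gives the claimed $O(n\log n)$ running time.
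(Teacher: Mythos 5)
Your setup matches the paper's Phase~1 up to a point: both arguments restrict $D\Refl{\Theta}{D}$ to the $O(n\log n)$-sized superset of Lemma~\ref{lem:T1R-small-fast-superset}, justified by Lemma~\ref{lem:T1H-factors-admissible}, and reduce the rest to finding a gapped mirror pair $A,\Back{A}$ flanked by two palindromes on the complementary arc. The divergence --- and the gap --- is in how $A$ is handled. You let $|A|$ range over all of $[0,L]$ and rest the entire argument on an unproved structural claim: that $\{\alpha\in[0,L] : M_\alpha \text{ is a double palindrome}\}$ is a union of $O(\log n)$ arithmetic progressions computable in $O(\log n)$ time from precomputed factorizations. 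This does not follow from the lemmas you cite. Lemmas~\ref{lem:HLF-small-candidate-rep-set} and~\ref{lem:HLF-double-palin-side-fast} analyze words of the form $LR^r$ in which only one end varies, and it varies in steps of a fixed primitive period $|R|$; your family $M_\alpha=W'[\alpha+1..N-\alpha]$ shrinks symmetrically from both ends by one letter per step, a setting those arguments do not address, and you give no proof that the solution set has the claimed structure. Even granting it, the accounting fails: $\Theta(n\log n)$ candidates at $O(\log n)$ each is $O(n\log^2 n)$, and the proposed rescue --- sharing the computation among candidates with a common starting position and reflection --- is not justified, since candidates sharing a start position have different lengths, hence different complementary arcs $W'$ with different centers and different prefix-side anchors, so the family $M_\alpha$ is not shared.

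The paper sidesteps all of this with one observation you are missing: in any type-1 half-turn-reflection factorization the gapped mirror pair $A,\Back{A}$ is itself maximal, by the same simplicity argument as in Lemma~\ref{lem:HLF-half-turn-factors-admissible} --- if $A$ could be extended into the gap $BC$, then (using that $B$ and $C$ are palindromes) the boundary word would contain a subword $x\Comp{x}$, contradicting simplicity. Hence for each candidate $D\Refl{\Theta}{D}$ there is exactly one value of $|A|$ to consider, namely the longest, obtained by a single $O(1)$-time longest-common-extension query; each candidate then contributes a single ``completion factor'' to be tested for a two-palindrome factorization, and these $O(n\log n)$ tests are batched using Lemma~\ref{lem:HLF-extremal-double-palindromes} and two-finger scans over length-sorted lists of admissible palindromes in $O(n\log n)$ total time. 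Establishing that maximality claim (which collapses your range $[0,L]$ to a single point) --- or, alternatively, actually proving your arithmetic-progression lemma and repairing the amortization --- is what your proposal still needs before it is a proof.
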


\begin{proof}
%Combine ideas from type-1 reflection factorizations and half-turn factorizations (surprise!)
The algorithm has two phases.
Phase~1 is computing a $O(|W|\log{|W|})$-sized set of factors such that if any of these is can be factored into two admissible palindromes, then $W$ has a type-1 half-turn-reflection factorization.
Phase~2 is deciding if any of these factors has such a factorization.

\textbf{Phase 1.}
A type-1 half-turn-reflection factorization of $W$ has the form $W = A B C \Back{A} D \Refl{\Theta}{D}$ with $B$, $C$ palindromes.
The factor $D \Refl{\Theta}{D}$ is admissible by Lemma~\ref{lem:T1H-factors-admissible}.
Use Lemma~\ref{lem:T1R-small-fast-superset} to enumerate a $O(|W|\log{|W|})$-sized superset of all admissible factors $D \Refl{\Theta}{D}$.

Preprocess $W$ and $\Back{W}$ using Lemma~\ref{lem:HLF-longest-common-extension} to answer the following queries in $O(1)$ time: ``What is the longest admissible pair $A, \Back{A}$ with $A$ ending at $W[i]$ and $\Back{A}$ starting at $W[j]$?''.
For each candidate $D \Refl{\Theta}{D} = W[l..r]$, compute the longest admissible pair $A, \Back{A}$ with $A$ ending at $W[l-1]$ and $\Back{A}$ starting at $W[r+1]$.
This partial factorization $A D \Refl{\Theta}{D} \Back{A}$ can be completed into a type-1 half-turn-reflection factorization if and only if $W[r+1+|A|..l-1-|A|]$ has a factorization into two palindromes.
Call the set of such factors computed for all candidate $D \Refl{\Theta}{D}$ \emph{completion factors}.
These factors take $O(|W|\log{|W|})$ total time to compute.

\textbf{Phase 2.}
The factors $B$, $C$ of a type-1 half-turn-reflection factorization are admissible by Lemma~\ref{lem:T1H-factors-admissible}.
Use Manacher's algorithm~\cite{Manacher-1975} to compute the $O(|W|)$-sized set of all admissible factors.
For each letter in $W$, create length-sorted lists of the admissible palindrome factors that start and end at this letter.
Constructing these lists takes $O(|W|)$ time using counting sort, since there are $O(|W|)$ factors of integer lengths between~1 and~$|W|$.
Construct similar lists for the completion factors from phase~1 in $O(|W|\log{|W|})$ time.

By Lemma~\ref{lem:HLF-extremal-double-palindromes}, a completion factor has a factorization into two admissible palindrome factors if and only if such a factorization exists utilizing the longest admissible prefix or suffix palindrome of the completion factor.
For each letter in $W$, use a two-finger scan to compute the longest admissible prefix palindrome for each completion factor starting at this letter.
For each pair, verify that the remainder of the factor is also a admissible palindrome factor using a precomputed look-up table.
If so, then a type-1 half-turn-reflection factorization has been found.
Do a symmetric scan through the completion factors and admissible palindrome factors ending at the letter. 

Scans take time linear in the number of admissible palindromes and completion factors involved.
Since each factor appears in two scans (one each for the letter the factor starts and ends at) the total time taken is $O(|W|\log{|W|})$. 
% Can also extend to enumeration at O(1) cost per factorization for this type.
% Just do the translation-y thing of iterating back down through the palindromes once a double palindrome for a completion factor is found.
\end{proof}

% CODE: T2H

\section{Type-2 Half-Turn-Reflection Factorizations}
\label{sec:T2H}

% IH 6
% Berglund (http://www.angelfire.com/mn3/anisohedral/isohedral.html) 6.
\begin{definition}
A \emph{type-2 half-turn-reflection factorization} of a boundary word $W$ has the form $W = A B C D \Refl{\Theta}{B} \Refl{\Phi}{D}$ with $A$, $C$ palindromes and $\D{\Theta}-\D{\Phi} = \pm \D{90}$.
\end{definition}

\begin{lemma}
\label{lem:T2H-subroutine-1}
Let $U$, $V$ be words.
Let the prefix and suffix palindrome factorizations of $U$ and $V$ be given, along with a data structure supporting $O(1)$-time longest common prefix queries for subwords of $V$ and $\Refl{\Theta}{U}$.
It can be decided in $O(\log{|U|} + \log{|V|})$ time if there exist factorizations $U = \Refl{\Theta}{D} A$, $V = C D$ with $A$, $C$ palindromes. 
\end{lemma}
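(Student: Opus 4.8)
The plan is to search for the boundary between $\Refl{\Theta}{D}$ and $A$ in $U$, equivalently the boundary between $C$ and $D$ in $V$, exploiting that once we fix the length $|D|$ the rest of the decomposition is forced. Write $U = \Refl{\Theta}{D}A$ and $V = CD$. Fixing $d = |D|$ determines $D$ as the length-$d$ suffix of $V$ and simultaneously forces $\Refl{\Theta}{D}$ to be the length-$d$ prefix of $U$; so the candidate values of $d$ are exactly those for which the length-$d$ prefix of $U$ equals $\Refl{\Theta}{}$ applied to the length-$d$ suffix of $V$. Using the supplied data structure for longest-common-prefix queries between subwords of $V$ and $\Refl{\Theta}{U}$ (after reversing appropriately, or equivalently a longest-common-suffix query), the set of valid $d$ is a single prefix-closed interval $\{0,1,\dots,d_{\max}\}$, and $d_{\max}$ is computed by one $O(1)$-time query. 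Then we must decide whether there is some $d$ in this interval for which simultaneously the length-$(|U|-d)$ suffix $A$ of $U$ is a palindrome and the length-$(|V|-d)$ prefix $C$ of $V$ is a palindrome.

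So the problem reduces to: given the suffix palindrome factorization of $U$ and the prefix palindrome factorization of $V$, and an integer bound $d_{\max}$, decide whether there is an integer $d$ with $0 \le d \le d_{\max}$ such that $U$ has a suffix palindrome of length $|U|-d$ and $V$ has a prefix palindrome of length $|V|-d$. Equivalently, letting $S_U = \{|U| - \ell : \ell \text{ a suffix-palindrome length of } U\}$ and $S_V = \{|V| - \ell : \ell \text{ a prefix-palindrome length of } V\}$ be the sets of admissible split points, we ask whether $S_U \cap S_V \cap [0, d_{\max}] \ne \emptyset$. By the Prefix Palindrome Factorization Lemma (Lemma~\ref{lem:HLF-prefix-palin-factor}) and its suffix analogue, each of $S_U$ and $S_V$ is a union of $O(\log{|U|})$ (resp.\ $O(\log{|V|})$) arithmetic progressions — one progression per block $X_i^{r_i}$ of the factorization, with common difference $|X_i|$ — plus the trivial endpoints. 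Intersecting two arithmetic progressions (and with the interval $[0,d_{\max}]$) is an $O(1)$ computation via the extended Euclidean algorithm, so we would iterate over all $O(\log{|U|}\log{|V|})$ pairs of progressions. To hit the claimed $O(\log{|U|} + \log{|V|})$ bound rather than the product, I would instead do a single synchronized two-finger merge of the two sorted lists of progression endpoints, as is done in the two-finger scans elsewhere in the paper (e.g.\ Lemmas~\ref{lem:HLF-double-palin-side-fast} and~\ref{lem:HLF-small-candidate-rep-set}): walk both factorizations in increasing order of split point, and at each step test whether the ``current'' progression of $S_U$ and the ``current'' progression of $S_V$ share a common value inside the still-feasible window, advancing whichever finger's progression ends first.

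The main obstacle is the last step: arguing that the two-finger merge correctly finds a common split point in $O(\log{|U|}+\log{|V|})$ time, since the two sets are unions of progressions with possibly different and non-nested common differences, and a naive merge of the raw elements could be $\Omega(|U|)$ long. The resolution is that we never enumerate individual split points — each finger advances one whole progression (i.e.\ one factorization block) at a time, of which there are only logarithmically many, and within the overlap of two blocks the intersection of the two progressions with the feasible interval is decided in $O(1)$ by a gcd computation; the exponential-growth property of Lemma~\ref{lem:HLF-prefix-palin-factor} guarantees the block counts are $O(\log{|U|})$ and $O(\log{|V|})$. The remaining care is bookkeeping at the boundaries: the zero-length factors $A$ or $C$, the degenerate cases $|U| \ne |V|$ forcing $d$ to differ on the two sides (so the window is $[0,d_{\max}]$ intersected with both $\{d : |U|-d \ge 0\}$ and $\{d : |V|-d \ge 0\}$), and making sure the longest-common-prefix query is set up on the correct reversed/reflected copies so that a single $O(1)$ query yields $d_{\max}$; all of these are routine given the preprocessing assumed in the statement.
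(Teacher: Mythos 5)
There is a genuine gap at the very first step, and it undermines the rest of the argument. You claim that the set of valid lengths $d=|D|$ --- those for which the length-$d$ prefix of $U$ equals $\Refl{\Theta}{}$ applied to the length-$d$ suffix of $V$ --- is a prefix-closed interval $\{0,1,\dots,d_{\max}\}$ computable by one longest-common-prefix query. This is false. The condition is that the length-$d$ prefix of $\Refl{\Theta}{U}$ coincide with the length-$d$ suffix of $V$, and as $d$ varies the suffix of $V$ starts at a different position, so the letter-by-letter alignment between $U$ and $V$ shifts with $d$. Validity of one $d$ therefore implies nothing about smaller $d$: if $\Refl{\Theta}{U}$ begins $aba\cdots$ and $V$ ends $\cdots aba$, then $d=1$ and $d=3$ are valid but $d=2$ is not. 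The set of valid $d$ is the set of ``overlaps'' (lengths of words that are simultaneously a prefix of $\Refl{\Theta}{U}$ and a suffix of $V$), which in general is neither an interval nor small, and is not recoverable from a single longest-common-extension query, since such a query fixes one alignment and extends it. Consequently the reduction to ``$S_U \cap S_V \cap [0,d_{\max}] \neq \emptyset$'' does not capture the actual constraint linking $U$ and $V$, and the two-finger merge that follows (which is otherwise a reasonable way to intersect two unions of $O(\log)$ arithmetic progressions) is solving the wrong problem.

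The paper avoids this by anchoring differently: it iterates over the $O(\log|V|)$ blocks $X_i^{r_i}$ of the \emph{prefix} palindrome factorization of $V$ (candidates for where $C$ ends, hence where $D$ starts), and for each block issues one longest-common-prefix query of $X_i^{r_i}$ against $\Refl{\Theta}{U}$ anchored at the block's start. The periodicity of the block then determines, via a short case analysis on the residue $Q$ following the block, exactly which repetitions within the block can start a valid $D$ (the admissible starting repetitions form a suffix of the block, or a single repetition), after which the matching palindrome $A$ is located by an amortized scan of the suffix palindrome factorization of $U$. If you want to repair your argument, you need some mechanism of this kind --- one query per block with the alignment fixed by that block and periodicity doing the rest --- rather than a single global query over all $d$.
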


\begin{proof}
Solve by considering solutions such that $C$ ends with each of the $O(\log{|V|})$ repeated factors $X_i^{r_i}$ in the prefix palindrome factorization of $V$.
If $r_i \leq 2$, then check if either choice of $C$ ending at one of the factors leaves a factor $D \Suffix V$ such that $\ORefl{\Phi}(D)$ is a prefix of $U$ and the suffix of $U$ with length $|U|-|D|$ is a palindrome.
Both checks together take $O(\log{|U|})$ time by using a single longest common prefix query and a scan of the suffix palindrome factorization of $U$.

Otherwise $r_i \geq 3$.
Compute the largest $k$ such that the longest common prefix of $X_i^{r_i}$ and $\Refl{\Phi}{U}$ is at least $k|X_i|$ using a longest common prefix query. 
If $k = 0$, then no factorization $V = C D$ exists using a factor $D$ starting at a repetition of $X_i$; so assume that $k \geq 1$.
Let $Q$ be the suffix of $V$ following $X_i^{r_i}$.
By the maximality of $r_i$ that follows from the definition of a prefix palindrome factorization, either: (1)~$|Q| \geq |X_i|$ and $X_i \not \Prefix Q$ or (2)~$|Q| < |X_i|$.
Also, in both cases, no factorization $V = C D$ exists such that $D$ has a prefix $X_i^{k+1}$ or, equivalently, such that $D$ starts at the $j$th repetition of $X_i$, for any $j \leq r_i - k$. 

\textbf{Case 1: $\bm{|Q| \geq |X_i|}$ and $\bm{X_i \not \Prefix Q}$.} 
In this case, $D$ also cannot start at the $j$th repetition of $X_i$ for any $j > r_i - k + 1$ and so $D$ must start at the repetition $r_i - k + 1$ of $X_i^{r_i}$.
Thus the factors $C$, $D$, and $\Refl{\Phi}{D}$ are fixed.
Search for a suffix factor of $U$ with length $|U|-|D|$ by scanning the suffix palindrome factorization of $U$.

\textbf{Case 2: $\bm{|Q| < |X_i|}$.}
Check if the longest common prefix of $Q$ and $\Refl{\Phi}{U}$ is $Q$, i.e., if $\Refl{\Phi}{Q}$ is a prefix of $U$. 
If not, then $D$ cannot start the $j$th repetition for any $j > r_i-k+1$ since $\Refl{\Phi}{X_i^k}$ is a prefix of $U$.
Thus $D$ must start at repetition $r_i-k+1$; search for a factor $A$ to complete the factorizations in $O(\log{|U|})$ time as before. 

Otherwise $Q$ is a prefix of $X_i$ and thus $D$ can start at the $j$th repetition of $X_i$ for all $j \geq r_i-k+1$.
Check for the existence of a choice of $C$, $D$ such that $U = \Refl{\Phi}{D} A$ and $A$ a palindrome by scanning the suffix palindrome factorization of $U$ for $A$ with $|A| \in \{ |U| - ((r_i-j+1)|X_i| + |Q|) : j \geq r_i-k+1 \}$.
Since the lengths of the palindromes for each repeated factor also form a linear set, this takes $O(1)$ per repeated factor and $O(\log{|U|})$ total time.

\textbf{Running time.}
Each repeated factor except the last has either $r_i \leq 2$ or case~1 applies.
Moreover, each can be handled in $O(1)$ time plus a search in the suffix palindrome factorization of $U$ for a palindrome of length $|U|-|D|$.
As $i$ increases, the length of the suffix palindrome searched for decreases, so these can be handled together in $O(\log{|U|})$ total time.
Thus all repeated factors except (possibly) the last can be handled in time $O(\log{|V|} + \log{|U|})$, while the optional handling of case~2 takes $O(\log{|U|})$ time. 

% Proof sketch:
% Use blocks of prefix palindrome factorization of V.
% If block has 2 or less repeats, try both.
% So assume block has r \geq 3 repeats.
% Ask for longest common prefix of V[i..|V|] and \Refl{\Phi}{U} where i is start of block. 
% -Result is k blocks plus excess (either partial block or region past end).
% If k = 0, no solution, since can't even match through a repeat.
% If k \geq 1, solution must start at beginning of repeat r - k by maximality of r.
% Checking solution means computing D, \Refl{\Phi}{D}, then determining if there exists a choice for A.
%   Naively takes O(\log{|W|}) per block, can be improved by noticing |D| is strictly decreasing, 
%   so only 1 scan through suffix palindrome factorization of U is needed.
% So time spent is O(1) per block (to compute k, amortized checking for A) and O(\log{|W|}) total.
\end{proof}

\begin{lemma}
\label{lem:T2H-subroutine-2}
Let $B$, $U$ be words.
Let the prefix and suffix palindrome factorizations of $U$ be given, along with a data structure supporting $O(1)$-time longest common prefix queries for $B$ and subwords of $U$.
It can be decided in $O(\log{|U|})$ time if there exists a factorization $U = A B C$ with $A$, $C$ palindromes.
\end{lemma}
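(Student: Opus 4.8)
The plan is to parametrize a factorization $U = A B C$ by its prefix palindrome $A$. Once $|A|$ is fixed, the only possible occurrence of $B$ is at $U[|A|+1..|A|+|B|]$ and the suffix $C = U[|A|+|B|+1..|U|]$ is determined, so such a factorization exists if and only if there is a prefix palindrome $A \Prefix U$ with $U[|A|+1..|A|+|B|] \equiv B$ and $U[|A|+|B|+1..|U|]$ a palindrome. Testing one candidate $A$ costs a single longest-common-prefix query against $B$ plus a check of whether a certain length is a suffix-palindrome length; the obstacle is that $U$ may have $\Theta(|U|)$ prefix palindromes, so I would instead work with the group structure of Lemma~\ref{lem:HLF-prefix-palin-factor}: the prefix palindromes fall into $O(\log|U|)$ groups, the $j$-th being an arithmetic progression of lengths with common difference $|X_j|$, and (by Lemma~\ref{lem:HLF-in-between-palin}) the prefix of $U$ through the longest palindrome of group $j$ has period $|X_j|$.

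For a single group $j$ I would compress the (possibly many) candidate starting positions of $B$ to an $O(1)$-size description using periodicity. The elementary fact is: if $U[1..M]$ has period $p$ and $q_1 \equiv q_2 \pmod p$ with $q_2 + |B| - 1 \le M$, then $U[q_1..q_1+|B|-1] \equiv U[q_2..q_2+|B|-1]$; hence among the candidate positions of group $j$ that place $B$ entirely inside the periodic prefix, either all carry an occurrence of $B$ or none do, and one longest-common-prefix query decides which. The remaining candidate positions are the $O(\lceil|B|/|X_j|\rceil)$ ones where $B$ overhangs the right end of the periodic prefix; if two of these carry occurrences of $B$, the occurrences are at distance a multiple of $|X_j|$ less than $|B|$, so $B$ acquires a period of that size, and then Fine--Wilf (Lemma~\ref{lem:HLF-strong-periodicity}) combined with the period-$|X_j|$ prefix of $B$ forces $B$ itself to have period $|X_j|$; in that case the valid positions form exactly an initial segment of the progression whose length is read off by one longest-common-prefix query of $B$ against $U[\ell_j+1..|U|]$. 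So with $O(1)$ queries per group I obtain, for each of the $O(\log|U|)$ groups, a description of the lengths $|A|$ for which $B$ sits immediately after $A$ as a constant number of arithmetic progressions.

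It remains to impose $|A| + |B| + |C| = |U|$ together with ``$|C|$ is a suffix-palindrome length''. The suffix-palindrome lengths also form $O(\log|U|)$ arithmetic progressions (suffix analogue of Lemma~\ref{lem:HLF-prefix-palin-factor}); a naive pairwise intersection would cost $O(\log^2|U|)$, so instead I would exploit that across groups the lengths $|A|$ increase, hence the required $|C| = |U| - |B| - |A|$ decrease, and run a single coordinated scan of the sorted prefix progressions against the suffix-palindrome factorization, testing each overlapping pair for a common value in $O(1)$ time. Because both sides are unions of $O(\log|U|)$ pieces occupying essentially contiguous ranges, this scan makes $O(\log|U|)$ comparisons total. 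The degenerate cases $|A| = 0$ (is $U = BC$ with $C$ a suffix palindrome?), $|C| = 0$ (is $U = AB$ with $A$ a prefix palindrome?), and the ``$Q$-tail'' pieces of the two factorizations are handled directly in $O(\log|U|)$ time, and the algorithm answers affirmatively exactly when some surviving candidate gives a valid factorization.

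The main obstacle is the per-group compression of the second paragraph --- showing that periodicity collapses the up to $\Theta(|U|)$ candidate occurrences of $B$ within a group to an $O(1)$-size description computable with $O(1)$ longest-common-prefix queries, via the interior-versus-overhang and $B$-periodic-versus-aperiodic case split --- together with the bookkeeping of the third paragraph needed to keep the cross-check with the suffix-palindrome factorization at $O(\log|U|)$ rather than $O(\log^2|U|)$. Everything rests only on the definition of period and on Fine--Wilf, so no new machinery is required; the overall structure parallels the proof of Lemma~\ref{lem:T2H-subroutine-1}.
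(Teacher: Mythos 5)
Your proposal is correct and follows essentially the same route as the paper's proof: both iterate over the $O(\log|U|)$ groups $X_i^{r_i}$ of the prefix palindrome factorization, use periodicity/LCP arguments to collapse each group's candidate placements of $B$ to either a single position or an arithmetic progression (your interior-versus-overhang and $B$-periodic-versus-aperiodic split corresponds to the paper's case split on whether ${\rm lcp}(X_i^{r_i}, B) < |B|$ or $= |B|$), and then exploit the monotone decrease of the required $|C|$ across groups to scan the suffix palindrome factorization once in $O(\log|U|)$ total time. No substantive differences or gaps to report.
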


\begin{proof}
Solve by considering solutions such that~$B$ starts with each of the $O(\log{|U|})$ repeated factors $X_i^{r_i}$ in the prefix palindrome factorization of $V$.
If $r_i \leq 2$, then check if either choice of $B$ starting at one of the repetitions of $X_i$ leaves a remainder of $U$, called $C$, that is a palindrome.
Both checks together take $O(\log{|U|})$ time by using a single longest common prefix query and a scan of the suffix palindrome factorization of $U$.

Otherwise $r_i \geq 3$.
Compute ${\rm lcp}(X_i^{r_i}, B)$, the longest common prefix of $X_i^{r_i}$ and $B$.
Let $k$ be the largest integer such that ${\rm lcp}(X_i^{r_i}, B) \geq k|X_i|$.
Either ${\rm lcp}(X_i^{r_i}, B) < |B|$ is less than or equal to $|B|$.

\textbf{Case 1: lcp$\bm{(X_i^{r_i}, B) < |B|}$.}
No solution $B$ can start before repetition $r_i-k+1$, since $B$ is not a prefix of the remaining word.
Also, no solution $B$ can start after repetition $r_i-k+1$ by the maximality of $r_i$.
So check for a solution with $B$ starting at repetition $r_i-k+1$ in $O(\log{|U|})$ time by searching in $O(\log{|U|})$ time for a suffix palindrome $C$ of length $|U|-|AB|$ in the suffix palindrome factorization of $U$. 

\textbf{Case 2: lcp$\bm{(X_i^{r_i}, B) = |B|}$.}
Observe $B$ is a prefix of the suffixes of $U$ starting at repetitions $1, 2, \dots, r_i-k+1$.
Moreover, $B$ is not a prefix of any suffix of $U$ starting at a repetition after $r_i-k+1$ by the maximality of $r_i$.
The lengths of the suffixes following those choices of $B$ starting at repetitions $1, 2, \dots r_i-k+1$ form a linear set.
Check for a solution $C$ with length in this linear set.
This check is possible in $O(\log{|U|})$ time, since each repeated factor in the suffix palindrome factorization can be checked in $O(1)$ time by solving a system of two linear equations.
 
\textbf{Amortized $O(1)$-time search for $\bm{C}$.}
In both of the above cases, the length of $C$ completing a solution factorization is equal to $|U|-|AB|$, where $|B|$ is fixed and $|A|$ is a prefix ending at a repetition of $X_i$.
So the lengths of $C$ strictly decrease as $i$ increases. 
Maintain a pointer into the suffix palindrome factorization of $U$ for searching for $C$ across repeated factors of the prefix palindrome factorization of $U$.
This pointer visits each factor once, and thus the searches take $O(\log{|U|})$ total time and $O(1)$ amortized time each.
So in total $O(1)$ amortized time is spent per $i$, and thus $O(\log{|U|})$ time total.
 
% Note: simple approach of treating lengths as linear sets and looking for solutions isn't quite trivial: what if many solutions?

% Proof sketch:
% Use blocks of prefix palindrome factorization of U.
% If block has 2 or less repeats, try both (check subword of length |B| matches B, check that remainder is a palindrome).
% So assume block has r \geq 3 repeats.
% Ask for longest common prefix of U[i..|V|] and B, where i is start of block. 
% -Result is either = |B| or < |B|, and equals k blocks plus excess (either partial block or region past end). 
% Case 1: result is < |B|.
% -If k = 0, no solution since can't even match through a repeat.
% -If k >= 1, solution can't start to left of block r_i-k+1 (mismatch with B) or to right of r_i-k+1 (maximality of r_i). Try r_i-k+1. 
% Case 2: result is = |B|.
% Solutions start at blocks 0..r_i-k+1 (stop at r_i-k+1 by maximality or r_i or end of U).
% Remaining suffix lengths form a linear set, scan suffix palindrome factorization to search for solution.
\end{proof}

\begin{lemma}
\label{lem:T2H-factors-admissible}
Let $P$ be a polyomino and $\Bou{P} = A B C D \Refl{\Theta}{B} \Refl{\Phi}{D}$ with $A$, $C$ palindromes.
Then gapped reflect squares $B, \Refl{\Theta}{B}$ and $D, \Refl{\Phi}{D}$, and palindromes $A$, $C$ are admissible.
\end{lemma}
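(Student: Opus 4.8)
The plan is to prove each of the four claims by contradiction, reusing the template of Lemmas~\ref{lem:T1R-factors-admissible},~\ref{lem:T2R-factors-admissible-1}, and~\ref{lem:T1H-factors-admissible}. Two facts drive every case. First, $\Bou{P}$ is simple and hence contains no subword $x\Comp{x}$. Second, because $\D{\Theta}-\D{\Phi}=\pm\D{90}$, the composition $\ORefl{\Theta}\circ\ORefl{\Phi}$ is rotation by $\pm\D{180}$, so it sends each letter $x$ to $\Comp{x}$; since a reflection axis is defined only modulo $\D{180}$ this is equivalent to $\Refl{\Theta}{x}=\Comp{\Refl{\Phi}{x}}$ for every letter $x$, and moreover $\Refl{\Theta}{\Comp{x}}=\Comp{\Refl{\Theta}{x}}$. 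Together with the palindrome identities $A[1]=A[-1]$ and $C[1]=C[-1]$, these turn any hypothesized failure of admissibility into an occurrence of a letter adjacent to its complement along $\Bou{P}$, the desired contradiction.

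To illustrate, take the pair $B,\Refl{\Theta}{B}$ and write $\Bou{P}=B\,(CD)\,\Refl{\Theta}{B}\,(\Refl{\Phi}{D}A)$; admissibility is the statement that this pair cannot be extended simultaneously on the left or on the right, i.e.\ $(CD)[1]\neq\Refl{\Theta}{(\Refl{\Phi}{D}A)[1]}$ and $(CD)[-1]\neq\Refl{\Theta}{(\Refl{\Phi}{D}A)[-1]}$. In the generic situation $|A|,|C|,|D|>0$, the first equality would read $C[1]=\Refl{\Theta}{\Refl{\Phi}{D[1]}}=\Comp{D[1]}$, so (as $C$ is a palindrome) $C[-1]=C[1]=\Comp{D[1]}$ and the $C$--$D$ junction of $\Bou{P}$ is a forbidden pair. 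The second equality would read $D[-1]=\Refl{\Theta}{A[-1]}=\Refl{\Theta}{A[1]}$, whence the $\Refl{\Phi}{D}$--$A$ junction equals $\Refl{\Phi}{\Refl{\Theta}{A[1]}}\cdot A[1]=\Comp{A[1]}\cdot A[1]$, again forbidden. The pair $D,\Refl{\Phi}{D}$ is handled identically after a cyclic shift, the forbidden pair landing at the $A$--$B$ and $B$--$C$ junctions; and the palindrome claims are the same one-line computation, a violation for $A$ forcing the $D$--$\Refl{\Theta}{B}$ junction to be a complement pair and a violation for $C$ forcing the $\Refl{\Theta}{B}$--$\Refl{\Phi}{D}$ junction to be one. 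Every such trace uses the angle hypothesis precisely to collapse $\ORefl{\Theta}\circ\ORefl{\Phi}$ (or $\ORefl{\Phi}\circ\ORefl{\Theta}$) to the complement.

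When one or more of $A,B,C,D$ is empty a few of these traces degenerate, and I would finish those cases with the parity argument of Daurat et al.~\cite{Daurat-2005} already used in Lemmas~\ref{lem:HLF-half-turn-factors-admissible} and~\ref{lem:T2R-factors-admissible-1}. The consecutive-letter pairs interior to $A$, interior to $C$, and interior to each reflect-square pair $\{B,\Refl{\Theta}{B}\}$ and $\{D,\Refl{\Phi}{D}\}$ are split evenly between $\mathcal{L}=\{\Up\Left,\Left\Down,\Down\Right,\Right\Up\}$ and $\mathcal{R}=\{\Left\Up,\Up\Right,\Right\Down,\Down\Left\}$, so Proposition~6 of~\cite{Daurat-2005} forces the surviving junctions (at most six of them) to carry a net excess of exactly four $\mathcal{R}$-pairs. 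This is very restrictive: a pattern leaving too few junctions (for instance $|B|=|D|=0$, leaving only two) is impossible, and in each remaining pattern either a junction and its reflection are forced to lie in $\mathcal{R}$ simultaneously---contradicting that reflection interchanges $\mathcal{R}$ and $\mathcal{L}$---or enough of the neighbouring factors survive that the complement-pair trace of the previous paragraph applies verbatim.

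I expect the only genuine obstacle to be the bookkeeping of the empty-factor patterns: for each combination of which of $A,B,C,D$ vanish one must check whether the complement-pair argument still runs or the $\mathcal{L}/\mathcal{R}$ count is needed, and track in each degenerate pattern which letters the identities $\Refl{\Theta}{x}=\Comp{\Refl{\Phi}{x}}$ and $A[1]=A[-1]$ identify. No single step looks difficult; the care is in making the case split exhaustive.
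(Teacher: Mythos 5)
Your proof is correct and is essentially the paper's own argument: for each factor (pair), a hypothesized failure of admissibility is traced through the palindrome identities and the collapse of $\ORefl{\Theta}\circ\ORefl{\Phi}$ to the complement (via $\D{\Theta}-\D{\Phi}=\pm\D{90}$) until a subword $x\Comp{x}$ appears at one of the junctions, contradicting simplicity, and the junctions you identify ($C$--$D$ and $\Refl{\Phi}{D}$--$A$ for $B,\Refl{\Theta}{B}$; $A$--$B$ and $B$--$C$ for $D,\Refl{\Phi}{D}$; $D$--$\Refl{\Theta}{B}$ and $\Refl{\Theta}{B}$--$\Refl{\Phi}{D}$ for $A$ and $C$) are exactly those used in the paper. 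The paper does not carry out your third paragraph's empty-factor/parity bookkeeping at all (it simply runs the complement traces, implicitly with the relevant letters present), so that extra machinery is not needed for agreement with the published proof, though it does no harm.
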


\begin{proof}
\textbf{$\bm{A}$, $\bm{C}$ are admissible.}
Since $A$ and $C$ are symmetric, proving the claim for $A$ suffices.
Suppose, for the sake of contradiction and without loss of generality, that $\Refl{\Phi}{D}[-1] = B[1]$.
Let $x = \Refl{\Theta}{B}[1]$.
Then $D[-1] = \Refl{\Phi}{\Refl{\Theta}{x}} = \Comp{x}$.
So $\Comp{x} x$ is a subword of $W$, a contradiction.

\textbf{$\bm{B, \Refl{\Theta}{B}}$ are admissible.}
Suppose, for the sake of contradiction and without loss of generality, that $\Refl{\Theta}{C[1]} = \Refl{\Phi}{D}[1]$.
Let $y = C[1]$. 
Then $\Refl{\Phi}{D}[1] = \Refl{\Theta}{y}$ and thus $D[1] = \Refl{\Phi}{\Refl{\Theta}{y}} = \Comp{y}$.
Also, $y = C[1] = C[-1]$.
So $y \Comp{y}$ is a subword of $W$, a contradiction. 

\textbf{$\bm{D, \Refl{\Phi}{D}}$ are admissible.}
Suppose, for the sake of contradiction and without loss of generality, that $A[1] = \Refl{\Phi}{\Refl{\Theta}{B}[1]}$.
So $A[1] = \Comp{B[1]}$.
Then if $z = A[1] = A[-1]$, $z \Comp{z}$ is a subword of $W$, a contradiction.
\end{proof}

\begin{theorem}
Let $P$ be a polyomino with $|\Bou{P}| = n$.
It can be decided in $O(n\log{n})$ time if $\Bou{P}$ has a type-2 half-turn-reflection factorization.
\end{theorem}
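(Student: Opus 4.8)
The plan is a pigeonhole argument over which of the six factors of a putative factorization $W = A B C D \Refl{\Theta}{B} \Refl{\Phi}{D}$ is long, reducing each resulting case to $O(n)$ applications of Lemmas~\ref{lem:T2H-subroutine-1} and~\ref{lem:T2H-subroutine-2}, each costing $O(\log n)$, after $O(n\log n)$ preprocessing. Write $W = \Bou{P}$, $n = |W|$. Since $|\Refl{\Theta}{B}| = |B|$ and $|\Refl{\Phi}{D}| = |D|$, we have $|A| + 2|B| + |C| + 2|D| = n$, so either $|B| \geq n/6$, or $|D| \geq n/6$, or both are smaller and then $|A| + |C| > n/3$, forcing $\max(|A|,|C|) \geq n/6$; the case $|C| \geq n/6$ is symmetric to $|A| \geq n/6$ via the transformation reversing the boundary traversal, which exchanges the palindromes $A$ and $C$, turns the pair $B,D$ into generic reflected-and-reversed words, and preserves $\D{\Theta}-\D{\Phi}=\pm\D{90}$. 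The angle constraint leaves only the pairs $(\Theta,\Phi)\in\{(\D{90},\D{0}),(\D{0},\D{90}),(\D{45},\D{-45}),(\D{-45},\D{45})\}$, so $\Theta$ and $\Phi$ determine one another and range over a constant set. As preprocessing I compute the prefix and suffix palindrome factorizations of every shift of $W$ in $O(n\log n)$ time (Lemma~\ref{lem:HLF-all-prefix-facts-fast}) and, via Lemma~\ref{lem:HLF-longest-common-extension} applied to $W$, $\Rev{W}$ and the constantly many copies $\Refl{\Theta}{W}$, a structure answering longest-common-extension queries between subwords of any two of these words in $O(1)$ time; truncating the stored factorizations then yields the prefix/suffix palindrome factorization of any arc of $W$ in $O(\log n)$ time.

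\textbf{Cases $|B| \geq n/6$ and $|D| \geq n/6$.} By Lemma~\ref{lem:T2H-factors-admissible} the pairs $B,\Refl{\Theta}{B}$ and $D,\Refl{\Phi}{D}$ are admissible gapped reflect squares, so for each admissible angle Lemma~\ref{lem:T2R-reflection-big-gapped} enumerates the $O(n)$ candidates for whichever pair is long in $O(n)$ time. If $|B|\geq n/6$, fixing the pair $B,\Refl{\Theta}{B}$ fixes the two complementary arcs of $W$: one arc must be $CD$ and the other $\Refl{\Phi}{D}A$, with $A,C$ palindromes and the two copies of $D$ of equal length, which is exactly what Lemma~\ref{lem:T2H-subroutine-1} decides (with reflection angle $\Phi$) in $O(\log n)$ time. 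If $|D|\geq n/6$, fixing the pair $D,\Refl{\Phi}{D}$ makes the shorter arc between the two occurrences equal to $\Refl{\Theta}{B}$, which determines $B$, and the longer arc must be $ABC$ with $A,C$ palindromes and this $B$ in the middle; Lemma~\ref{lem:T2H-subroutine-2} decides this in $O(\log n)$ time. Each labelling of each candidate is handled in $O(\log n)$ time, so each case costs $O(n\log n)$.

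\textbf{Case $|A| \geq n/6$ (the crux).} Here $A$ is an admissible palindrome (Lemma~\ref{lem:T2H-factors-admissible}) of length at least $n/6$, so by the analogue of Lemma~\ref{lem:QRT-long-palindrome-ends-constant} (cf.\ Lemma~\ref{lem:T2R-constant-long-admissible-ends}) with threshold $n/6$ there is an $O(1)$-sized, $O(n)$-time-computable set of factors of which $A$ must be a prefix or a suffix factor. Fixing such an anchor pins either the first letter of $A$ (hence the last letter of $\Refl{\Phi}{D}$) or its last letter (hence the first letter of $B$), and in either case I proceed as in the long-palindrome case of Theorem~\ref{thm:T2R-algorithm}: viewing the resulting word linearly, scan the position at which $B$ ends, use the longest-common-extension structure to read off the unique maximal admissible candidate $B,\Refl{\Theta}{B}$ ending there, recover the induced palindrome $A$ and check it against the anchor using the precomputed factorizations, and verify the remaining arc factors as $C,D,\Refl{\Phi}{D}$ with one further query together with Lemma~\ref{lem:T2H-subroutine-1} or~\ref{lem:T2H-subroutine-2}. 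The main obstacle is that, unlike in the two previous cases, fixing the anchor still leaves two free split points rather than one, so a naive scan does not pin down the factorization; I expect to overcome this by batching over the $O(\log n)$ arithmetic progressions of palindrome-prefix lengths supplied by Lemma~\ref{lem:HLF-prefix-palin-factor}, so that each of the $O(1)$ anchors is processed in $O(n\log n)$ time. Combining the three cases gives the claimed $O(n\log n)$ bound.
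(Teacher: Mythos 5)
Your pigeonhole setup, the preprocessing, and the two cases $|B|\geq n/6$ and $|D|\geq n/6$ match the paper's proof essentially verbatim. The gap is exactly where you flag it: the case $|A|\geq n/6$. Two concrete problems. First, the step ``scan the position at which $B$ ends and read off the unique maximal admissible candidate $B,\Refl{\Theta}{B}$ ending there'' cannot work as stated: a gapped reflect square is not determined by a single position in one of its members; you need an anchoring position in \emph{each} member before a longest-common-extension query can recover the maximal pair. (The analogous scan in Theorem~\ref{thm:T2R-algorithm} succeeds only because the mirror pair $A,\Back{A}$ there forces the two members of the gapped pair to sit at a fixed offset of $|W|/2$, so one index suffices; here $A$ is merely a palindrome and no such offset exists.) Second, your proposed repair --- batching over the $O(\log n)$ arithmetic progressions of prefix-palindrome lengths of the anchor --- is left as a hope, and it is not clear it can be made to work: varying $|A|$ along such a progression shifts the start of $B$ and the end of $\Refl{\Phi}{D}$ simultaneously, and the corresponding admissible gapped pairs do not vary in any controlled, batchable way with $|A|$.

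The paper closes this case differently and more simply, and without ever iterating over the prefix palindromes of the anchor. The anchor pins one letter position: in the prefix sub-case every candidate $A$ begins at the anchor's first letter $W[f]$, hence $\Refl{\Phi}{D}$ ends at $W[f-1]$; in the suffix sub-case the first letter of $B$ is pinned. One then enumerates the $O(n)$ choices for the \emph{matching} endpoint of the partner factor of the gapped pair --- the last letter of $D$, respectively the first letter of $\Refl{\Theta}{B}$ --- and, since the pair in any factorization is admissible (Lemma~\ref{lem:T2H-factors-admissible}), each choice determines the whole pair via one $O(1)$-time longest-common-extension query. Each resulting pair is then handed to Lemma~\ref{lem:T2H-subroutine-2} (respectively Lemma~\ref{lem:T2H-subroutine-1}) exactly as in the $|D|\geq n/6$ (respectively $|B|\geq n/6$) case; those subroutines already absorb the two remaining palindrome split points, including the verification of $A$ itself, in $O(\log n)$ time per pair. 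This yields $O(n\log n)$ per anchor and, with $O(1)$ anchors, the claimed bound.
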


\begin{proof}
Recall that a type-2 half-turn-reflection factorization of $W$ has the form $W = A B C D \Refl{\Theta}{B} \Refl{\Phi}{D}$ with $A$, $C$ palindromes and $\D{\Theta} - \D{\Phi} = \pm \D{90}$.
Without loss of generality, an element from $\{A, B, C, D\}$ has length at least $|W|/6$.
The case of $|C| \geq |W|/6$ is symmetric to that of $|A| \geq |W|/6$; the remaining three cases are each handled in $O(|W|\log{|W|})$ time.

All three cases use $O(1)$-time queries of the following form: ``For $Y, Z \in W, \Refl{\Theta}{W}, \Refl{\Phi}{W}, \Rev{W}$, what is the longest common prefix starting at $Y[i]$ and $Z[j]$?''.
Such queries are possible after $O(|W|)$ preprocessing by Lemma~\ref{lem:HLF-longest-common-extension}.
The three cases also use prefix and suffix palindrome factorizations of subwords of $W$.
These can be precomputed in $O(|W|\log{|W|})$ time by Lemma~\ref{lem:HLF-all-prefix-facts-fast}. 
The case of $|A| \geq |W|/6$ is reducible to the subroutines used to solve the other two cases.
For this reason, the other cases are considered first.

\textbf{$\bm{|B| \geq |W|/6}$.}
The pair of factors $B, \Refl{\Theta}{B}$ are admissible by Lemma~\ref{lem:T2H-factors-admissible}.
Enumerate a $O(|W|)$-sized superset of all pairs $B, \Refl{\Theta}{B}$ in $O(|W|)$ time using Lemma~\ref{lem:T2R-reflection-big-gapped}.
Determining if a given pair $B, \Refl{\Theta}{B}$ can be completed into a type-2 half-turn-reflection factorization is equivalent to solving an instance of the following problem: ``Given words $U, V$ and their prefix and suffix palindrome factorizations, do there exist factorizations $U = \Refl{\Phi}{D} A$, $V = C D$ with $A$, $C$ palindromes?''.
Each such instance is solved in $O(\log{|U|} + \log{|V|})$ time by Lemma~\ref{lem:T2H-subroutine-1}.
Thus the entire case is handled in $O(|W|(\log{|U|} + \log{|V|})) = O(|W|\log{|W|})$ time.

\textbf{$\bm{|D| \geq |W|/6}$.}
As in the case of $|B| \geq |W|/6$, enumerate a $O(|W|)$-sized superset of all $D, \Refl{\Phi}{D}$ in $O(|W|)$ time.
Handle each pair $D, \Refl{\Phi}{D}$ individually as an instance of the following problem: ``Given words $U$, $B$ and the prefix and suffix palindrome factorizations of $U$, does there exist a factorization $U = A B C$ with $A$, $C$ palindromes?''.
Each such instance is solved in $O(\log{|U|})$ time by Lemma~\ref{lem:T2H-subroutine-2}.
Thus the entire case is handled in $O(|W|\log{|U|}) = O(|W|\log{|W|})$ time.

\textbf{$\bm{|A| \geq |W|/6}$.}
First, Use Lemma~\ref{lem:QRT-long-palindrome-ends-constant} to compute $\mathscr{F}$, a $O(1)$-sized set of factors of $W$ such that every factor $A$ is an affix factor of element of $\mathscr{F}$.

Next, process the prefix palindromes of each $F \in \mathscr{F}$.
The first letter $W[f]$ of each $F$ defines a last letter $W[f-1]$ for $\Refl{\Phi}{D}$.
Combine this last letter of $\Refl{\Phi}{D}$ with the $O(|W|)$ choices for the last letter of $D$ to obtain $O(|W|)$ pairs $D, \Refl{\Phi}{D}$.
Compute each pair $D, \Refl{\Phi}{D}$ in $O(1)$ time using the previously computed suffix tree data structure.
Now handle each pair in $O(\log{|W|})$ time exactly as done for the case of $|D| \geq |W|/6$.
The $O(|W|)$ pairs are handled in $O(|W|\log{|W|})$ time total.

Finally, process the suffix palindromes of each $F \in \mathscr{F}$.
Each $F[-1] = W[f]$ defines the first letter $W[f+1]$ of $B$.
Combine this first letter of $B$ with the $O(|W|)$ choices for the first letter of $\Refl{\Theta}{B}$ to obtain $O(|W|)$ pairs $B, \Refl{\Theta}{B}$.
Compute each pair $B, \Refl{\Theta}{B}$ in $O(1)$ time and handle them in $O(\log{|W|})$ total time as done in the case of $|B| \geq |W|/6$.
\end{proof}

\section{Conclusion}

This work demonstrates that not just polynomial, but quasilinear-time algorithms exist for deciding tiling properties of a polyomino.
It remains to be seen if a linear-time algorithm exists, or whether a super-linear lower bound for one of the factorization forms exists.
The slowest algorithm is for half-turn factorizations, so it seems natural to attack this special case first. 

\begin{openproblem}
Can it be decided in $o(n\log^2{n})$-time if a polyomino $P$ with $|\Bou{P}| = n$ has a half-turn factorization?
\end{openproblem}

\begin{openproblem}
Can it be decided in $O(n)$-time if a polyomino $P$ with $|\Bou{P}| = n$ has an isohedral tiling of the plane? 
\end{openproblem}

For monohedral tilings containing only translations of the prototile, a polyomino has such a tiling only if it has one that is also isohedral~\cite{Beauquier-1991,Wijshoff-1984}.
Does this remain true for tilings using other sets of transformations of the prototile?
Modifying the anisohedral tile of Heesch~\cite{Heesch-1935} (see~\cite{Grunbaum-1980}) proves that the answer is ``no'' for tilings with reflected tiles, while an example of Rhoads~\cite{Rhoads-2005} proves that the answer is ``no'' for tilings with $\D{90}$ rotations of tiles.
This leaves one possibility open:

\begin{openproblem}
Does there exist a polyomino $P$ that has a tiling containing only translations and $\D{180}$ rotations of $P$ and every such tiling is anisohedral? 
\end{openproblem} 

As mentioned in Section~\ref{sec:overview}, there are isohedral tiling types (characterized by boundary factorizations) that cannot be realized by polyominoes due to angle restrictions.
Moreover, the boundary factorization forms here also apply to general polygons, under appropriate definitions of ``boundary word''.
Extending the algorithms presented here to polygons, along with developing algorithms for the remaining boundary factorizations is a natural goal.
However, significant challenge remains in efficiently converting a polygon's boundary into a word that can be treated with the approach used here. 

\begin{openproblem}
Can it be decided in $O(n\log^2{n})$ time if a polygon with $n$ vertices has an isohedral tiling of the plane? 
\end{openproblem}

Observe that pairs of tiles in a tiling that can be mapped to each other via a symmetry of the tiling induces a partition of the tiles.
Define a tiling to be $k$-isohedral if the partition has $k$ parts, e.g., an isohedral tiling is 1-isohedral.
Thus $k$-isohedral tilings are a natural generalization of isohedral tilings that allow increasing complexity; specifically, they cannot be characterized by a single boundary factorization.
A natural generalization of the problem considered here is as follows:

% Idea:
% Use Delauney-Dress symbols plus structural results.
% Average number of surrounding elements is < 6 and thus constant and can be checked in polynomial-time.
% Also, at most 12k symbols encoding a fundamental region (not exponential in k) 
%     (Theorem 2.4 of Delgado, Huson, Zamorzaeva "The classification of 2-isohedral tilings of the plane", Geometriae Dedicata, 1992)
% If one could check a given dual graph + orientations of each copy in time p(n)*f(k) (p a polynomial function) then done.

% Given a sequence of surrounding orientations, how many are possible? If polynomial in n then done.

\begin{openproblem}
Can it be decided efficiently if a polyomino has a $k$-isohedral tiling?
\end{openproblem}

An approach described by Joseph Myers~\cite{Myers-2015} achieves a running time of approximately $n^{O(k^2)}$, though a precise analysis of the running time has not been performed.
A fixed-parameter tractable algorithm also may be possible.
On the other hand, a proof of \ccNP-hardness is unlikely, since it implies, for each $c \in \mathbb{N}$, the existence of prototiles whose only tilings are $k$-isohedral for $k \geq c$.
Such tiles are only known to exist for $c \leq 10$~\cite{Myers-2015}.

\section*{Acknowledgements}

The authors wish to thank anonymous reviewers for comments that improved the correctness of the paper.

\bibliographystyle{plain}
\bibliography{isohedral}

\end{document}